\numberwithin{equation}{section} 
\def \Q{\mathbb{Q}}
\def \Z{\mathbb{Z}}
\def \R{\mathbb{R}}
\def \C{\mathbb{C}}
\def \N{\mathbb{N}}
\renewcommand{\hom}{\text{Hom}}
\DeclareMathOperator{\Tor}{Tor}
\newtheorem{proposition}{Proposition}
\newtheorem{definition}{Definition}
\newtheorem{lemma}{Lemma}
\def \coker{ \, \text{coker }}
\def \lk{\ell k}
\DeclarePairedDelimiter\avg{\langle}{\rangle}
\def \tilde{\widetilde}
\def \hat{\widehat}
\title{Summing over homology groups of 3-manifolds}
\author[1,2]{Thomas Nicosanti}
\author[3]{Pavel Putrov}
\affil[1]{SISSA, Via Bonomea 265, Trieste 34136, Italy}
\affil[2]{INFN, Sezione di Trieste, Via Valerio 2, Trieste 34127, Italy}
\affil[3]{ICTP, Strada Costiera 11, Trieste 34151, Italy}
\date{}
\begin{document}

\maketitle
\begin{abstract}
We consider a toy model of a 3-dimensional topological quantum gravity. In this model, a contribution of a given 3-manifold is given by the partition function of an abelian Topological Quantum Field Theory (TQFT), with a topological boundary condition at the boundary. Using the fact that the TQFT partition function depends only on the first homology group of the 3-manifold with some additional structure, the sum over all 3-manifolds with fixed boundary can be rewritten as a sum over finitely generated abelian groups (and the extra structure). We present bounds on the universal weights in the sum, that is, the measure on the set of isomorphism classes of finitely generated abelian groups (with the extra structure) sufficient for the sum to be convergent. Moreover, with further assumptions on the measure, we argue the existence of a distribution of 2d TQFTs, such that the sum is equal to the ensemble average of their partition functions evaluated on the boundary. For a certain kind of measure, the sum over the abelian groups can be factorized into sums over abelian $p$-groups, and the analysis can be performed independently for each prime $p$. 

\end{abstract}

\tableofcontents

\section{Introduction and summary}

One can consider a hypothetical path integral of the theory of quantum gravity that mimics the path integral of a quantum field theory (QFT).  For a fixed spacetime manifold, it consists of integration over all metrics. Then one can consider summation over all possible diffeomorphism classes of the spacetime manifolds, often referred to as the ``sum over topologies''. It has a physical interpretation in terms of the creation and annihilation of ``baby universes'' \cite{Coleman:1988cy,Banks:1988je,Giddings:1988wv,Fischler:1989ka}.
The sum over topologies, in a sense, is analogous to the sum over isomorphism classes of principal bundles in gauge theories. Both the integration and the summation part are in general ill-defined, especially in dimensions $d>2$. For the integration part, apart from the problem --- already present in QFTs --- that the integration measure is mathematically not well defined, it is already not known what exactly the space of metrics that are integrated over should be. In particular, it is not clear whether one should allow singular metrics of a certain kind. For the summation part, even though the set of diffeomorphism classes of compact manifolds is countable, their explicit classification is a hard mathematical problem. Moreover, the sum is, in general, divergent and therefore requires some regularization.

The summation over different topologies can be considered separately from the integration over metrics. Namely, in $d$ dimensions, for a fixed topology $M$, one can consider the state $|M,\phi\rangle_\mathcal{T}\in \mathcal{H}_\mathcal{T}(\Sigma)$ given by a $d$-dimensional Topological Quantum Field Theory (TQFT) $\mathcal{T}$ in its Hilbert space associated to a $(d-1)$-manifold $\Sigma$. Here we explicitly indicate the choice of the diffeomorphism $\phi:\Sigma\xrightarrow{\sim} \partial M \subset M$ as a part of the data on which the state depends. That is, the pair $(M,\phi)$ defines a bordism from $\emptyset$ to $\Sigma$, which is sometimes also referred to as a bordered manifold with boundary $\Sigma$. If one understands $\mathcal{T}$ as a functor $\mathbf{Bord}_{d,d-1}\rightarrow \mathbf{Vect}_\C$ from the bordism category to the category of complex vector spaces, we have $|M,\phi\rangle_\mathcal{T}\equiv\mathcal{T}(M,\phi)\in \hom(\C,\mathcal{T}(\Sigma))\cong \mathcal{T}(\Sigma)\equiv \mathcal{H}_\mathcal{T}(\Sigma)$. One can then define the $d$-dimensional topological quantum gravity state corresponding to the transition from nothing to $\Sigma$ as a sum:
\begin{equation}
    |\Sigma;\mu\rangle_\mathcal{T}\coloneqq\sum_{[M,\phi:\,\Sigma\xrightarrow{\sim} \partial M]} \mu([M,\phi])\,|M,\phi\rangle_\mathcal{T}\qquad \in \mathcal{H}_\mathcal{T}(\Sigma),
    \label{top-gravity-sum-over-topologies}
\end{equation}
 with certain weights $\mu([M,\phi])\in \C$. Assuming $\mu([M,\phi])\geq 0$, the collection of all weights can be understood as a measure $\mu$ on the set of equivalence classes $[M,\phi]$.   The sum is performed over the diffeomorphism classes of bordisms, namely $(M,\phi)\sim (M',\phi')$ if and only if there is a diffeomorphism $\Phi:M\rightarrow M'$ such that $\Phi\circ\phi=\phi'$. In writing such a sum we use the fact that, by definition, the value of the TQFT is the same for equivalent bordisms. We are assuming that the weights are universal, in the sense that they do not depend on the choice of the TQFT $\mathcal{T}$. At the moment we would like to be very generic and we make no additional assumptions; in particular, so far there is no assumption about the independence of the weights on the choice of the diffeomorphism $\phi:\Sigma\xrightarrow{\sim} \partial M$ for a fixed diffeomorphism class $[M]$ of $M$. The implementation of such an assumption will be considered later in the paper. Note that just naively requiring $\mu([M,\phi])$ to be independent of $\phi$  may render the sum divergent, as in general there can be infinitely many pairs $[M,\phi]$ for a given $[M]$, so some regularization is required.

The summation over topologies is in tension with the holographic principle. In the most basic form it tells that when a local boundary condition $\mathfrak{B}$ in the theory $\mathcal{T}$ is imposed on $\Sigma$, the partition function of a $d$-dimensional quantum gravity should be equal to a partition function of a $(d-1)$-dimensional theory $\mathcal{B}$ supported on the boundary: $\langle \mathfrak{B}|\Sigma\rangle_\mathcal{T} =Z_{\mathcal{B}}(\Sigma)$, where $\langle \mathfrak{B}|$ is the corresponding boundary state in  $\mathcal{H}_\mathcal{T}(-\Sigma)\cong \mathcal{H}_\mathcal{T}(\Sigma)^*$. Here and throughout the paper we use the minus sign to denote the orientation change. When the boundary consists of disconnected components, say $\Sigma=\Sigma_1\sqcup\Sigma_2$, one expects the factorization $Z_\mathcal{B}(\Sigma)=Z_\mathcal{B}(\Sigma_1)Z_\mathcal{B}(\Sigma_2)$, while a priori $|\Sigma;\mu\rangle_\mathcal{T} \neq|\Sigma_1;\mu\rangle_\mathcal{T}\otimes |\Sigma_2;\mu\rangle_\mathcal{T}$ due to the presence of ``wormhole'' topologies, that is connected $d$-manifolds $M$ with boundary $\partial M\cong \Sigma=\Sigma_1\sqcup \Sigma_2$. One proposed solution to this issue is that one has to consider an \textit{ensamble} of $(d-1)$-dimensional boundary theories so that the gravity partition function equals to the average of their partition functions with respect to a certain probability distribution of the boundary theories $\mathcal{B}$:
\begin{equation}
    \langle \mathfrak{B}|\Sigma;\mu\rangle_\mathcal{T}=\langle Z_\mathcal{B}(\Sigma)\rangle.
\end{equation}
In two dimensions, such a relation between a sum of 2d TQFT partition functions and an ensemble average of 1d TQFTs was worked out explicitly in \cite{Marolf:2020xie,Balasubramanian:2020jhl,Gardiner:2020vjp,deMelloKoch:2021lqp} (see also \cite{Banerjee:2022pmw} for a general treatment of the sum of partition functions of 2d TQFTs). In three dimensions, the relation between a sum of 3d TQFT partition functions over a certain class of 3-manifolds (with a conformal boundary condition) and an ensemble of 2d Conformal Field Theories (CFT) was considered in \cite{Maloney:2020nni,Cotler:2020ugk,Maxfield:2020ale,Barbar:2023ncl,Aharony:2023zit}. The evidence for such a relation for the sum over all 3-manifolds was recently provided in \cite{Dymarsky:2024frx,Jafferis:2024jkb,Dymarsky:2025agh}.

In this paper, we consider a simplified version of such a holographic relation for 3d topological quantum gravity. For the 3d ``bulk'' theory, we will have an abelian TQFT with a \textit{topological} boundary condition on the 2d boundary. We will also argue that, under certain assumptions, the complete state (\ref{top-gravity-sum-over-topologies}) can be recovered by considering all possible topological boundary conditions. In this setting, one can recast the sum over topologies into a sum over some simple algebraic data. In order to do that, the first step is to realize the manifolds $M$ with boundary $\Sigma$ as the complements of a collection of handlebodies in a closed manifold. Namely, let us fix a collection of handlebodies $V$, the boundaries of which are connected components of $\Sigma$ with orientation reversed: $\partial V=-\Sigma$. Consider its embedding in a closed 3-manifold $Y$, $\iota: V\hookrightarrow Y$. Any pair $(M,\phi)$ in the sum (\ref{top-gravity-sum-over-topologies}) can be realized as $M=Y\setminus \iota(V)$ with $\phi= \iota|_{\partial V}:\Sigma\xrightarrow{\sim} \partial M\subset M$. The diffeomorphism\footnote{In 3 dimensions, diffeomorphism classes can be identified with homeomorphism classes, as any topological 3-manifold admits a unique smooth structure.} class of $(M,\phi)$ is the same for the pairs $(Y,\iota)$ and $(Y',\iota')$ such that there exists a diffeomorphism $\Phi:Y\rightarrow Y'$ satisfying $\iota'=\phi\circ \iota$, so we will consider the corresponding equivalence relation on the pairs. The sum (\ref{top-gravity-sum-over-topologies}) then can be rewritten in the following form:
\begin{equation}
    |\Sigma;\mu\rangle_\mathcal{T}=\sum_{[Y,\,\iota:V\hookrightarrow Y]} \mu'([Y,
    \iota])\,|Y\setminus \iota(V),\iota|_{\partial V}\rangle_\mathcal{T}.
    \label{top-gravity-sum-over-closed-manifolds}
\end{equation}
Note that this temporarily makes the sum larger, since there are different equivalence classes $[Y,\iota]$ that produce the same $[M,\phi]=[Y\setminus \iota(V),\iota|_{\partial V}]$. Therefore, the weights in the new sum are not uniquely determined by the original weights, as they only need to satisfy:
\begin{equation}
    \sum_{[Y,\iota]: [Y\setminus \iota(V),\iota|_{\partial V}]=[M,\phi]}\mu'([Y,\iota]) = \mu([M,\phi]).
\end{equation}

Next, one can use the fact that the partition function of an arbitrary abelian TQFT with a topological boundary condition $\mathfrak{B}$ depends on the pair $(Y,\iota)$ only through a certain algebraic data $\mathcal{A}$, namely the first homology group $H_1(V)$, the $\Q/\Z$-valued linking bilinear form on its torsion subgroup, and the induced homomorphism $\iota_*:H_1(V)\cong \Z^{g}\rightarrow H_1(Y)$ (where $g$ is the sum of genera of the components of $\Sigma$). Therefore, at least formally, one can replace the sum (\ref{top-gravity-sum-over-closed-manifolds}) with the sum over the algebraic data, after having evaluated it on $\mathfrak{B}$. In this way, each term can be understood to contain the sum over the pairs $(Y,\iota)$ that have the same algebraic data (the corresponding equivalence classes are denoted with double brackets):
\begin{equation}
    \langle \mathfrak{B}|\Sigma;\mu\rangle_\mathcal{T}=\sum_{\mathcal{A}} \mu_{\text{alg}}(\mathcal{A}) \langle \mathfrak{B}|Y\setminus \iota(V),\iota|_{\partial V}\rangle_\mathcal{T}|_{[[Y,\iota]]=\mathcal{A}},
    \qquad \mu_{\text{alg}}(\mathcal{A})=\sum_{[Y,\iota]:[[Y,\iota]]=\mathcal{A}}\mu'([Y,\iota]).
    \label{top-gravity-sum-over-alg-data}
\end{equation}
Assuming non-negativity, $\mu_{alg}$ can be understood as a measure on $\{\mathcal{A}\}$, the set of possible algebraic data.

Note that in principle it is possible to rewrite directly the sum (\ref{top-gravity-sum-over-topologies}) over manifolds $M$ with boundary in terms of certain other algebraic data, without passing to a sum over closed manifolds or imposing any particular boundary condition. Such data is the long exact sequence of homology groups associated to the pair $(M,\Sigma)$, together with bilinear pairings $H_1(M,\Sigma)\times H_1(M)\rightarrow \Z$ and $\Tor\,H_1(M,\Sigma)\times \Tor\,H_1(M)\rightarrow \Q/\Z$. The equivalence relation between manifolds having isomorphic data is known as homology-isomorphism, Y-equivalence, or Borromean surgery equivalence \cite{contreras2015borromean}. However, due to various constraints on the pieces of the data, it is harder to organize explicitly the sum over the equivalence classes. We are not taking this route in the present paper.

Because we choose a topological boundary condition (not a generic conformal one), the holographic dual will be an ensemble of 2d TQFTs (and not of conformal theories). The fact that the bulk TQFT is abelian provides a universal (i.e. independent of the actual choice of TQFT) relation between the gravity partition functions for disjoint unions of 2-manifolds and their connected sum. This gives a strong constraint on the possible form of the distribution of the boundary 2d TQFTs. In particular, we show that the distribution can be completely and explicitly determined in terms of the average of the partition function on a connected surface, assuming its dependence on the genus is of a certain form -- a Dirichlet series. We also provide sufficient conditions on the measure for this assumption to hold for an arbitrary bulk abelian 3d TQFT.

The rest of the paper is organized as follows. In Section \ref{sec:abelian-tqfts} we review necessary facts about abelian 3d TQFTs. In Section \ref{sec:sum-over-topologies} we consider the sum over the topologies and provide sufficient conditions on the measure for the sum to be absolutely convergent, and also have dependence on the genus of the boundary in the form of Dirichlet series. In Section \ref{sec:general-distribution} we study the universal constraints on the distribution of the 2d TQFTs imposed by the holographic relation with a non-specific bulk abelian 3d TQFT. We provide a general (under certain assumptions) solution to these constraints and show how a given bulk 3d TQFT provides a particular solution. Finally, in Section \ref{sec:extension} we comment on various generalizations. Appendix contains some technical calculations that were used to obtain the results stated in the main text.

\section{Abelian TQFTs and their partition function}
\label{sec:abelian-tqfts}

Given our goal of summing over all topologies with a fixed boundary of a partition function of an arbitrary 3d abelian TQFT, the first step consists of studying how the partition function depends on the choice of a 3-manifold.

\subsection{The partition function of an arbitrary 3d abelian TQFT}

An arbitrary 3d abelian bosonic TQFT can be specified by a finite abelian group $D$ with a non-degenerate even quadratic form $q \colon D \to \Q / \Z$, see e.g. \cite{Belov:2005ze,Kapustin:2010hk,Kaidi:2021gbs}. The physical theory also depends on the chiral central charge, an integer, with its value modulo 8 fixed by $D$ and $q$. A quadratic form is said to be non-degenerate and even if the associate bilinear form $b_q$ is so. The group $D$ is called the discriminant group and classifies the bulk line operators of the theory, i.e. the anyons, while $q$ gives the braiding, i.e. the additional phase acquired by two anyons when exchanged. The elements of the group $D$ also give a basis in the Hilbert space of the TQFT assigned to a 2-torus. Namely, a state corresponding to an element $c\in D$ is realized by a solid torus with the corresponding line operator inserted at its core (a circle in the interior to which the solid torus deformation retracts).

Classically, such a TQFT can always be expressed as an even level $K$ abelian Chern-Simons with gauge group the torus $U(1)^n$. By even level $K$ we mean an $n\times n$ integral even symmetric bilinear form, that can be understood as the matrix of coefficients of the $U(1)$-gauge fields in the Chern-Simons action. Indeed, one can always find the bilinear form $K$ such that $ b_q =  K^{-1} \mod 1$, $D = \coker{K}$, and the chiral central charge is $\sigma(K)$ -- the signature of $K$ -- modulo $8$.

We introduce the partition function following the approach of Deloup\footnote{Our convention is slightly different since we replace the $\tau$-invariant with its complex conjugate to make contact with the physics literature.} as in~\cite{deloup1999linking} and~\cite{deloup2001abelian}. We first define the $\tau$-invariant, that is a topological invariant of a pair $(Y, \mathcal{L})$ of a closed, oriented, connected 3-manifold $Y$ and an oriented framed link $\mathcal{L}$ with components colored by the elements of $D$. The partition function of an abelian TQFT is equal to it up to a normalization, which we will fix later.

Given a closed, oriented, connected 3-manifold $Y$ and an oriented framed $g$-component link $\mathcal{L} = \mathcal{L}_1 \cup \dots \cup \mathcal{L}_g$ in $Y$, we call link presentation of the pair $(Y, \mathcal{L})$ a pair of links $(\mathcal{J}, \mathcal{J}')$ in $S^3$ where $\mathcal{J} = \mathcal{J}_1 \cup \dots \cup \mathcal{J}_m$ is the surgery link of $Y$, while $\mathcal{J}'$ is a link yielding $\mathcal{L}$ after surgery. We will use $J$ and $J'$ to denote the linking matrices of $\mathcal{J}$ and $\mathcal{J}'$ respectively, and $L$ to denote the linking matrix of the total link $\mathcal{J} \cup \mathcal{J}'$. Then, the $\tau$-invariant for the TQFT $(D,q)$ of the manifold $(Y,\mathcal{L})$ with the components of the link colored by $c \in D^g$ is defined in the following way:
\begin{equation}
    \label{eqn:DeloupPartitionFunction}
    \tau(Y, \mathcal{L}; D,q; c) = \frac{\exp \left[ \frac{\pi i}{4} \sigma(K) \sigma(L)  \right]}{|D|^{m/2} } \sum_{x \in  \Z^m \otimes D} \exp\left[ - 2 \pi i  (L  \otimes q) (x \oplus c) \right],
\end{equation}
where $L \otimes q$ is the unique quadratic form on $\Z^{m+g}\otimes D$ satisfying:
\begin{equation}
    (L \otimes q) (x \otimes y) = x^T L x \cdot q(y).
\end{equation}

We stress that by definition $\tau$ does not distinguish between unknots and non-trivial knots, thus we may assume that $\mathcal{J} \cup \mathcal{J}'$ is a link of unknots.

In the case of a closed 3-manifold without a link, i.e. with $\mathcal{L} = \emptyset$, the formula reads:
\begin{equation}
    \tau(Y;D,q) \equiv  \tau(Y, \emptyset; D,q; 0) = \frac{\exp \left[\frac{\pi i}{4} \sigma(K) \sigma(L)  \right]}{|D|^{m/2} } \sum_{x \in  \Z^m \otimes D} \exp\left[ - 2 \pi i  (L  \otimes q) (x) \right],  
\end{equation}
while the partition function in the physical normalization is:
\begin{equation}
    Z(Y; D,q) = \frac{1}{|D|^{(m+1)/2} } \sum_{x \in  \Z^m \otimes D} \exp\left[ - 2 \pi i  (L  \otimes q) (x) \right].
    \label{partition-function-no-link-surgery-formula}
\end{equation}

The difference in the two normalizations is such that on a 3-sphere $\tau$ evaluates to 1 in any theory, while $Z(S^3)=|D|^{-1/2}$. Moreover, $Z(Y;D,q)$ generally suffers from framing anomaly, which means its phase is ambiguous, unless one fixes the 2-framing of the tangent bundle $Y$, or the signature of a 4-manifold $X$ bounded by $Y=\partial X$ \cite{Witten:1988hf,atiyah1990framings}. The latter can be identified with $\sigma(L)$\footnote{The surgery representation of $Y$ provides a 4-manifold $X$ bounded by $Y$, obtained by attaching 2-handles to the 4-ball $D^4$ along the tubular neighborhoods of the components of $\mathcal{L}\subset S^3=\partial D^4$. The signature of this 4-manifold coincides with $\sigma(L)$.}. Later on, we will restrict ourselves to the case of zero chiral central charge $\sigma(K)=0$, so that this ambiguity will disappear. 

Now, consider the case when $\mathcal{L} \ne \emptyset$. The $\tau$-invariant, up to normalization, then gives the partition function on the compact manifold $M = Y \setminus \bar{U}$, where $U$ is a tubular neighborhood of $\mathcal{L}$. The boundary of $M$ is $\partial M \cong (T^2)^{\sqcup g}$, disjoint union of $g$-copies of a 2-torus $T^2$. This construction allows us to consider the partition function on any compact oriented connected 3-manifold, the boundary of which consists of a disjoint union of any number of tori. We shall enter into the details of this correspondence, as well as the relation to the manifolds with more general boundary components, later in this section. In this interpretation, the elements $c_i\in D$, $i=1,\ldots,g$ serve the role of the basis elements in the Hilbert spaces of the TQFT assigned to each torus. 

We can then retrieve the correct physical normalization in the presence of boundaries from the knowledge of the physical partition function on a closed manifold. It suffices to enforce the following prescription:
\begin{equation}
    \frac{1}{|D|^{g/2}} \sum_{c \in D^g} Z(Y,\mathcal{L};D,q;c) = Z(Y';D,q),
\end{equation}
where $Y'$ is the closed 3-manifold obtained by surgery on the link $\mathcal{J} \cup \mathcal{J}'$. This fixes the normalization to be:
\begin{equation}
    Z(Y,\mathcal{L};D,q;c) = \frac{1}{|D|^{(m+1)/2} } \sum_{x \in  \Z^m \otimes D} \exp\left[- 2 \pi i  (L  \otimes q) (x\oplus c) \right].
    \label{abelian-TQFT-partition-function-surgery-formula}
\end{equation}

Once the TQFT is fixed, a crucial result of Deloup~\cite{deloup2001abelian} says that $\tau(Y, \mathcal{L}; D,q;c)$ is determined by the following data:
\begin{itemize}
    \item the first homology group $H_1(Y)$;
    \item the linking pairing $\lk \colon \Tor H_1(Y) \otimes \Tor H_1(Y) \to \Q / \Z$ on the torsion subgroup;
    \item the framed 1-cycle $\hat\theta = \sum_k c_k \otimes \mathcal{L}_k$ in $Y$ with coefficients in $D$.
\end{itemize}
Note the the dependence on $H_1(Y)$ can be split into the dependence on the pair $(b_1(Y),\Tor H_1(Y))$, where $b_1(Y)$ is the first Betti number $b_1(Y)$, which equals to the rank of $H_1(Y)$.

Denote by $\theta$ its corresponding class in the homology group $H_1(Y;D)\cong D\otimes H_1(Y)$. As was shown in \cite{deloup2001abelian}, the invariant $\tau$ vanishes unless $\theta\in D\otimes \Tor H_1(Y)$ (which is always satisfied for $Y$ a rational homology sphere). Moreover, if this condition holds, then either $\tau$ vanishes or:
\begin{equation}
    \tau(Y,\mathcal{L};D,q;c) = |H^1(Y;D)|^{1/2} \exp \left[ 2\pi i \, \alpha(\Tor H_1(Y), \lk, \hat\theta) \right],
    \label{tau-abs-value}
\end{equation}
with $\alpha$ taking value in $\Q / \Z$. We refer to~\cite{deloup2001abelian} for an explicit expression of $\alpha$ and a characterization of the vanishing of the $\tau$-invariant.

Similar considerations also apply to the partition function $Z$ and can be deduced from the $\tau$-invariant.

\subsubsection{Manifolds with general boundary}

\label{sec:general-boundary}

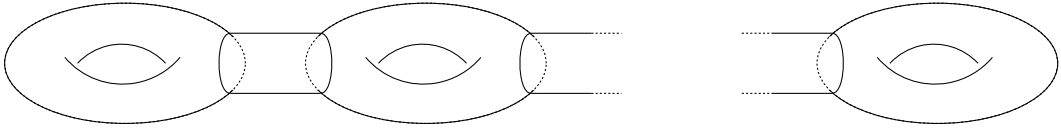
\begin{figure}[t]
    \centering
\begin{tikzpicture}[scale=4]
\draw (0,0) ellipse (0.4 and 0.2);
\draw[white,fill=white] (0.35,0) circle (.087cm);
\draw[dash pattern={on .8pt off .8pt}] (0,0) ellipse (0.4 and 0.2);
\draw (-0.2,0.02) arc (40+180:140+180:0.25);
\draw (.135,0) arc (133+90+180:227+90+180:0.20);
\draw (0.35,0.1) .. controls (0.3,0.1) and (0.3,-0.1) .. (0.35,-0.1);
\draw (0.35,0.1) -- (.65, 0.1);
\draw (0.35,-0.1) -- (.65, -0.1);
\draw (1,0) ellipse (0.4 and 0.2);
\draw[white,fill=white] (.65,0) circle (.087cm);
\draw (0.65,0.1) .. controls (0.7,0.1) and (0.7,-0.1) .. (0.65,-0.1);
\draw[white,fill=white] (1.35,0) circle (.087cm);
\draw[dash pattern={on .8pt off .8pt}] (1,0) ellipse (0.4 and 0.2);
\draw (1-0.2,0.02) arc (40+180:140+180:0.25);
\draw (1.135,0) arc (133+90+180:227+90+180:0.20);
\draw (1.35,0.1) .. controls (1.3,0.1) and (1.3,-0.1) .. (1.35,-0.1);
\draw (1.35,0.1) -- (1.55, 0.1);
\draw[dash pattern={on .8pt off .8pt}] (1.55, 0.1) -- (1.65, 0.1);
\draw (1.35,-0.1) -- (1.55, -0.1);
\draw[dash pattern={on .8pt off .8pt}] (1.55, -0.1) -- (1.65, -0.1);
\draw[dash pattern={on .8pt off .8pt}] (2.05, 0.1) -- (2.15, 0.1);
\draw (2.15,0.1) -- (2.35, 0.1);
\draw[dash pattern={on .8pt off .8pt}] (2.05, -0.1) -- (2.15, -0.1);
\draw (2.15,-0.1) -- (2.35, -0.1);
\draw (2.7,0) ellipse (0.4 and 0.2);
\draw[white,fill=white] (2.35,0) circle (.087cm);
\draw (2.35,0.1) .. controls (2.4,0.1) and (2.4,-0.1) .. (2.35,-0.1);
\draw[dash pattern={on .8pt off .8pt}] (2.7,0) ellipse (0.4 and 0.2);
\draw (2.7-0.2,0.02) arc (40+180:140+180:0.25);
\draw (2.7+.135,0) arc (133+90+180:227+90+180:0.20);
\end{tikzpicture}
    \caption{The genus $g$ handle-body $V_g$.}
    \label{fig:canonicalHandleBody}
\end{figure}

In order to consider manifolds with boundary components of any genus, we will proceed as follows. First, let us fix a particular solid torus $V_1$ (i.e. a specific representative in the homeomorphism class). Then its boundary gives a particular 2-torus $T^2\coloneqq-\partial V_1$, where, as before, the minus refers to the opposite orientation. Next, we construct a genus $g$ handle-body $V_g$ as a gluing of $g$ copies of $V_1$ and $g-1$ copies of the solid cylinder $D^2\times [0,1]$, as depicted in Figure~\ref{fig:canonicalHandleBody}. Namely, we choose an embedding of $D^2$ into the first and the $g$-th copies of $V_1$ and a pair of non-overlapping embeddings of $D^2$ into each of the other copies. Using such embeddings, we attach the first cylinder to the first and the second copies of $V_1$, the second cylinder to the second and the third copies of $V_1$, and so on. The boundary of the resulting handle-body $V_g$ gives a particular genus $g$ surface $\Sigma_g\coloneqq-\partial V_g$. This provides a realization of the TQFT Hilbert space $\mathcal{H}(\Sigma_g)$ in terms of the line operators. Namely, for an abelian TQFT, the basis elements can be given by the insertion of a line colored by $c_i\in D$ at the core of the $i$-th copy of $V_1$ in $V_g$. That is we have fixed the isomorphisms:
\begin{equation}
    \mathcal{H}(\Sigma_g)\cong \mathcal{H}(T^2)^{\otimes g}\cong \left(\,\bigoplus_{c\in D} \mathcal \C|c\rangle\right)^{\otimes g}.
\end{equation}

Now let us consider a compact manifold $M$ with a choice of isomorphism $\phi:\partial M\xrightarrow{\sim} \Sigma = \Sigma_{g_1}\sqcup \Sigma_{g_2}\sqcup \ldots$ As was already considered in the introduction, any such pair can be realized as a complement $Y\setminus \iota(V)$, in a closed manifold $Y$, with an embedding $\iota: V\rightarrow Y$, where $V=V_{g_1}\sqcup V_{g_2}\sqcup\ldots$, so that $\phi=\iota|_{\partial V}$. The embedding $\iota$ can be restricted to the solid tori $V_1$ in the handle-bodies $V_{g_i}$. Such a restriction would produce a different manifold\footnote{$f.t.$ stands for ``filled tubes''.} $M^{f.t.}\coloneqq Y\setminus \iota(V_1^{\sqcup (g_1+g_2+\ldots)})$ with $\phi^{f.t.}\coloneqq\iota|_{\partial V_1^{\sqcup (g_1+g_2+\ldots)}}:(T^2)^{\sqcup(g_1+g_2+\ldots)}\xrightarrow{\sim} \partial M^{f.t.}$. Equivalently, one can obtain $M^{f.t.}$ by attaching $(g_1-1)+(g_2-1)+\ldots$ copies of the solid cylinder $D^2\times [0,1]$ to $M$ along the cylinders $S^1\times [0,1]$ at the boundary, according to the way $\Sigma_{g_i}$ were defined above. The manifolds $M$ and $M^{f.t.}$ are schematically illustrated in Figure~\ref{fig:filledTubes}. 

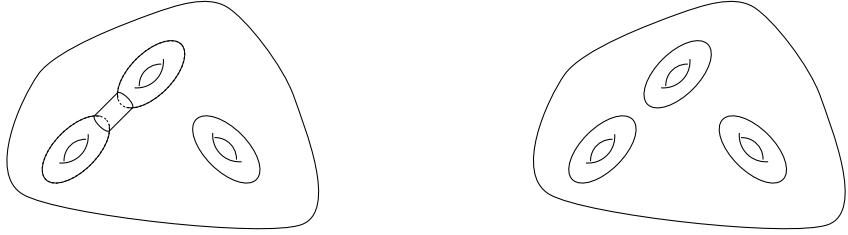
\begin{figure}[t]
    \centering
\begin{tikzpicture}[x={(1cm,1cm)},y={(-1cm,1cm)}]
\draw[x={(1cm,0cm)},y={(0cm,1cm)}] plot [smooth cycle] coordinates {(-.7,-.6)  (-.5,1) (1,1.8) (2,1.9)  (2.9,.7) (3,-1)};
\draw (0,0) ellipse (0.4 and 0.2);
\draw[white,fill=white] (0.35,0) circle (.087cm);
\draw[dash pattern={on .8pt off .8pt}] (0,0) ellipse (0.4 and 0.2);
\draw (-0.2,0.02) arc (40+180:140+180:0.25);
\draw (.135,0) arc (133+90+180:227+90+180:0.20);
\draw (0.35,0.1) .. controls (0.3,0.1) and (0.3,-0.1) .. (0.35,-0.1);
\draw (0.35,0.1) -- (.65, 0.1);
\draw (0.35,-0.1) -- (.65, -0.1);
\draw (1,0) ellipse (0.4 and 0.2);
\draw[white,fill=white] (.65,0) circle (.087cm);
\draw (0.65,0.1) .. controls (0.7,0.1) and (0.7,-0.1) .. (0.65,-0.1);
\draw[dash pattern={on .8pt off .8pt}] (1,0) ellipse (0.4 and 0.2);
\draw (1-0.2,0.02) arc (40+180:140+180:0.25);
\draw (1.135,0) arc (133+90+180:227+90+180:0.20);
\draw[x={(1cm,-1cm)},y={(1cm,1cm)}] (1,1) ellipse (0.4 and 0.2);
\draw[x={(1cm,-1cm)},y={(1cm,1cm)}] (1-0.2,1.02) arc (40+180:140+180:0.25);
\draw[x={(1cm,-1cm)},y={(1cm,1cm)}] (1.135,1) arc (133+90+180:227+90+180:0.20);
\begin{scope}[shift={({3.5},{-3.5})}]
\draw[x={(1cm,0cm)},y={(0cm,1cm)}] plot [smooth cycle] coordinates {(-.7,-.6)  (-.5,1) (1,1.8) (2,1.9)  (2.9,.7) (3,-1)};
\draw (0,0) ellipse (0.4 and 0.2);
\draw (-0.2,0.02) arc (40+180:140+180:0.25);
\draw (.135,0) arc (133+90+180:227+90+180:0.20);
\draw (1,0) ellipse (0.4 and 0.2);
\draw (1-0.2,0.02) arc (40+180:140+180:0.25);
\draw (1.135,0) arc (133+90+180:227+90+180:0.20);
\draw[x={(1cm,-1cm)},y={(1cm,1cm)}] (1,1) ellipse (0.4 and 0.2);
\draw[x={(1cm,-1cm)},y={(1cm,1cm)}] (1-0.2,1.02) arc (40+180:140+180:0.25);
\draw[x={(1cm,-1cm)},y={(1cm,1cm)}] (1.135,1) arc (133+90+180:227+90+180:0.20);
\end{scope}
\end{tikzpicture}
    \caption{The complement $Y \setminus \iota(V) = M$ on the left and the filled-tubes manifold $M^{f.t}$ on the right.}
    \label{fig:filledTubes}
\end{figure}

Let $|M,\phi\rangle \in \bigotimes_{i}\mathcal{H}(\Sigma_{g_i})$ be the state provided by the TQFT on the manifold $M$ and $\langle c^{(i)}| \in \mathcal{H}(-\Sigma_{g_i})\cong \mathcal{H}(\Sigma_{g_i})^*$, $c^{(i)}\in D^{g_i}$, the state provided by $V_{g_i}$ with the lines $c^{(i)}_j,\;j=1,\ldots,g_i$ inserted at the cores of the $g_i$ solid tori $V_1$ inside $V_{g_i}$. We then have:
\begin{equation}
    \left(\langle c^{(1)}|\otimes \langle c^{(2)}|\otimes \ldots \right) |M,\phi\rangle = \left(\bigotimes_{i,j}\langle c^{(i)}_j|\right)|M^{f.t.},\phi^{f.t.}\rangle = Z\left(Y,\mathcal{L};D,q;\bigoplus_{i}c^{(i)}\right),
\end{equation}
where $\mathcal{L}$ is the framed $g=g_1+g_2+\ldots$ -component link in $Y$ obtained by the restriction of the embedding $\iota:V\hookrightarrow Y$ to the cores of solid tori $V_1$ inside $V=\sqcup_{i} V_{g_i}$ and their normal bundles.

\subsubsection{Topological boundary conditions}
\label{sec:top-bc}

We are interested in computing the sum over manifolds with a fixed boundary (\ref{top-gravity-sum-over-topologies}), recasted into the sum over closed manifolds (\ref{top-gravity-sum-over-closed-manifolds}), of partition functions of an abelian TQFT with a topological boundary condition. The topological boundary conditions are known to be in one-to-one correspondence with Lagrangian subgroups $\mathfrak{L}\subset D$ with respect to the quadratic form $q$ \cite{Kapustin:2010hk}. Imposing the boundary condition corresponding to a Lagrangian subgroup $\mathfrak{L}_i$ on the $i$-th boundary component $\Sigma_{g_i}$ of the bordism $(M=Y\setminus \iota(V),\phi=\iota|_{\partial V})$ results in pairing the state $|M,\phi\rangle $ with the boundary state:
\begin{equation}
\langle \mathfrak{L}_i| \coloneqq\sum_{c\in \mathfrak{L}^{g_i}_i\subset D^{g_i}}\langle c| \qquad \in \mathcal{H}(-\Sigma_{g_i}).
\end{equation}
Such a state is in particular invariant under the action of the mapping class group of $\Sigma_{g_i}$ on $\mathcal{H}(-\Sigma_{g_i})$. The pairing results in a quantity independent of the restriction of $\phi:\partial M\xrightarrow{\sim}\Sigma\equiv \sqcup_{i}\Sigma_{g_i}$ on the corresponding boundary component.

When the same boundary condition is imposed for all the components, i.e. $\mathfrak{L}_i \equiv\mathfrak{L}$ $\forall i$, writing the boundary state $\langle \mathfrak{B}|=\bigotimes_i \langle \mathfrak{L}|^{(i)} \in \bigotimes_i \mathcal{H}(-\Sigma_{g_i})$, we have for connected $M$:
\begin{equation}
    Z(M;\mathfrak{B})\equiv \langle \mathfrak{B}|M,\phi\rangle =\left(\bigotimes_i \langle \mathfrak{L}|^{(i)}\right)|M,\phi\rangle= 
    \sum_{c\in \mathfrak{L}^g} Z(Y,\mathcal{L};D,q;c),
    \label{Lagrangian-b-c}
\end{equation}
where $\mathcal{L}$ is the $g$-component link in $Y$ ($g=\sum_i g_i$) defined by the embedding $\iota:V\rightarrow Y$ as above.

We recall that a Lagrangian subgroup of a pair $(D,q)$ is an isotropic and coisotropic subgroup of $D$ with respect to the bilinear form $b_q$ associated with $q$. In particular, this implies that $|\mathfrak{L}|^2 = |D|$ and $\sigma(K) = 0 \mod 8$, as was argued in \cite{Kapustin:2010hk}. Therefore, a TQFT admitting topological boundary conditions has no framing anomaly. 
Finally, it was shown in~\cite{Kaidi:2021gbs} that an abelian bosonic TQFT admits topological boundary conditions if and only if it is an abelian Dijkgraaf-Witten theory.

We now observe an additional characterization of an abelian TQFT admitting topological boundary conditions due to~\cite{Kaidi:2021gbs}. First of all, notice that any abelian TQFT $\mathcal{T} = (D,q)$ can be factorized into prime factors $\mathcal{T}_{p_i} = (D_{p_i}, q_{p_i})$, where $D_{p_i}$ are the Sylow $p$-subgroups of $D$, and since such splitting is orthogonal with respect to $q$, the quadratic form decomposes accordingly. Then, an abelian TQFT $\mathcal{T}$ admits topological boundary conditions if and only if each factor $\mathcal{T}_{p_i}$ does.

The hypothesis of the existence of topological boundary conditions simplifies the analysis at the beginning of this section. First of all, the relation between the TQFT partition function and the Deloup $\tau$-invariant does not involve the ambiguous phase factor in this case, i.e. one simply has $Z=|D|^{-1/2}\tau$. Moreover, when all components of $c\in D^g$ are in the same Lagrangian subgroup, as it happens in the sum in the right-hand side of (\ref{Lagrangian-b-c}), the invariant depends on the homology class $\theta=\sum_{i=1}^g{c_i}\otimes [\mathcal{L}_i]\in H_1(Y,\mathfrak{L})\subset H_1(Y,D)$, but not on the choice of a framed 1-cycle $\hat\theta$ representing it. This is a consequence of the invariance under the action of the mapping class group of the boundary of $M=Y\setminus \iota(V)$, but it can also be seen explicitly as follows. Recall that $L$ is the linking matrix of the link $\mathcal{J} \cup \mathcal{J}'$ in $S^3$ and is related to the linking matrices of $\mathcal{J}$ and $\mathcal{J}'$ through:
\begin{equation}
L =
    \begin{pmatrix}
        J & F \\
        F^T & J'
    \end{pmatrix},
\end{equation} 
where $F \colon \Z^g \to \Z^m$ gives the linking numbers between the link components of $\mathcal{J}$ and $\mathcal{J}'$. The exponent in the surgery formula (\ref{eqn:DeloupPartitionFunction}) is expressed in terms of those blocks of $L$ as follows:
\begin{equation}
    (L\otimes q)(x\oplus c)=(J\otimes q)(x)+(F\otimes b_q)(x,c)+(J'\otimes q)(c).
    \label{surgery-exponent-block-decomposition}
\end{equation}
The homology group $H_1(M)\cong \coker J$ and the linking form $\lk$ on its torsion part are determined by $J$. Notice that $J$ has rank equal to $m - b_1$. Then, there exists a non-singular integer matrix $\tilde{J}$ and a unimodular matrix $P$ such that $P^T J P = \tilde{J} \oplus 0_{b_1}$, so that $\Tor H_1(Y)\cong \coker \tilde{J}$ and $\lk = \tilde{J}^{-1} \mod \Z$. The dependence on the framed 1-cycle is through the blocks $F$ and $J'$. Namely, the homology class $\theta$ is fixed by $F$ while the choice of the framed representative $\hat\theta$ is determined by $J'$. When all $c_i$ are in the same Lagrangian subgroup, the last term in (\ref{surgery-exponent-block-decomposition}) identically vanishes and the dependence on $J'$, and therefore on framing, disappears. 
The homology class of $\theta$ is determined by the collection $f_i\coloneqq[\mathcal{L}_i]\in H_1(Y),\; i=1,\ldots,g$, which can be combined into an element $f\in H_1(Y)^g$ or, equivalently, into the map $\Z^g \rightarrow H_1(Y)$. The latter is the pushforward map $f=\iota_*:H_1(V)\cong H_1(\mathcal{L})\rightarrow Y$, where the generators of $H_1(V)\cong\Z^g$ are given by the cores inside the copies of the solid torus $V_1$ in $V$. In terms of the surgery data, we have $f = p \circ F$, where $p$ is the projection map to $\coker J$.

Therefore, when the same topological boundary condition is imposed on all boundary components, for any choice of the abelian 3d TQFT, the partition function depends on $(Y,\iota:V\hookrightarrow Y)$ only through the triple $(H_1(Y),\lk,f)$. That is, one can write:
\begin{equation}
    Z(Y,\mathcal{L};D,q;c)\equiv Z( H_1(Y),\lk,f;D,q;c).
    \label{partition-function-alg-dependence}
\end{equation}
On the other hand, any triple $(A,\ell,f)$, where $A$ is a finitely generated abelian group, $\ell$ is a non-degenerate bilinear form $\Tor A\times \Tor A \rightarrow \Q/\Z$, and $f:\Z^g\rightarrow A$, can be realized by some 3-manifold $Y$ and an embedding of a collection of handle-bodies $\iota:V\hookrightarrow Y$ with total genus $g$. Such triples give the algebraic data anticipated in the introduction, that is $\mathcal{A}=(H_1(Y),\lk,f)$ in the summation formula (\ref{top-gravity-sum-over-alg-data}).

Moreover, the TQFT partition function depends on $f\in \hom(\Z^g,A)\cong A^g$ only through $\theta\equiv \sum_{i=1}^gc_i\otimes f_i\in \mathfrak{L}\otimes A\subset D\otimes A$, so one can write:
\begin{equation}
    Z(A,\ell,f;D,q;c)\equiv Z \left( A,\ell;D,q;\textstyle\sum_{i=1}^g c_i\otimes f_i \right).
    \label{tqft-partition-function-theta-dependence}
\end{equation}

\subsubsection{Factorization properties}
\label{sec:factorization}
The TQFT partition function (equivalently, the $\tau$-invariant) behaves nicely under the connected sum operation. Indeed, given a link $\mathcal{L}_1$ in $Y_1$ and a link $\mathcal{L}_2$ in $Y_2$, it satisfies:
\begin{equation}
    Z(Y_1 \# Y_2, \mathcal{L}_1 \cup \mathcal{L}_2; D, q; c_1\oplus c_2) = \frac{Z(Y_1, \mathcal{L}_1; D,q; c_1) \cdot Z(Y_2, \mathcal{L}_2; D,q;c_2)}{Z(S^3,\emptyset;D,q;0)}.
        \label{connected-sum-factorization}
\end{equation}
When all the components of $c_1$ and $c_2$ are in the same Lagrangian subgroup, one can write this relation in terms of the algebraic data used in (\ref{partition-function-alg-dependence}) as follows:
\begin{multline}
    Z(A_1\oplus A_2,\ell_1\oplus \ell_2,f_1\oplus f_2; D, q; c_1\oplus c_2) = \\\frac{Z( A_1,\ell_1,f_1; D,q; c_1) \cdot Z(A_2,\ell_2,f_2; D,q; c_2)}{Z(0,0,0;D,q;0)}.
    \label{connected-sum-factorization-algebraic}
\end{multline}

Moreover, the partition function is multiplicative with respect to the direct sum of TQFTs. By this we mean that, given $(D, q; c) = (D_1 \oplus D_2, q_1 \oplus q_2; c_1 \oplus c_2)$, we have:
\begin{equation}
    Z(Y, \mathcal{L}; D, q; c) = Z(Y, \mathcal{L}; D_1,q_1; c_1) \cdot Z(Y, \mathcal{L}; D_2,q_2;c_2).
    \label{factorization-discriminant-group}
\end{equation}

One can also consider a different operation that preserves the number of boundary components. Consider a pair $(Y,\mathcal{L})$ realized via surgery on a pair of framed links in $S^3$ of the form $(\mathcal{J}_1\cup \mathcal{J}_2, \mathcal{J}')$, where $J_1$ and $J_2$ are unlinked with each other.
Let the pairs $(Y_1,\mathcal{L}_1)$ and $(Y_2,\mathcal{L}_2)$ be realized by the pairs $(\mathcal{J}_1,\mathcal{J}')$ and $(\mathcal{J}_2,\mathcal{J}')$ respectively, and notice that for closed manifolds without links, this implies $Y=Y_1\# Y_2$. In an abelian TQFT, without loss of generality one can assume that $\mathcal{L}=\mathcal{L}_1\#\mathcal{L}_2$, i.e. that each component of $\mathcal{L}$ is a connected sum of the corresponding components of $\mathcal{L}_1$ and $\mathcal{L}_2$. From the surgery formula, we then have:
\begin{equation}
    Z(Y_1\# Y_2,\mathcal{L}_1\# \mathcal{L}_2;D,q;c) = \frac{Z(Y_1,\mathcal{L}_1;D,q;c) \cdot Z(Y_2,\mathcal{L}_2;D,q;c)}{Z(S^3, \mathcal{J}';D,q;c)} .
\end{equation}
When all the components of $c$ are in the same Lagrangian subgroup, the relation simplifies as the dependence on the linking matrix of $\mathcal{J}'$ disappears:
\begin{equation}
    Z(Y_1\# Y_2,\mathcal{L}_1\# \mathcal{L}_2;D,q;c) = \frac{Z(Y_1,\mathcal{L}_1;D,q;c) \cdot Z(Y_2,\mathcal{L}_2;D,q;c)}{Z(S^3, \emptyset;D,q;0)} .
    \label{eqn:surgeryLinkFactorizationFormula}
\end{equation}
We will refer to this relation as the surgery link factorization formula. We stress that this construction is important since it does not affect the boundary and will allow us to reduce the computation of the partition function on an arbitrary manifold to the evaluation of simpler building blocks.

 In terms of the algebraic data, the surgery link factorization formula reads:
\begin{equation}
    Z(A_1\oplus A_2,\ell_1\oplus \ell_2,f_1\oplus f_2; D, q; c) = \\\frac{Z(A_1,\ell_1,f_1; D,q; c) \cdot Z( A_2,\ell_2,f_2; D,q; c)}{Z(0,0,0;D,q;0)}.
    \label{factorization-boundary-preserving-formula-alg}
\end{equation}
Note that, unlike in (\ref{connected-sum-factorization-algebraic}), here the direct sum $f_1\oplus f_2$ is performed only with respect to the codomain, but not the domain, which is now fixed to $\Z^g$. Because of this, the summation over the Lagrangian subgroup in (\ref{Lagrangian-b-c}) spoils the factorization property (\ref{eqn:surgeryLinkFactorizationFormula}), but not the one in (\ref{connected-sum-factorization}).

However, in terms of the dependence on $\theta\in A\otimes \mathfrak{L}$ as in (\ref{tqft-partition-function-theta-dependence}), both (\ref{connected-sum-factorization-algebraic}) and (\ref{factorization-boundary-preserving-formula-alg}) can be written simply as:
\begin{equation}
    Z(A_1\oplus A_2,\ell_1\oplus \ell_2; D, q; \theta_1+\theta_2) = \\\frac{Z(A_1,\ell_1; D,q; \theta_1) \cdot Z( A_2,\ell_2; D,q; \theta_2)}{Z(0,0,0;D,q;0)}.
    \label{factorization-theta}
\end{equation}

\subsubsection{Classification of finite bilinear forms}
\label{sec:finiteForms}

We briefly review the classification of bilinear forms on finite abelian groups following~\cite{Geiko:2022qjy}. For the original works, we refer to~\cite{wall1963quadratic} and~\cite{kawauchi1980algebraic}.

The classification is essential for our purposes since, thanks to the surgery link factorization formula, the partition function factorizes compatibly with the decomposition of finite abelian groups and bilinear forms on them.

We study the classification of pairs $(A, \ell)$ of a finite abelian group $A$ and a symmetric bilinear form $\ell \colon A \otimes A \to \Q / \Z$ up to equivalence. Two pairs are equivalent, that is $(A,\ell)\sim (A',\ell')$, if there exists an isomorphism $\phi:A\rightarrow A'$ such that $\ell=\ell'\circ(\phi\otimes\phi)$. The first step consists of writing $A$ as a product of its Sylow $p$-subgroups and noticing that the bilinear form splits accordingly:
\begin{equation}
(A,\ell) \simeq \bigoplus_{p \text{ prime}} (A_p,\ell_{p}).
\end{equation}
Next, for each prime, one considers the decomposition into irreducible components of the following form:
\begin{equation}
(A_p,\ell_p) \simeq \bigoplus_{i\in I_p} ((\Z/p^{m_i}\Z)^{d_i},\ell_{p,m_i,d_i}^{(i)})
\label{bilinear-form-decomposition}
\end{equation}
where $\ell_{p,m_i,d_i}^{(i)}$ is a bilinear form on $(\Z/p^{m_i}\Z)^{d_i}$, for some lists of positive integers $\{ m_i,d_i \}_{i\in I_p}$. 

Then, we are left to consider the classification of irreducible forms $\ell_{p,m,d}$ on $(\Z/p^m\Z)^d$. The discussion is considerably different depending on whether $p$ is an odd prime. If that is the case, there are only two possible choices of irreducible bilinear forms on $\Z/{p^m}\Z$:
\begin{align}
    X_{p^m} &\text{ on } \Z/{p^m}\Z\colon \qquad \ell_{p,m,1}(x,y) = \frac{xy}{p^m} \mod 1, \\
    Y_{p^m} &\text{ on } \Z/{p^m}\Z\colon \qquad \ell_{p,m,1}(x,y) = r\frac{ x y}{p^m} \mod 1,
\end{align}
for $r$ any quadratic non-residue mod $p$. We remark that if $p = 3 \mod 4$, then $r = -1$ is a quadratic non-residue; however, there is no canonical choice for $r$ when $p = 1 \mod 4$. 

These two bilinears generate any other bilinear form on the Sylow $p$-subgroup. However, the decomposition (\ref{bilinear-form-decomposition}) is not unique in general, and therefore we must also consider possible relations among the generators. The only such relation is:
\begin{equation}
    2 X_{p^m} = 2 Y_{p^m}
\end{equation}
on $\Z/{p^m}\Z \oplus \Z/{p^m}\Z$. Thus, $dX_{p^m}$ and $(d-1)X_{p^m}+Y_{p^m}$ are the only two inequivalent bilinear forms on $(\Z/{p^m}\Z)^d$.

On the other hand, if $p=2$ the classification is much more involved. All the possible bilinear forms $\ell_{2}$ on the Sylow $2$-subgroup are generated by:
\begin{align}
    A_{2^m} &\text{ on } \Z/{2^m}\Z \colon \qquad \ell_{2,m,1}(x,y) = \frac{xy}{2^m} \mod 1,  \\
    B_{2^m} &\text{ on } \Z/{2^m}\Z \colon \qquad \ell_{2,m,1}(x,y) = -\frac{xy}{2^m} \mod 1,  \\
    C_{2^m} &\text{ on } \Z/{2^m}\Z \colon \qquad \ell_{2,m,1}(x,y) = 5\frac{xy}{2^m} \mod 1,  \\
    D_{2^m} &\text{ on } \Z/{2^m}\Z \colon \qquad \ell_{2,m,1}(x,y) = -5\frac{xy}{2^m} \mod 1,  \\
    E_{2^m} &\text{ on } (\Z/{2^m}\Z)^2  \colon \qquad \ell_{2,m,2}(x,y) = x^T 
    \begin{pmatrix}
        0 & 1/2^m\\
        1/2^m &0
    \end{pmatrix} y
    \mod 1,  \\
    F_{2^m} &\text{ on } (\Z/{2^m}\Z)^2 \colon \qquad \ell_{2,m,2}(x,y) = x^T     
    \begin{pmatrix}
        1/2^{m-1} & 1/2^m\\
        1/2^m &1/2^{m-1}
    \end{pmatrix} y
    \mod 1.
\end{align}
Unfortunately, the relations among these generators are rather involved. We avoid listing here all the possibilities.

Thus, thanks to the surgery link factorization formulas (\ref{eqn:surgeryLinkFactorizationFormula})-(\ref{factorization-boundary-preserving-formula-alg}), it is enough to compute the partition function on the generators of finite bilinear forms.

\subsubsection{Counting finite abelian groups with bilinear forms}
\label{sec:counting-groups-with-forms}

As a prototypical algebraic version of the sum over all 3-manifolds, one can consider the following generating Dirichlet series for the numbers of isomorphism classes of abelian groups of given order (see e.g. \cite{ivic2012riemann}):
\begin{multline}
    \sum_{n\geq 1}\#\left\{\substack{\text{isomorphism classes}\\ \text{of finite abelian groups}\\\text{of order $n$}}\right\}\,n^{-s}=\prod_{p\in\{\text{primes}\}}\sum_{\text{partitions }\Lambda}p^{-s|\Lambda|}=\\
    \prod_{p\in\{\text{primes}\}}\prod_{m\geq 1}\frac{1}{1-p^{-sm}}=\prod_{m\geq 1}\zeta(ms),
    \label{dir-gen-series-ab-groups}
\end{multline}
which is convergent for $\mathrm{Re}(s)>1$. 

For our purpose, however, it is more relevant to consider its analogue counting non-equivalent pairs $(A,\ell)$, where $A$ is a finite abelian group and $\ell$ is a bilinear form on it, as considered in the previous section. Let $a(n)$ denote the number of such pairs with $|A|=n$. Then, according to the classification reviewed above, we have:
\begin{equation}
    \Xi(s)\coloneqq\sum_{n\geq1}a(n)\,n^{-s}=
    \prod_{p\in\{\text{primes}\}}\Xi_p(s),
    \label{dir-gen-series-bil-forms}
\end{equation}
where:
\begin{equation}
    \Xi_p(s)\coloneqq\sum_{n\geq1}a(p^k)\,p^{-ks}.
\end{equation}
For odd $p$, using the fact that for each $(\Z/p^m\Z)^d,\;d\geq 1$ factor there are two inequivalent forms, we have:
\begin{equation}
    \Xi_p(s)=\prod_{m\geq 1}\frac{1+p^{-ms}}{1-p^{-ms}},\qquad p\neq2.
    \label{counting-pnot3-groups}
\end{equation}
For $p=2$ we can use the results about the normal forms by Miranda \cite{miranda1984nondegenerate} to write the generating series (each normal form contributes with $2^{-xs}/(1-{2}^{-ys})$ for some $x$ and $y$):
\begin{multline}
\Xi_2(s)=\left(\frac{1}{1-2^{-s}}+\frac{2^{-2s}}{1-2^{-2s}}\right)\cdot
\left(\frac{1+2^{-2s}+2^{-4s}+2^{-6s}}{1-2^{-2s}}+\frac{2\cdot 2^{-4s}}{1-2^{-4s}}\right)\times
\\
\prod_{m\geq 3}\left(1+2^{-ms}+2^{-2ms}+2^{-3ms}+\frac{2\cdot 2^{-2ms}}{1-2^{-2ms}}+\frac{3\cdot 2^{-ms}+2\cdot 2^{-2ms}+2\cdot 2^{-3ms}+ 2^{-4ms}}{1-2^{-ms}}\right)=\\
\frac{1+2^{-s}+2^{-2s}}{1-2^{-2s}}\cdot
\frac{1+2\cdot 2^{-2s}+4\cdot 2^{-4s}+2\cdot 2^{-6s}+2^{-8s}}{1-2^{-4s}}\times
\\
\prod_{m\geq 3}\frac{1+4\cdot 2^{-ms}+7\cdot 2^{-2ms}+4\cdot 2^{-3ms}+ 2^{-4ms}}{1-2^{-2ms}}.
\end{multline}
Note that the details of the expression for $\Xi(s)$ are not going to be relevant for further analysis. However, what will be important is that the Dirichlet series (\ref{dir-gen-series-bil-forms}) converges for $\mathrm{Re}(s)>1$, similarly to (\ref{dir-gen-series-ab-groups}), as standard arguments show.

\subsubsection{Reciprocity formulae for Gauss sums}

Expressions like the one in the partition function~\eqref{eqn:DeloupPartitionFunction} are called quadratic Gauss sums, or generalizations thereof. They are quite challenging to evaluate directly. Luckily, we have powerful tools at our disposal, called reciprocity formulae for Gauss sums. Not only are they crucial in computing our partition functions but also they give rise to highly non-trivial dualities among abelian TQFTs.

The first reciprocity formula we state is due to Deloup~\cite{deloup1999linking}.
Let $d$ be a non-zero integer and $A$ an even $m\times m$ matrix of integers of rank $r$. We can always find a non-singular matrix $\tilde{A}$ such that there exists an unimodular matrix $P$ satisfying $P^T A P = \tilde{A} \oplus 0_{m-r}$. Then, the reciprocity formula is:
\begin{equation}
    \label{eqn:reciprocityClosedManifold}
    d^{-m/2} \sum_{x \in (\Z/d \Z)^m} \exp \left[ \frac{\pi i}{d} \, \hat{x}^T A x \right] = \frac{d^{(m-r)/2} e^{\pi i /4 \, \sigma(A)}}{|\det \tilde{A}|^{1/2}} \sum_{y \in \coker{\tilde{A}}} \exp \left[ -\pi i d \, \hat{y}^T \tilde{A}^{-1} \hat{y} \right],
\end{equation}
where $\hat{x}$ and $\hat{y}$ are lifts in $\Z^m$ and $\Z^r$ respectively.
We stress that this formula can be helpful in determining the partition function for closed manifolds, as we shall show in what follows. 

A non-empty boundary introduces a linear term in~\eqref{eqn:DeloupPartitionFunction} in addition to the quadratic one. Thus, we need a more general formula that has been proven in~\cite{deloup2007reciprocity}.
Let $A$ and $d$ be as above, with $P^T A P = \tilde{A} \oplus 0_{m-r}$. Given $\psi = \psi_1 \oplus \psi_2 \in \R^r \oplus \R^{m-r}$ such that $d \psi \in \Z^m$, the reciprocity formula is:
\begin{align}
    \notag
    d^{-m/2}& \sum_{x \in (\Z/d\Z)^m} \exp \left[ \frac{\pi i}{d} ( \hat{x}^T A x + 2d \, \hat{x}^T \psi )\right] =\\ 
    &= \delta\left(\psi_2 \in \Z^{m-r}\right) \frac{d^{(m-r)/2} e^{\pi i /4 \, \sigma(A)}}{|\det \tilde{A}|^{1/2}} \sum_{y \in \coker{\tilde{A}}} \exp \left[ -\pi i d  (\hat{y}+ \psi_1)^T \tilde{A}^{-1} (\hat{y}+\psi_1) \right].
    \label{eqn:reciprocityOpenManifold}
\end{align}

The latter reciprocity formula can be used, in particular, to verify that the surgery formula (\ref{partition-function-no-link-surgery-formula}) for the partition function of an abelian TQFT agrees with other explicit formulas for abelian Chern-Simons theories that have appeared in the literature. Indeed, recall that the partition function of a level $K$ $U(1)^n$ Chern-Simons theory on a closed connected manifold $Y$ can be written in the appropriate normalization\footnote{Such that $Z(S^1 \times S^2) = 1$.} as:
\begin{equation}
    Z_K(Y) = \frac{|\det K|^{(b_1(Y)- 1 )/2}}{ |\Tor H_1(Y)|^{n/2}} e^{ -\frac{\pi i}{4} \sigma(K) } \sum_{x \in \Tor H_1(Y) \otimes \Z^n} \exp\left[  \pi i  (\lk \otimes K) (x) \right].
    \label{abelian-CS-partition-function}
\end{equation}
See for example~\cite{Guo:2018vij}. This expression looks different from the one for $Z(Y;D,q)$ given above. However, it follows from applying the first reciprocity formula that the two expressions coincide: $Z_K(Y) = Z(Y; D, q)$ given the appropriate $D$ and $q$.

Moreover, even though it might not be evident, the reciprocity formulae establish a duality between abelian TQFTs. Their effect is to exchange the role of the quadratic form $q$ with the linking matrix $L$ associated with a closed manifold, up to an overall phase proportional to the anomaly. To be more specific, this duality acts on abelian spin-TQFTs, since it can map bosonic theories to fermionic ones. See, for example,~\cite{Kaidi:2021gbs} for a more detailed discussion of this duality. We remark that because of this correspondence, the sum over closed $3$-manifolds could be interpreted as a sum over abelian TQFTs.

\subsection{Dijkgraaf-Witten TQFT}

So far, we have not specified any particular abelian 3-dimensional TQFT. However, in order to make our discussion concrete and compute the gravitational partition function, we will need to fix a theory.

Our focus will be on Dijkgraaf-Witten (DW) TQFTs, as they are anomaly-free and admit topological boundary conditions, as requested. They were first introduced by Dijkgraaf-Witten in~\cite{Dijkgraaf:1989pz} as topological gauge theories with a finite gauge group $G$. On a closed manifold $Y$, the partition function is:
\begin{equation}
    Z(Y) = \frac{1}{|G|} \sum_{\gamma \in \hom(\pi_1(Y), G) } \langle \gamma^* \alpha , [Y] \rangle,
\end{equation}
where $\alpha \in H^3(BG, U(1) )$ and $\gamma^* \colon H^3(BG, U(1)) \to H^3(Y, U(1))$. Here $BG$ is the classifying space, and we have implicitly used that a map $\gamma \in \hom(\pi_1(Y), G)$ determines a map $\gamma \colon Y \to BG$ (up to homotopy). We say that the DW theory is twisted by $\alpha$ unless $\alpha$ is the trivial cohomology class; in that case, we call the theory untwisted. 

The definition of the partition function on a manifold $M$ with boundary is a bit more involved. Let $\phi: \Sigma  \xrightarrow{\sim} \partial M \subset M$ and recall that the states in $\mathcal{H}(\Sigma)$ are generated by the elements $c \in \hom(\pi_1(\Sigma), G)$. Then, the partition function evaluated on such $c$ should only get contributions from maps $\gamma$ that restrict to $c$ on the boundary, i.e. $\gamma \circ \phi_* = c$, where $\phi_*$ is the map between the fundamental groups induced by $\phi$. However, one also has to deal with the problem that the action of the theory is no longer defined by the formula:
\begin{equation}
    e^{2\pi i S[\gamma]} = \langle \gamma^* \alpha, [Y] \rangle 
\end{equation}
due to the absence of the fundamental class $[Y]\in H_3(Y,\Z)$ when $\partial Y\neq \emptyset$. This can be resolved in various ways, see e.g. \cite{Dijkgraaf:1989pz,Freed:1991bn,wakui1992dijkgraaf}. However, we will not dive into the details of this procedure since it will not be needed for the calculations performed later in this paper. We remark that we will always consider abelian gauge groups, so that the fundamental group $\pi_1(Y)$ can be replaced with its abelianization $H_1(Y)$.

\subsubsection{Untwisted abelian Dijkgraaf-Witten TQFT}
\label{sec:untwisted-dw}

Let us focus on the case of an untwisted Dijkgraaf-Witten theory with an abelian gauge group $G$.  Since the action is now trivial, the partition function on a closed manifold $Y$ becomes:
\begin{equation}
\label{eqn:DWpartitionFunctionClosedMfld}
    Z(Y) = \frac{|\hom(H_1(Y),G)|}{|G|}.
\end{equation}
 For the same reason, the partition function on a manifold with boundary $M$, with $\phi: \Sigma  \xrightarrow{\sim} \partial M \subset M$, evaluated on a boundary state $c \in \hom(H_1(\Sigma),G)$ is given simply by:
\begin{equation}
    \langle c|M,\phi\rangle = \frac{1}{|G|} \left| \left\{ \gamma \in \hom(H_1(M),G) \colon \gamma \circ \phi_* = c \right\} \right|.
\end{equation}
The theory always admits the topological boundary condition corresponding to the \textit{Neumann boundary condition} on the gauge fields, given by the boundary state $\langle \mathfrak{B}|=\sum_{c\in \hom(H_1(\Sigma),G)} \langle c|$. The TQFT partition function with such a boundary condition is:
\begin{equation}
   Z(M;\mathfrak{B})\equiv \langle \mathfrak{B}| M,\phi \rangle =  \frac{|\hom(H_1(M),G)|}{|G|}.
    \label{untwisted-dw-neumann-bc}
\end{equation}
The result is independent of $\phi$, as expected.

As was shown in Section \ref{sec:general-boundary}, without loss of generality, one can consider $\Sigma = (T^2)^{\sqcup g}$. To see the agreement of (\ref{untwisted-dw-neumann-bc}) with the general analysis done in Section \ref{sec:top-bc}, we need to rewrite the formula in terms of the triple $(H_1(Y), \lk, f)$, where, as before, $Y$ is the closed manifold obtained from $M$ by gluing back $g$ solid tori $V_1$ and $f=\iota_*$ is the map induced in the first homology by the inclusion $\iota:V_1^{\sqcup g}\hookrightarrow Y$ of these solid tori in $Y$.

First, we observe that $\hom(H_1(M),G) \cong H_2(M, \partial M; G)$ by the universal coefficient theorem and Poincaré duality.
Then, consider the long exact sequence for the relative homology of the pair $(Y, {U})$ with $G$-coefficients, $U=\iota(V_1^{\sqcup g})$ being the image of the solid tori in $Y$:
\begin{equation}
    \dots \to H_2({U};G) \to H_2(Y;G) \to H_2(Y,{U};G) \to H_1({U};G) \to H_1(Y;G) \to \dots
\end{equation}
Applying the universal coefficient theorem, we get:
\begin{equation}
    \dots \to 0 \xrightarrow{\hphantom{f\otimes G}} H_2(Y;G) \xrightarrow{\hphantom{f\otimes G}} H_2(Y,{U};G) \xrightarrow{\hphantom{f\otimes G}} G^g \xrightarrow{f \otimes G} H_1(Y) \otimes G \to \dots 
\end{equation}
From the exactness of the above sequence, we have:
\begin{equation}
    |H_2(Y,U;G)|=|H_2(Y;G)|\cdot |\ker(f\otimes G)|.
\end{equation}
Applying Poincar\'e duality and the excision theorem, it follows that $H_2(Y;G)\cong H^1(Y;G)\cong \hom(H_1(Y);G)$ and $H_2(Y,U;G)\cong H_2(M,\partial M;G)\cong H^1(M;G)\cong \hom(H_1(M),G)$.  Therefore:
\begin{multline}
    Z(M;\mathfrak{B})=|G|^{-1}|\hom(H_1(Y);G)|\cdot|\ker(f\otimes G)|=\\
    =|G|^{b_1(Y)-1}|\Tor H_1(Y)\otimes G|\cdot |\ker(f\otimes G)|.
    \label{part-fn-untwisted-dw-neumann-via-f}
\end{multline}

In terms of the general classification of abelian TQFTs, such a theory has the discriminant group $D=G\oplus G^*$, where $*$ stands for the Pontryagin dual, i.e. $G^*=\hom(G,\Q/\Z)$, and the quadratic form $q:D\rightarrow \Q/\Z$ is given by $q(a,\alpha)=\alpha(a)$. Finally we remark that the considered Neumann boundary condition corresponds to the Lagrangian subgroup $\mathfrak{L}=G\subset D$. In Section \ref{sec:cyclicDW} we will show how the result above can be reproduced via the surgery formula (\ref{abelian-TQFT-partition-function-surgery-formula}) for abelian TQFTs.

\subsubsection{Untwisted Dijkgraaf-Witten TQFT with a cyclic gauge group}
\label{sec:cyclicDW}
An arbitrary finite abelian group is isomorphic to a direct sum of cyclic groups: $G\cong \bigoplus_i \Z/N_i\Z$. Due to the factorization property (\ref{factorization-discriminant-group}), the partition function of a general untwisted abelian DW theory is then a product of partition functions of theories with cyclic gauge groups. 

In the rest of this section, we therefore consider the case of an untwisted DW theory with cyclic gauge group $G = \Z/N\Z$. The discriminant group is then given by $D=G\oplus G^*\cong \Z/N\Z\oplus \Z/N\Z$, with the quadratic function $q(u,v)=uv/N\mod 1,\;u,v\in \Z/N\Z$. Such a theory can be realized as a $U(1)^2$ Chern-Simons theory with the level matrix:
\begin{equation}
    K_{\text{DW}} = 
    \begin{pmatrix}
        0 &  N \\
        N & 0
    \end{pmatrix}.
\end{equation}
Indeed, we have:
\begin{equation}
    D = \coker K_{\text{DW}} = \Z/N\Z \oplus \Z/N\Z,
\end{equation}
and:
\begin{equation}
    q(u,v) =  1/2 \, (u,v)^T K_{\text{DW}}^{-1} (u,v) \mod 1 = uv/N \mod 1.
\end{equation}

We can then compute the partition function on a closed manifold $Y$ using the formula (\ref{abelian-CS-partition-function}) for a general abelian CS theory. First of all, notice that the theory is anomaly-free since the signature of $K_{\text{DW}}$ is vanishing. We recover the usual formula (\ref{eqn:DWpartitionFunctionClosedMfld}) for untwisted Dijkgraaf-Witten TQFT as follows:
\begin{align}
    \notag
    Z(Y) &= \frac{N^{b_1(Y)-1}}{|\Tor H_1(Y)|} \sum_{x,y \in \Tor H_1(Y)} e^{2\pi i N \lk(x,y)} \\ \notag
    & = N^{b_1(Y) - 1} |\{ x \in \Tor H_1(Y) \colon N x = 0 \mod 1\}| \\
    & = N^{b_1(Y) - 1}|\Tor H_1(Y) \otimes \Z/N\Z|=|G|^{-1}|\hom (\pi_1(Y),G)|.
\end{align}

More generally, given a bordered manifold $(M,\phi)$ with boundary $(T^2)^{\sqcup g}$, we adopt the description in terms of a pair of framed links $(\mathcal{J}, \mathcal{J}')$ in $S^3$. Then, the general surgery formula (\ref{abelian-TQFT-partition-function-surgery-formula}) for the partition function gives us:
\begin{equation}
    \langle c |M,\phi\rangle  = Z(\mathcal{J}, \mathcal{J}'; D, q;c) 
    = \frac{1}{N^{m+1}} \sum_{x \in \Z^m \otimes (\Z/N\Z)^2} \exp [ - 2\pi i (L \otimes q)(x,c) ].
\end{equation}
Recall that the linking matrix $L$ has the following form:
\begin{equation}
    L = \begin{pmatrix}
        J &F \\
        F^T & J'
    \end{pmatrix}.
\end{equation}
Thus, taking $x = (u,v) \in (\Z/N\Z)^m \oplus (\Z/N\Z)^m$ and $c= (a, b) \in (\Z/N\Z)^g \oplus (\Z/N\Z)^g$, we observe that:
\begin{equation}
    (L \otimes q) ((u,v),(a,b)) = \frac{1}{N} \left({u}^T J {v} + {a}^T J' {b} + {v}^T F {a} + {u}^T F {b} \right)\mod 1.
    \label{Lq-expression-untwisted-dw}
\end{equation}

Next, we impose a topological boundary condition. As we have already stressed, Dijkgraaf-Witten TQFTs always admit topological boundary conditions. In particular, in the untwisted case, one can always choose a Lagrangian subgroup $\mathfrak{L} \subseteq D=G\oplus G^*$ of the form is $\mathfrak{L} = G$, which is equivalent to the Neumann boundary condition considered in Section \ref{sec:untwisted-dw}. In the case of the cyclic gauge group, the corresponding boundary state then reads:
\begin{equation}
\langle \mathfrak{L}| = \sum_{a\in (\Z/N\Z)^g}\langle (a,0)|,  
\label{lagrangian-state-cyclic-dw}
\end{equation}
and the quadratic form (\ref{Lq-expression-untwisted-dw}) can be reduced to:
\begin{equation}
    (L \otimes q) ((u,v),(a,0)) = \frac{1}{N} \left({u}^T J {v} +  {v}^T F {a}  \right)\mod 1.
\end{equation}

The next step is applying a more general reciprocity formula for Gauss sums proven by Deloup and Turaev in~\cite{deloup2007reciprocity}. Doing so, we find:
\begin{multline}
     \langle (a,0) |M,\phi\rangle  =  \, \frac{1}{N |\det{\tilde{J}}|} \sum_{x \in (\Z/N\Z)^{b_1(Y)}} \exp \left[ \frac{2 \pi i}{N} x^T F_{\text{free}} \, a  \right] \times \\
       \sum_{r,s \in \coker{\tilde{J}}} \exp \left[ 2\pi i ( N r^T \tilde{J}^{-1} s + r^T \tilde{J}^{-1} F_{\text{tor}} \, a ) \right],
\end{multline}
where $\tilde{J}$ is the non-singular matrix such that $P^T J P = \tilde{J} \oplus \mathbf{0}_{b_1(Y)}$ for some unimodular matrix $P$, and $F_{\text{tor}}\oplus F_{\text{free}}=P^TF$. Moreover, the result does not depend on $F_{\text{tor}}a$ but rather on its image $\theta_\text{tor}$ in  $\coker{\tilde{J}} \otimes \Z/N\Z$. Observe that $\coker{\tilde{J}}$ and $(\tilde{J}^{-1} \mod 1)$ are respectively the torsion part of the first homology group and the linking form of the closed 3-manifold $Y$ obtained by gluing back $g$ solid to $M$ according to the map $\phi$.

The first sum forces $\theta_\text{free}\coloneqq(F_{\text{free}}a \mod N)=0\in \Z^{b_1(Y)}\otimes \Z/N\Z$, otherwise the partition function vanishes. Similarly, the second sum forces $\theta_\text{tor} = 0 \in \Tor H_1(Y) \otimes \Z/N\Z$. We obtain:
\begin{equation}
     \langle (a,0) |M,\phi\rangle   = N^{b_1(Y) - 1} \,|\Tor H_1(Y) \otimes \Z/N\Z|\,  \delta(\theta\equiv \theta_\text{tor}\oplus\theta_\text{free} = 0).
\end{equation}

Recall that $F$, composed with the projection to $\coker J$, gives the map $f \colon H_1(V_1^{\sqcup g})\cong \Z^g \to H_1(Y)$. Therefore, the vanishing condition on $\theta$ is equivalent to the request that $a$ belongs to the kernel of $f \otimes \Z/N\Z$. 

Finally, evaluating the partition function on the boundary state (\ref{lagrangian-state-cyclic-dw}) yields:
\begin{equation}  \label{eqn:DWpartitionFunctionExample}
    Z(M;\mathfrak{L})\equiv \langle \mathfrak{L}|M,\phi\rangle =N^{b_1(Y) -1} \,|\Tor H_1(Y)\otimes \Z/N\Z| \cdot|\ker(f \otimes \Z/N\Z)|,
\end{equation}
which is in agreement with (\ref{part-fn-untwisted-dw-neumann-via-f}).

\section{Summing over topologies}
\label{sec:sum-over-topologies}

One of the main goals of this work is to analyze the convergence properties, as well as the dependence on various input parameters, of the sum over topologies, rewritten as a sum over certain algebraic data (\ref{top-gravity-sum-over-alg-data}). In the previous section, in the case when the same topological boundary condition is imposed on all connected components of the boundary and the 3-manifold itself is connected, we described what this algebraic data is and the dependence of the TQFT partition function on it. Namely, it is a triple $\mathcal{A}\equiv (A,\ell,f:\Z^g\rightarrow A)$, when the 3-manifold has non-empty boundary, or a couple $\mathcal{A}\equiv (A,\ell)$ when it is closed. Note that we want to formally distinguish the cases when the boundary is a non-empty collection of spheres, that is $g=0$ and $f$ is the unique map from the trivial group, and when the boundary is empty, that is, when $f$ is absent.  When the 3-manifold is disconnected, the algebraic data $\mathcal{A}$ is then a collection of such triples and couples for each connected component. Namely, in (\ref{top-gravity-sum-over-alg-data}) let $Y=\sqcup_{i}Y^{(i)}$, where each $Y^{(i)}$ is connected. Respectively, we have $\iota=\sqcup_i \iota^{(i)}$ with $\mathrm{Im}\,\iota^{(i)}\subset Y^{(i)}$. Then $\mathcal{A}=\{\mathcal{A}^{(i)}\}_i$ where $\mathcal{A}^{(i)}$ is a triple/couple corresponding to the pair $(Y^{(i)},\iota^{(i)})$. That is $\mathcal{A}^{(i)}=(H_1(Y^{(i)}),\lk(Y),\iota_*^{(i)}|_{H_1})$. In order to make things as explicit and computable as possible, we will make various assumptions on the summation weights $\mu_\text{alg}(\mathcal{A})$. From now on we will suppress the ``alg'' subscript and will simply use $\mu(\mathcal{A})\equiv \mu_\text{alg}(\mathcal{A})$.

The first important assumption, which is already implicitly made in writing (\ref{top-gravity-sum-over-alg-data}), is that  $\mu(\mathcal{A})$ is, as a function of the algebraic data $\mathcal{A}$, the same for different choices of $\Sigma$ compatible with that algebraic data. In particular, for a contribution of a connected 3-manifold, it depends only on the total genus of $\Sigma$: that is, it is the same for connected $\Sigma=\Sigma_{g}$ and disconnected $\Sigma=\Sigma_{g_1}\sqcup \Sigma_{g_2}\sqcup\ldots$ with $g=g_1+g_2+\ldots$.  By default, we will also assume that $\mu(\mathcal{A})\geq 0$, so that $\mu$ can be understood as a choice of measure.

Next, we make the natural assumption that the measure factorizes with respect to the disjoint union, up to the automorphism factor corresponding to the permutation of the closed connected components:
\begin{equation}
    \mu(\mathcal{A})=\frac{1}{\prod_{m}N_m!}\,\prod_i \mu(\mathcal{A}^{(i)}).
    \label{eqn:measure-factorizes}
\end{equation}
Here $N_m$ are the numbers of closed connected components with the same algebraic data, that is, the numbers of repeating triples in $\mathcal{A}$. Such a factor is not introduced for components with boundary, because the components of $\Sigma$ are considered ordered. This assumption is along the lines of what has been previously considered in the literature (see e.g. \cite{Banerjee:2022pmw}).

\subsection{Definition and basic properties}
\label{sec:cosmologicalPartitionFunction}

With the assumptions above, for the topological boundary condition corresponding to a Lagrangian subgroup $\mathfrak{L}\subset D$, the sum (\ref{top-gravity-sum-over-alg-data}) for general $\Sigma$ can be expressed through the sums:
\begin{multline}
\mathcal{Z}_g(D,q;\mathfrak{L};\mu)\coloneqq
\\
\sum_{A\in \left\{\substack{\text{finitely gen.}\\\text{ab. groups}}\right\}/\sim}\;\;
\sum_{\ell\in\left\{\substack{\text{bil. forms}\\\text{on }\Tor A}\right\}/\sim}
\;\;\sum_{f\in \hom(\Z^g,A)}\mu(A,\ell,f) \sum_{c\in \mathfrak{L}^g} Z(A,\ell,f;D,q;c)
\label{genus-g-sum-algebraic-explicit}
\end{multline}
 corresponding to contributions of \textit{connected} 3-manifolds with the boundary of total genus $g$. Namely, the original full sum for the general boundary $\Sigma$ can be expressed as a polynomial in $\mathcal{Z}_g$ for different $g$, see Section \ref{sec:general-distribution} for details. We will refer to $\mathcal{Z}_g$ as the \textit{connected} gravitational partition function. As before, in the formula above, the pair $(D,q)$ determines the bulk 3d abelian TQFT. The summand in the right-hand side is the partition function (\ref{partition-function-alg-dependence}).

 It is worth clarifying the summation range and order. The most external sum is over finitely generated abelian groups $A$, up to isomorphism. We order such groups first by their rank (a non-negative integer), and then by the size of the torsion subgroup $\Tor A\subset A$. With the rank and the size of the torsion subgroup fixed, there is a finite number of isomorphism classes, so any order can be chosen there. For a fixed group $A$, one then sums over bilinear forms $\ell:\Tor A\otimes \Tor A\rightarrow \Q/\Z$, up to the equivalence relation described in Section \ref{sec:finiteForms}. Finally, for a fixed pair $(A,\ell)$, the summation is performed over $f\in \hom(\Z^g,A)\cong A^g$. We can order this set lexicographically, using the already chosen order on $A$. Note that, in principle, the summation over $\hom(\Z^g,A)$ leads to overcounting, as there may be residual automorphisms of $A$ that preserve $\ell$ (the bilinear form on the torsion subgroup $A$) but relate different choices of $f$. For different $f$ that can be related by such an automorphism, the partition function $Z(A,\ell,f;D,q;c)$ is necessarily the same. This is not an issue for our purposes. However, it implies that the actual dependence on the measure $\mu$ is only through the partial sums $\sum_{f\text{ related by automorphisms}}\mu(A,\ell,f)$, that is the measures of certain subsets of the set of all possible triples $(A,\ell,f)$. Note that $\mu(A,\ell,\cdot)$ for a fixed pair $(A,\ell)$ can be understood as a measure on $A^g$. In order to explicitly avoid the overcounting, one would need to classify the triples $(A,\ell,f)$ (instead of just pairs $(A,\ell)$) up to automorphisms. This goes outside the scope of this paper.

We will now derive sufficient   on the measure $\mu$ for the series to be convergent, and also for the dependence on the genus $g$ to have the form of a Dirichlet series:
\begin{equation}
   \mathcal{Z}_g(D,q;\mathfrak{L};\mu)=\sum_{n=1}^{\infty}b_n\,n^{g},
   \label{Dirichlet-series-genus-g-general}
\end{equation}
where the coefficients $b_n$ are $g$-independent. The latter property is crucial and will be used in Sections \ref{sec:general-distribution} and \ref{sec:holography}.

Let us focus on the sums over $f$ and $c$ for a fixed pair $(A,\ell)$. As was argued in Section \ref{sec:top-bc}, the TQFT partition function depends on $c\in \mathfrak{L}^g \subset D^g$ and $f\in \hom(\Z^g,A)\cong A^g$ only through $\theta\equiv \sum_{i=1}^gc_i\otimes f_i\in \mathfrak{L}\otimes A$, so one can write:
\begin{equation}
    Z(A,\ell,f;D,q,c)\equiv Z(A,\ell;D,q;\sum_{i=1}^g c_i\otimes f_i).
\end{equation}
Therefore, the sum over $f$ and $c$ in (\ref{genus-g-sum-algebraic-explicit}) for a given pair $(A,\ell)$ can be rewritten as follows:
\begin{multline}
    \mathcal{Z}_g^{(A,\ell)}(D,q;\mathfrak{L};\mu) = \sum_{f\in \hom(\Z^g,A)}\mu(A,\ell,f) \sum_{c\in \mathfrak{L}^g} Z(A,\ell,f;D,q;c)=
    \\
    \sum_{\substack{f\in A^g\\c\in \mathfrak{L}^g}} \sum_{\theta\in \mathfrak{L}\otimes A}\sum_{\alpha\in (\mathfrak{L}\otimes A)^*}\frac{\mu(A,\ell,f)}{|\mathfrak{L}\otimes A|}\,Z(A,\ell;D,q;\theta) \, e^{2\pi i\alpha\left(\theta-\sum_{i=1}^gc_i\otimes f_i\right)}=\\
    \sum_{\substack{\alpha\in (\mathfrak{L}\otimes A)^*\\\cong \hom(A,\mathfrak{L}^*)}}\frac{|\mathfrak{L}|^g\,\mu(A,\ell,(\ker \alpha)^g)}{|\mathfrak{L}\otimes A|} \sum_{\theta\in \mathfrak{L}\otimes A}Z(A,\ell;D,q;\theta) \, e^{2\pi i\alpha\left(\theta\right)},
    \label{sum-fixed-A-ell}
\end{multline}
where:
\begin{equation}
    \mu(A,\ell,S^g)\coloneqq\sum_{f\in S^g\subset A^g}\mu(A,\ell;f)\equiv \mu(\{(A,\ell,f)\}_{f\in S^g}),\qquad S\subseteq A,
\end{equation}
is the measure of the subset $\{(A,\ell,f)\}_{f\in S^g}$. In the last equality, $\ker \alpha$ is understood as the kernel of the map $\alpha:A\rightarrow \mathfrak{L}^*$. From the above, one can see that $\mu(A,\ell,\cdot)$, understood as a measure on $A^g$, actually only needs to be defined on the $\sigma$-algebra generated by the subgroups of the form $S^g\subseteq A^g$.

First, let us note that the sum:
\begin{equation}
    Z_\mathfrak{L}(A,\ell;D,q;\alpha)\coloneqq \frac{|\mathfrak{L}|}{|\mathfrak{L}\otimes A|}\sum_{\theta\in \mathfrak{L}\otimes A}Z(A,\ell;D,q;\theta) \,e^{2\pi i\alpha\left(\theta\right)}
\end{equation}
is the partition function of the 3d TQFT with gauged non-anomalous 1-form symmetry $\mathfrak{L}\subset D$, since $\theta\in \mathfrak{L}\otimes A\cong H_1(Y;\mathfrak{L})\cong H^2(Y;\mathfrak{L})$. It has dual 0-form symmetry $\mathfrak{L}^*$, and  $\alpha \in\hom(A,\mathfrak{L}^*)\cong H^1(Y;\mathfrak{L}^*)$ plays the role of its background gauge field (see e.g. \cite{Furlan:2024pvy} for the general description of gauging higher form symmetries in TQFTs in terms of cohomology groups). Since a Lagrangian subgroup is gauged, the resulting theory is trivial in the absence of the background 0-form field. This trivial theory can be understood as the theory on the other side of a boundary with topological boundary condition corresponding to $\mathfrak{L}$. Considered in the presence of a background 0-form symmetry $\mathfrak{L}^*$, the theory is an invertible TQFT (an SPT in condensed matter terminology), and we have:
\begin{equation}
    |Z_\mathfrak{L}(A,\ell;D,q;\alpha)|=1.
    \label{gauged-1-form-absolute-value}
\end{equation}
We remark that gauging a Lagrangian subgroup of 1-form symmetry has also been considered in \cite{Benini:2022hzx} in the context of 3d quantum gravity, as an alternative to summing over topologies. 

The property (\ref{gauged-1-form-absolute-value}) can be used to obtain an absolute bound for the total sum in (\ref{sum-fixed-A-ell}). Assuming the positivity of the measure, we have:
\begin{multline}
    \left|\mathcal{Z}_g^{(A,\ell)}(D,q;\mathfrak{L};\mu)\right|\leq 
    |\mathfrak{L}\otimes  A|\,|\mathfrak{L}|^{g-1}\,\mu(A,\ell,A^g)=\\
        |\mathfrak{L}\otimes \Tor A|\,|\mathfrak{L}|^{g-1+\mathrm{rank}\,A}\,\mu(A,\ell,A^g)\\
        \leq
        |\Tor A|^{I(D)}\,|D|^\frac{g-1+\mathrm{rank}\,A}{2}\,\mu(A,\ell,A^g),
\end{multline}
where $I(D)$ is the minimal number of generators of the group $D$ (i.e. the minimal number of cyclic group factors) and we used the facts that $I(\mathfrak{L})\leq I(D)$ and $|D|=|\mathfrak{L}|^2$. Now, assume that the measure satisfies the following bound:
\begin{equation}
    \mu(A,\ell,A^g)\equiv \mu(\{(A,\ell,f\}_{f\in A^g})\leq C\,\lambda^{\mathrm{rank}\,A}\,|\Tor A|^{-s},
    \label{measure-sufficient-bound}
\end{equation}
for some $s\in \R$, $\lambda \in \R_{\geq 0}$ and $C>0$. This immediately implies the following absolute bound on the series (\ref{genus-g-sum-algebraic-explicit}):
\begin{multline}
    |\mathcal{Z}_g(D,q;\mathfrak{L};\mu)|\leq \sum_{b_1\geq 0}\sum_{n\geq 1}C\,|D|^\frac{g+b_1-1}{2}\,\lambda^{b_1}\,a(n)\,n^{I(D)-s}=
    \frac{C\,|D|^\frac{g-1}{2}}{1-\lambda|D|^{\frac{1}{2}}}\,\Xi\left(s-I(D)\right),
\end{multline}
where $a(n)$ are the coefficients in the generating Dirichlet series (\ref{dir-gen-series-bil-forms}). The sums are therefore absolutely convergent for $s>I(D)+1$ and $\lambda<|D|^{-1/2}$. These conditions depend on the choice of group $D$, that is, the choice of the 3d abelian TQFT. To ensure convergence for all $D$ with the same measure $\mu$, it is sufficient to require that for any $\lambda$ and $s$, there exists $C$ such that (\ref{measure-sufficient-bound}) is satisfied for any finite abelian group $A$. That is, $\mu(A,\ell,A^g)\rightarrow 0$ faster than exponentially as $\mathrm{rank}\,A\rightarrow \infty$ and faster than a power law as $|\Tor A|\rightarrow\infty$.

Having obtained sufficient conditions for the absolute convergence of the sum (\ref{genus-g-sum-algebraic-explicit}), we will now present sufficient conditions for the result to be of the form of a Dirichlet series (\ref{Dirichlet-series-genus-g-general}). First, assume the measure factorizes symmetrically with respect to the $g$ boundary tori, that is:
\begin{equation}
    \mu(A,\ell,f)=\mu(A,\ell)\prod_{i=1}^g\nu(A,\ell,f_i),
\end{equation}
for some $f$-independent $\mu(A,\ell)$ and $\nu(A,\ell,f_i)$ depending on a single component $f_i\in A$. It then follows that:
\begin{equation}
    \mu(A,\ell,S^g)=\mu(A,\ell)\,\nu(A,\ell,S)^g,
\end{equation}
where:
\begin{equation}
    \nu(A,\ell,S)\coloneqq\sum_{\phi\in S\subset A}
    \nu(A,\ell,\phi).
\end{equation}
The form $(\ref{Dirichlet-series-genus-g-general})$ is then achieved if:
\begin{equation}
     \nu(A,\ell,S)\in \Z_{\geq 0},\qquad \forall S\subseteq A.
     \label{measure-integrality-condition}
\end{equation}
The coefficients of the Dirichlet series are then given by:
\begin{equation}
    b_n=\sum_{(A,\ell)} \frac{\mu(A,\ell)}{|\mathfrak{L}\otimes A|}\sum_{\substack{\theta\in \mathfrak{L}\otimes \Tor A\\\alpha\in \hom(A,\mathfrak{L}^*):\;\nu(A,\ell,\ker\alpha)|\mathfrak{L}|=n}}Z(A,\ell;D,q;\theta) \, e^{2\pi i\alpha\left(\theta\right)},
    \label{Dirichlet-coefficients-general-explicit}
\end{equation}
where we used the fact that the partition function vanishes unless $\theta\in \mathfrak{L} \otimes \Tor A$ \cite{deloup2001abelian}. Note that $b_n=0$ unless $n$ is a multiple of $|\mathfrak{L}|$. 

\subsection{An example of measure}
\label{sec:measure-example}

In this section, we will present some additional conditions on the measure for which the sum over topologies can be simplified to the point that it can be evaluated for a simple enough choice of the 3d TQFT.

The integrality condition (\ref{measure-integrality-condition}) can be satisfied, for example, if:
\begin{equation}
    \nu(A,\ell,\phi)=\left\{
    \begin{array}{@{}cl}
         1 & \text{if } \phi \in \Tor A,  \\
         0 & \text{otherwise},
    \end{array}
    \right.
    \label{nu-simple-choice}
\end{equation}
so that:
\begin{equation}
    \nu(A,\ell,S)=|\Tor S|\in \Z_{\geq 0}.
\end{equation}
In this case moreover, since $\Tor\ker\alpha=\ker \alpha|_{\Tor A}$, it is possible to reduce the sum over all finitely generated abelian groups $A\cong \Z^{b_1}\oplus \Tor A$ to just finite abelian groups $\tilde{A}=\Tor A$:
\begin{multline}
    \mathcal{Z}_g(D,q;\mathfrak{L};\mu)=\\
    \sum_{\tilde{A},\ell}\sum_{\vphantom{\tilde{A}}b_1\geq 0}\mu(\Z^{b_1}\oplus \tilde{A},\ell) \,|D|^{\frac{b_1}{2}}
     \sum_{\theta\in \mathfrak{L}\otimes \tilde{A}}\sum_{\substack{\alpha\in (\mathfrak{L}\otimes \tilde{A})^*\\\cong \hom(\tilde{A},\mathfrak{L}^*)}}\frac{|\mathfrak{L}|^g\,|\ker{\alpha}|^g}{|\mathfrak{L}\otimes \tilde{A}|}\,Z(\tilde{A},\ell;D,q;\theta) \,e^{2\pi i\alpha\left(\theta\right)},
     \label{sum-over-topologies-tor-measure}
\end{multline}
where we also used the factorization property of the partition function (\ref{factorization-boundary-preserving-formula-alg}) with $Z(\Z,0;D,q;0)\equiv Z(S^2\times S^1,\emptyset;D,q;0)=1$.

The sum can be further evaluated if we assume the following factorization property of the measure:
\begin{equation}
    \mu(\mathbb{Z}^{b_1}\oplus \tilde{A})=
    \mu(\Z,0)^{b_1}\prod_{p\in \text{primes}}\mu(A_p,\ell_p),
\end{equation}
where $A_p$ is the $p$-torsion subgroup of: 
\begin{equation}  A=\Z^{b_1}\oplus\tilde{A}=\Z^{b_1}\oplus\bigoplus_{p\in\text{primes}}A_p,
\end{equation}
and $\ell_p$ is the corresponding component of the bilinear form $\ell=\bigoplus_p\ell_p$ on $\tilde{A}=\Tor A$, as in Section \ref{sec:finiteForms}.

The data $(D,q)$ describing the 3d abelian TQFT, and the Lagrangian subgroup $\mathfrak{L}\subset D$ can also be split with respect to the prime numbers. Namely, the discriminant group $D$ decomposes into its Sylow $p'$-subgroups $D=\bigoplus_{p'} D_{p'}$ and, in a similar fashion, $\mathfrak{L} = \bigoplus_{p'} \mathfrak{L}_{p'}$. Notice that all but a finite number of $D_{p'}$ and $\mathfrak{L}_{p'}$ are trivial groups. This splitting is orthogonal with respect to the quadratic form $q$, meaning that it decomposes as $q=\bigoplus_{p'} q_{p'}$. We then decompose $\theta\in \tilde{A}\otimes \mathfrak{L}$ respectively as $\theta=\sum_{p,p'}\theta_{p,p'}$, with $\theta_{p,p'}\in A_p\otimes \mathfrak{L}_{p'}$. Note that  $A_p\otimes \mathfrak{L}_{p'}=0$ unless $p=p'$.

Now, making use of the factorization properties of the TQFT partition function reviewed in Section \ref{sec:factorization}, we have:
\begin{multline}
    Z( \tilde{A},\ell;D,q;\theta) = \prod_{p':D_{p'}\neq 0}|D_{p'}|^{-\frac{1}{2}}\prod_{p:A_p\neq 0} |D_{p'}|^{\frac{1}{2}} \,Z(A_p,\ell_p;D_{p'},q_{p'};\theta_{p,p'})=\\
    \prod_{p:D_p\neq 0} Z(A_p,\ell_p;D_{p},q_{p};\theta_{p,p}),
\end{multline}
where we used that for $p\ne p'$, we have:
\begin{equation}
    {Z}(A_p,\ell_p;D_{p'},q_{p'};\theta_{p,p'}) = Z(A_p,\ell_p;D_{p'},q_{p'};0)=|D_{p'}|^{-\frac{1}{2}},
\end{equation}
where the second equality follows from the statement that the closed-manifold partition function of a TQFT admitting topological boundary conditions satisfies ${Z}(A, \ell; D, q;0) = {Z}(0, 0; D, q)=|D|^{-\frac{1}{2}}$, whenever $|A|$ and $|D|$ are coprime (see e.g. \cite{Kaidi:2021gbs}).

Using both the factorization of the measure and the factorization of the TQFT partition function, the sum (\ref{sum-over-topologies-tor-measure}) can also be decomposed with respect to the primes:
\begin{multline}
    \mathcal{Z}_g(D,q;\mathfrak{L};\mu)=
    \frac{1}{1-\mu(\Z,0)|D|^{\frac{1}{2}}} \cdot
    \prod_{p\in\text{primes}}
    \sum_{\substack{(A_p,\ell_p):\\\text{finite abelian}\\p\text{-groups}\\\text{with pairings}}}\mu(A_p,\ell_p) \times \\
     \sum_{\theta_{p}\in \mathfrak{L}_p\otimes {A}_p}\sum_{\substack{\alpha_p\in (\mathfrak{L}_p\otimes {A}_p)^*\\\cong \hom({A}_p,\mathfrak{L}_p^*)}}\frac{|\mathfrak{L}_p|^g\,|\ker{\alpha_p}|^g}{|\mathfrak{L}_p\otimes {A_p}|}\,Z(A_p,\ell_p;D_p,q_p;\theta_{p}) \, e^{2\pi i\alpha_p\left(\theta_{p}\right)},
     \label{sum-over-topologies-tor-factorized-measure}
\end{multline}
having introduced $\theta_p \equiv \theta_{p,p}$.
We consider now a very specific choice of measure that satisfies the factorization property above, with $\nu$ as in (\ref{nu-simple-choice}) and:
\begin{equation}
    \label{eqn:explicitMeasure}
    \mu(A,\ell)=\lambda^{\mathrm{rank}A}\, |\Tor A|^{-s}.
\end{equation}
This choice moreover saturates the bound (\ref{measure-sufficient-bound}) with $C=1$ and $s$ there replaced with $s-g$. We then have:
\begin{multline}
    \mathcal{Z}_g(D,q;\mathfrak{L};\mu)=
    \frac{1}{1-\lambda|D|^{\frac{1}{2}}}\cdot\prod_{\substack{p\in\text{primes}\\ p\nmid |D|}}\Xi_p(s-g)\times
    \\
\prod_{\substack{p\in\text{primes}\\ p\mid |D|}}
        \sum_{\substack{(A_p,\ell_p):\\\text{finite abelian}\\p\text{-groups}\\\text{with pairings}}}|A_p|^{-s}
     \sum_{\theta_{p}\in \mathfrak{L}_p\otimes {A}_p}\sum_{\substack{\alpha_p\in (\mathfrak{L}_p\otimes {A}_p)^*\\\cong \hom({A}_p,\mathfrak{L}_p^*)}}\frac{|\mathfrak{L}_p|^g\,|\ker{\alpha_p}|^g}{|\mathfrak{L}_p\otimes {A_p}|}\,Z(A_p,\ell_p;D_p,q_p;\theta_{p}) \, e^{2\pi i\alpha_p\left(\theta_{p}\right)},
     \label{sum-over-topologies-specific-measure}
\end{multline}
where $\Xi_p$ is the generating function considered in Section \ref{sec:counting-groups-with-forms}. In particular, for the trivial TQFT with $D=0$ we have:
\begin{equation}
     \mathcal{Z}_g(0,0;0;\mu)=
    \frac{\Xi(s-g)}{1-\lambda}.
\end{equation}

In the case of untwisted Dijkgraff-Witten TQFT with Neumann boundary conditions, the expression~\eqref{sum-over-topologies-specific-measure} can be simplified further. Let $G$ the gauge group, so that $D=G\oplus G^*$ and $\mathfrak{L}=G$. The partition function of untwisted DW theory has the following simple form (see Section \ref{sec:cyclicDW}):
\begin{equation}
    Z(A,\ell;D,q;\theta)=\frac{|G\otimes A|}{|G|} \,\delta(\theta).
    \label{untwisted-DW-partition-function}
\end{equation}
We then have:
\begin{multline}
    \mathcal{Z}_g(D,q;\mathfrak{L};\mu)=
    \frac{1}{1-\lambda|D|^{\frac{1}{2}}}\cdot\prod_{\substack{p\in\text{primes}\\ p\nmid |G|}}\Xi_p(s-g)\times
    \\
\prod_{\substack{p\in\text{primes}\\ p\mid |G|}}|G_p|^{g-1}
        \sum_{\substack{(A_p,\ell_p):\\\text{finite abelian}\\p\text{-groups}\\\text{with pairings}}}|A_p|^{-s}
     \sum_{{\alpha_p\in \hom({A}_p,G_p^*)}}|\ker{\alpha_p}|^g.
     \label{sum-over-topologies-specific-measure-DW}
\end{multline}

Let us now consider the simplest non-trivial choice of the gauge group: $G=\Z/p\Z$ for some prime $p\neq 3$. The factor in (\ref{sum-over-topologies-specific-measure-DW}) corresponding to this particular prime reads:
\begin{multline}
  p^{g-1}
        \sum_{(A_p,\ell_p)}|A_p|^{-s}
     \sum_{{\alpha_p\in \hom({A}_p,\Z/p\Z)}}|\ker{\alpha_p}|^g=
     \\
      p^{g-1}
        \sum_{(A_p,\ell_p)}|A_p|^{-s}
     \left((|A_p|p^{-1})^g \,(p^{\mathrm{dim}_{\Z/p\Z}(A_p\otimes\Z/p\Z)}-1)+|A_p|^g\right)=\\
     p^{-1}(p^g-1)
       \sum_{(A_p,\ell_p)}|A_p|^{g-s}+ p^{-1}
       \sum_{(A_p,\ell_p)}|A_p|^{g-s} \,p^{\mathrm{dim}_{\Z/p\Z}(A_p\otimes\Z/p\Z)}=\\
       p^{-1}(p^g-1)\,\Xi_p(s-g)+ p^{-1}\prod_{m\geq 1}\frac{1+p^{1-m(s-g)}}{1-p^{1-m(s-g)}},
\end{multline}
where in the first equality we used the fact that for all maps $\alpha_p:A_p\rightarrow \Z/p\Z$, except the zero one, we have $|\ker{\alpha_p}|=|A_p|p^{-1}$ and the total number of maps is equal to $p^{\mathrm{dim}_{\Z/p\Z}(A_p\otimes\Z/p\Z)}$, and in the last equality we used a refined version of (\ref{counting-pnot3-groups}):
\begin{equation}
    \sum_{(A_p,\ell_p)}|A_p|^{-s}\,x^{\mathrm{dim}_{\Z/p\Z}(A_p\otimes\Z/p\Z)}=\prod_{m\geq 1}\frac{1+xp^{-ms}}{1-xp^{-ms}}.
    \label{sum-over-pnot3-groups-refined}
\end{equation}
Note that $\mathrm{dim}_{\Z/p\Z}(A_p\otimes\Z/p\Z)\equiv b$ in the decomposition:
\begin{equation}
    A_p=\bigoplus_{i=1}^b\Z/p^{m_i}\Z.
\end{equation}
We arrive at the following explicit expression for the connected gravitational partition function:
\begin{equation}
     \mathcal{Z}_g(D,q;\mathfrak{L};\mu)=
    \frac{\Xi(s-g)}{1-\lambda|D|^{\frac{1}{2}}} \cdot 
    \left(
  p^{-1}(p^g-1)+ p^{-1}\prod_{m\geq 1}\frac{1+p^{1-m(s-g)}}{1-p^{1-m(s-g)}}\,\frac{1-p^{-m(s-g)}}{1+p^{-m(s-g)}}
    \right).
\end{equation}

\section{Distribution of 2d TQFTs for general abelian 3d TQFTs}
\label{sec:general-distribution}

Naively applying the holographic principle, we would expect that summing over all the possible bulk topologies with a fixed boundary yields a new theory on the boundary itself. However, we should be careful in identifying the theory living on the boundary, since considering possibly disconnected bulks spoils the factorization property of the theory with respect to the disjoint union. Indeed, recall that one expects the partition function $Z$ of a theory to factorize in the following way:
\begin{equation}
    Z(M \sqcup N) = Z(M) \cdot Z(N).
\end{equation}
However, this property will fail to hold when applied to the sum over topologies, due to the presence of wormhole topologies, that is, connected bordisms between disconnected components of the possibly disconnected boundary. A proposed generalization of the holographic principle suggests promoting the dual theory to a statistical ensemble of theories. In the following, we will analyze the consequences of such a bulk-boundary correspondence.

Since we were interested in abelian 3d TQFTs on the bulk, we shall focus on ensembles of 2d theories. Moreover, we have already argued that the dual theories can be considered topological when appropriate boundary conditions are imposed. 
Thus, in this section, we aim to find some constraints on the boundary ensemble of theories, given an arbitrary abelian 3d TQFT on the bulk. We will see that the condition of being abelian will play a crucial role.

\subsection{Bulk-boundary correspondence}

Given a possibly disconnected closed and oriented 2-manifold $\Sigma$, we expect a correspondence:
\begin{equation}
\mathcal{ZG}(\Sigma;\mathcal{T}) \leftrightarrow \avg{\mathrm{T}_\xi(\Sigma)}_\xi,
\end{equation}
where $\mathcal{ZG}(\Sigma;\mathcal{T})$  is the ``gravity'' partition function with abelian 3d TQFT $\mathcal{T}$ in the bulk, while $\mathrm{T}_\xi$ denotes a family of 2d TQFTs on the boundary, parameterised by some data $\xi$. We use $\avg{\cdot}_\xi$ for the average over the statistical ensemble of theories $\mathrm{T}_\xi$.
For the gravity side, we will consider the sum over bulk topologies $\mathcal{ZG}(\Sigma;\mathcal{T}) =\langle \mathfrak{B}|\Sigma;\mu\rangle_\mathcal{T}$ as in (\ref{top-gravity-sum-over-alg-data}), in the case where the same topological boundary condition $\mathfrak{L}$ is imposed on all connected boundary components. Note that $\langle\mathfrak{B}| \equiv \bigotimes_i \langle\mathfrak{L}|^{(i)}$ implicitly depends on the number of connected components of $\Sigma$. In Section \ref{sec:sum-over-topologies} we analyzed its convergence and pointed out that it can be expressed through the connected gravitational partition functions (\ref{genus-g-sum-algebraic-explicit}).

We now address the problem of the lack of factorization of $\mathcal{ZG}$, which explains the need for a statistical ensemble of theories on the boundary. By lack of factorization, we mean that $\mathcal{ZG}$ is not multiplicative under taking a disjoint union of boundary 2-manifolds, hence it cannot be a partition function of a single 2d theory. 

Specifically, consider the sum over topologies $\mathcal{ZG}(\Sigma)$ and suppose that $\Sigma$ has two disconnected components $\Sigma = \Sigma_g \sqcup \Sigma_{g'}$, then $\mathcal{ZG}(\Sigma)$ will get contribution from two different kinds of terms, which are depicted in Figure~\ref{fig:wormholes}. 
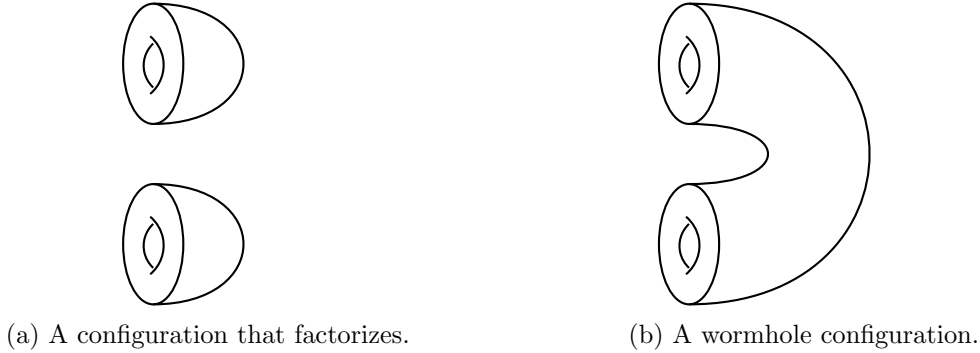
\begin{figure}[t]
\centering
\subcaptionbox{A configuration that factorizes.}[.4\textwidth]{
\scalebox{2}{\begin{tikzpicture}
\draw (0,0) ellipse (0.2 and 0.4);
\draw (-0.02,-0.2) arc (-50:50:0.25);
\draw (0,0.135) arc (133:227:0.20);
\draw (0,-0.4) .. controls (.8,-0.4) and (.8, 0.4) .. (0,0.4);
\draw (0,-1.2) ellipse (0.2 and 0.4);
\draw (-0.02,-0.2-1.2) arc (-50:50:0.25);
\draw (0,0.135-1.2) arc (133:227:0.20);
\draw (0,-0.4-1.2) .. controls (.8,-0.4-1.2) and (.8, 0.4-1.2) .. (0,0.4-1.2);
\end{tikzpicture}}
}%
\hspace{0.1\textwidth} 
\subcaptionbox{A wormhole configuration.}[.4\textwidth]{
\scalebox{2}{\begin{tikzpicture}
\draw (2.4,0) ellipse (0.2 and 0.4);
\draw (2.4-0.02,-0.2) arc (-50:50:0.25);
\draw (2.4,0.135) arc (133:227:0.20);
\draw (2.4,-1.2) ellipse (0.2 and 0.4);
\draw (2.4-0.02,-0.2-1.2) arc (-50:50:0.25);
\draw (2.4,0.135-1.2) arc (133:227:0.20);
\draw (2.4,-0.4) .. controls (3.1,-0.4) and (3.1, -0.8) .. (2.4,-.8);
\draw (2.4,-1.6) .. controls (4,-1.6) and (4, 0.4) .. (2.4, 0.4);
\end{tikzpicture}}
}%
\caption{Two different kind of terms contribute to $\mathcal{ZG}(T^2 \sqcup T^2)$.}
\label{fig:wormholes}
\end{figure}
 Any $3$-manifold in the sum will be a disjoint union of a number $n$ of closed components, while its boundary will be either bounding a single 3-manifold or it will be split into two different connected components. Then, the algebraic data $\mathcal{A}$ is actually a list $\{ \mathcal{A}^{(i)}\}_i$ of $n$ couples corresponding to closed components, together with either one or two triples corresponding to the manifolds with boundary. Therefore, recalling that we assumed the measure $\mu$ to factorize with respect to disjoint union, as in (\ref{eqn:measure-factorizes}), we obtain:
\begin{align}
    \mathcal{ZG}(\Sigma) = 
     \sum_{n=0}^\infty \frac{1}{n!} \mathcal{Z}_0^n \cdot \left(\mathcal{Z}_{g+g'} + \mathcal{Z}_g \,\mathcal{Z}_{g'} \right)
    =  e^{\mathcal{Z}_0} (\mathcal{Z}_{g+g'} + \mathcal{Z}_g \,\mathcal{Z}_{g'}),
    \label{eqn:lackOfFactorization}
\end{align}
where $\mathcal{Z}_g$ is the connected gravitational partition function (\ref{genus-g-sum-algebraic-explicit}), depending on the TQFT data $\mathcal{T} = (D,q)$ and on the topological boundary condition $\mathfrak{L}$ that we kept implicit to lighten the notation. As pointed out in Section \ref{sec:top-bc}, the partition function of an abelian TQFT on a connected 3-manifold with a topological boundary condition depends only on the total genus of the boundary component (up to a possibly decoupled 2d TQFT supported on the boundary). Assuming the same holds for the summation measure (as in Section \ref{sec:sum-over-topologies}), the connected gravitational partition function only depends on the total genus of the $2$-manifold. This is a key step in obtaining (\ref{eqn:lackOfFactorization}).

We remark that similar expressions can be obtained for $\Sigma$ with any number of connected components. The computation ultimately reduces to enumerating all possible ways of connecting the boundary components via $3$-dimensional bordisms.
In particular, the gravitational partition function of the vacuum is given by $\mathcal{ZG}(\emptyset) = e^{\mathcal{Z}_0}$.

On the other hand, if $\mathcal{ZG}$ were to factorize, we would expect $\mathcal{ZG}(\Sigma_g \sqcup \Sigma_{g'})$ to be equal to:
\begin{equation}
    \mathcal{ZG}(\Sigma_g) \, \mathcal{ZG}(\Sigma_{g'}) =  e^{2\mathcal{Z}_0}  \mathcal{Z}_{g} \, \mathcal{Z}_{g'}.
\end{equation}

Thus, the lack of factorization can be understood as a consequence of the presence of the so-called wormhole configurations. This peculiar choice of words is dictated by the interpretation of each connected component of the boundary as a parallel universe. Hence, a wormhole is a bulk configuration connecting two different universes. 

Nonetheless, one could be tempted to force the factorization of $\mathcal{ZG}$. However, we show that requiring factorization is equivalent to $\mathcal{Z}_g = 0$ for any closed surface of genus $g$. The fact that this condition is sufficient for factorization is straightforward since everything trivializes. On the other hand, assuming factorization for $\Sigma = (S^2)^{\sqcup k}$, one can easily check that it implies $\mathcal{Z}_0 = 0$. Then, we can extend this result to any other genus $g$ by requiring the factorization for $S^2 \sqcup \Sigma_g$. We stress that we have extensively used the property that the connected gravitational partition function depends only on the total genus $g$ of $\Sigma$.

We note that untwisted Dijkgraaf-Witten theories cannot satisfy this vanishing condition on $\mathcal{Z}_g$.
Hence, the lack of factorization is intrinsic to the sum over topologies and invalidates the usual holography principle. The way around it is that the bulk theory is not dual to a boundary theory, but rather to an ensemble of theories living on the boundary, as this would explain the failure of $\mathcal{ZG}$ to factorize. 

Let us formulate the bulk-boundary correspondence more precisely. Define $\zeta \coloneqq\mathcal{Z}_0$. Thus, the bulk-boundary correspondence is given by the following identification:
\begin{equation}
e^{-\zeta} \mathcal{ZG}(\Sigma) = \avg{\mathrm{T}_\xi(\Sigma)}_\xi.
\end{equation}
The extra factor on the left-hand side is required to satisfy the normalization condition for the distribution of the boundary theories:
\begin{equation}
    \avg{\mathrm{T}_\xi(\emptyset)}_\xi\equiv \avg{1}_\xi=1,
\end{equation}
where we used the fact that the partition function of any 2d TQFT on an empty spacetime is 1.

\subsection{Brief review of 2d TQFTs}

We shall explain in which way $\mathrm{T}_\xi$ constitutes a statistical ensemble of theories. To do so, we first recall some basic facts about 2d TQFTs.

It is well-known that (non-extended) 2d TQFTs are in one-to-one correspondence with commutative Frobenius algebras. Since we are dealing with unitary 2d TQFTs, the algebras get promoted to semi-simple Frobenius algebras. We briefly describe how this correspondence works. Suppose that $\mathrm{T}$ is a unitary 2d TQFT, then $\mathcal{V} \coloneqq \mathrm{T}(S^1)$ is a finite-dimensional vector space, which has the physical meaning of the Hilbert space of states on a circle. Moreover, $\mathcal{V}$ is a commutative Frobenius algebra, where the multiplication $m(\cdot, \cdot)$ is given by $\mathrm{T}$ applied to a pair of pants and the trace $\theta(\cdot)$ is given by $\mathrm{T}$ applied to a disk.

For a semi-simple algebra, we are free to choose a basis of idempotents $e_i$ for $i=1, \dots, N$. Next, denoting $\theta_i = \theta(e_i)$, one can show that:
\begin{equation}
\mathrm{T}_{N,\{\theta_i\}} (\Sigma_g) = \sum_{i=1}^N \theta_i^{1-g},
\label{2dTQFT-partition}
\end{equation}
for any Riemann surface $\Sigma_g$ of genus $g$.

Hence, we see that the data specifying our boundary $2$d TQFT is $\xi = (N,\{\theta_i\})$. Notice that $(N,\{\theta_i\})$ characterises the whole TQFT, even though we just need the value of its partition function on closed $2$-manifolds.

\subsection{Statistical ensemble of theories}

Now that the data defining the family of boundary $2$d TQFTs are clear, we focus on the meaning of this statistical ensemble. In the most general situation, we will have that the dimension of the Hilbert space $N$ is a random variable following a discrete probability distribution supported on the natural numbers and, similarly, we will have $N$ random variables $\{ \theta_i \}$ that follow the same, perhaps continuous, probability distribution supported on the reals. A priori, there is no reason to suppose that the $\theta_i$'s are independent from each other and $N$. Moreover, we will argue later that assuming such an independence condition is inconsistent with our bulk theory. For the moment, let us work in the above assumptions and denote the joint probability density function $f_{N,\theta_1, \dots, \theta_N}(n,\alpha_1,\dots, \alpha_n)$. We write:
\begin{equation}
f_{N,\theta_1, \dots, \theta_N}(n,\alpha_1,\dots, \alpha_n) = f_{\theta_1, \dots, \theta_N | N}(\alpha_1,\dots, \alpha_n | n) \, f_N(n).
\end{equation}
One could be tempted to naively guess that the $\theta_i$'s are independent of each other. However, this condition turns out to be too constraining, as we will discuss later. Nevertheless, without loss of generality (because the partition functions (\ref{2dTQFT-partition}) are symmetric in $\theta_i$), we will assume that distribution $f_{\theta_1, \dots, \theta_N | N}(\alpha_1,\dots, \alpha_n | n)$ is symmetric with respect to to the permutations of $\theta_i$.

The average of the partition functions of the boundary theories on a connected Riemann surface $\Sigma_g$ reads:
\begin{align}
\notag
\avg{\mathrm{T}_\xi(\Sigma_g)}_\xi &= \sum_{n= 0}^\infty \int d \alpha_1 \cdots d \alpha_n \,f_{N,\theta_1, \dots, \theta_N}(n,\alpha_1,\dots, \alpha_n) \,\mathrm{T}_{n, \{\alpha_i\}}(\Sigma_g) \\
& = \mathbb{E}[N \theta^{1-g}],
\end{align}
where $\theta \equiv \theta_i$ for some $i$.

For a disjoint union of two Riemann surfaces, $\Sigma = \Sigma_g \sqcup \Sigma_{g'}$, we have:
\begin{align}
\notag
\avg{\mathrm{T}_\xi(\Sigma_g \sqcup \Sigma_{g'})}_\xi 
& = \sum_{n= 0}^\infty f_N(n) \int \left( \prod_{i=1}^n d \alpha_i \right) \, f_{\theta_1, \dots, \theta_N | N}(\alpha_1,\dots, \alpha_n | n) \sum_{j=1}^n \alpha_j^{1-g} \sum_{k=1}^n \alpha_k^{1-g'} \\ \notag
& \
\begin{aligned}
 =  \sum_{n= 0}^\infty f_N(n) \biggl( \sum_{j \ne k} \int d\alpha_j  \,d\alpha_k \, f_{\theta, \theta' | N}&(\alpha_j,\alpha_k | n)\, \alpha_j^{1-g} \, \alpha_k^{1-g'}  \\ 
 & + \sum_{j}  \int d\alpha_j \, f_{\theta | N}(\alpha_j | n) \, \alpha_j^{2-g-g'}  \biggr)
 \end{aligned} \\
& = \mathbb{E}\left[ (N^2 - N)  \theta^{1-g} \theta'^{1-g'} \right] + \mathbb{E}[ N \theta^{2-g-g'}],
\end{align}
where $\theta \equiv \theta_i$ and $\theta' \equiv \theta_j$  for some $i\neq j$, and we use
\begin{equation}
    f_{\theta_{i_1},\ldots,\theta_{i_k}|N}(\alpha_{i_1},\ldots,\alpha_{i_k}|n):=\int \prod_{i\in \{1,\ldots,n\}\setminus \{i_1,\ldots,i_k\}} \hspace{-3em}d\alpha_i \;\;f_{\theta_1,\ldots,\theta_N}(\alpha_1,\ldots,\alpha_n|n)
\end{equation}
to denote the reduced probability densities.

The partition function of a disjoint union of $k$ Riemann surfaces is hard to express in a closed form. Nonetheless, it will be useful to manipulate its expression though we postpone the discussion until the next section.

We conclude with another simple example. Consider the disjoint union of $k$ tori:
\begin{align}
\notag
\avg{\mathrm{T}_\xi((T^2)^{ \sqcup k })}_\xi 
& =  \sum_{n= 0}^\infty f_N(n) \int \left( \prod_{i=1}^n d \alpha_i \right) f_{\theta_1, \dots, \theta_N | N}(\alpha_1,\dots, \alpha_n | n)  \prod_{l=1}^k \sum_{j=1}^n 1 \\
\notag
& =  \sum_{n= 0}^\infty f_N(n) \, n^k \\
& = \mathbb{E}[N^k].
\end{align}
Notice that such configurations yield all the statistical moments of the random variable $N$.

\subsection{The abelian condition}
\label{sec:abelian-condition}

It turns out that the bulk theory being abelian is a strong condition on the ensemble of dual theories. We introduce the \textit{abelian condition} as the set of constraints that the ensemble of 2d TQFTs has to satisfy to be consistent with an abelian theory on the bulk.

The crucial property of an abelian 3d TQFT is that it does not distinguish between disjoint unions and connected sums on the boundary (see Section \ref{sec:top-bc}). Hence (assuming the summation weights also respect this property, see Section \ref{sec:sum-over-topologies}), the connected gravitational partition function $\mathcal{Z}_g$ only depends on the total genus of the boundary surface.

For example, we deduce the following equations from the disjoint union of two Riemann surfaces and the disjoint union of $k$ tori, respectively:
\begin{align}
\avg{\mathrm{T}_\xi(\Sigma_g \sqcup \Sigma_{g'})}_\xi &= \avg{\mathrm{T}_\xi(\Sigma_g)}_\xi \, \avg{\mathrm{T}_\xi(\Sigma_{g'}) }_\xi+ \avg{\mathrm{T}_\xi(\Sigma_{g+g'} ) }_\xi, \\
\avg{\mathrm{T}_\xi((T^2)^{\sqcup k})}_\xi &= B_k\left( \avg{\mathrm{T}_\xi(T^2)}_\xi, \avg{\mathrm{T}_\xi(\Sigma_2)}_\xi, \dots, \avg{\mathrm{T}_\xi(\Sigma_k)}_\xi \right),
\end{align}
where $B_k(x_1, \dots, x_k)$ are the complete exponential Bell polynomials. 

These two equations are already very constraining on the distributions for $N$ and $\theta$. Indeed, plugging into these the results of the previous section we get:
\begin{align}
\label{eqn:AbelianConditionOnTwo}
&\mathbb{E}\left[ (N^2 - N)   \theta^{1-g} \theta'^{1-g'}  \right] + \mathbb{E}[ N \theta^{2-g-g'}] = \mathbb{E}[N \theta^{1-g}] \, \mathbb{E}[N \theta^{1-g'}] + \mathbb{E}[N \theta^{1-g - g'}], \\
& \mathbb{E}[N^k] = B_k \left(\mathbb{E}[N], \mathbb{E}[N\theta^{-1}], \dots, \mathbb{E}[N\theta^{1-k}]\right).
\label{eqn:AbelianConditionOnTori}
\end{align}

The general condition cannot be expressed in closed form, but we can still work out a list of constraints.

\begin{proposition}[The abelian condition]
    Consider a collection of $k$ Riemann surfaces $\{ \Sigma_{g_i}\}$ of genus $g_i$. The abelian condition then reads as follows:
        \begin{equation}
            \label{eqn:fullAbelianCondition}
            \sum_{p \in P(\{ g_1, \dots g_k\}) } \prod_{q \in p} \mathbb{E} \left[ N \theta^{1- \sum_{g \in q} g } \right]
            = \sum_{p \in P(\{ g_1, \dots g_k\}) }   \mathbb{E} \left[ \frac{N!}{(N-|p|)!} \prod_{q \in p} \theta_q^{\sum_{g \in q} (1-g)} \right],
        \end{equation}
where $P(\{ g_1, \dots g_k\})$ denotes the set of partitions of the set $\{ g_1, \dots g_k\}$ and, for a fixed partition $p$, $\theta_q\equiv\theta_{i_q}$ for some $i_q$ such that $i_{q}\neq i_{q'}$ for $q\neq q'\in p$.
\end{proposition}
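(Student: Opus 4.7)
The plan is to compute both sides of the bulk--boundary correspondence $e^{-\zeta}\mathcal{ZG}(\Sigma) = \avg*{\mathrm{T}_\xi(\Sigma)}_\xi$ applied to $\Sigma = \bigsqcup_{i=1}^{k}\Sigma_{g_i}$, organize each as a sum indexed by set partitions of $\{g_1,\dots,g_k\}$, and match them term by term. The identity (\ref{eqn:fullAbelianCondition}) will come out by equating the two expressions using $\mathcal{Z}_g = \mathbb{E}[N\theta^{1-g}]$ (the $k=1$ case of the correspondence).

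On the gravity side, I would extend the computation that led to (\ref{eqn:lackOfFactorization}). Every bulk 3-manifold in (\ref{top-gravity-sum-over-alg-data}) splits into closed connected components plus connected components with boundary. The closed components contribute an overall factor $e^{\mathcal{Z}_0}=e^{\zeta}$ via the factorization of the measure (\ref{eqn:measure-factorizes}) together with the $1/n!$ automorphism factors, exactly as in (\ref{eqn:lackOfFactorization}). The components with boundary induce a partition $p\in P(\{g_1,\dots,g_k\})$, where a block $q\in p$ records which boundary surfaces $\Sigma_{g_i}$ lie on a common connected bulk piece. Because the bulk TQFT is abelian and the measure is assumed to depend only on the total genus of each connected boundary (Sections~\ref{sec:top-bc} and~\ref{sec:sum-over-topologies}), the contribution of block $q$ is precisely $\mathcal{Z}_{\sum_{g\in q} g}$. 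Hence
\[
e^{-\zeta}\,\mathcal{ZG}(\Sigma) \;=\; \sum_{p\in P(\{g_1,\dots,g_k\})}\;\prod_{q\in p}\mathcal{Z}_{\sum_{g\in q}g},
\]
which, after substituting $\mathcal{Z}_g=\mathbb{E}[N\theta^{1-g}]$, is the LHS of (\ref{eqn:fullAbelianCondition}).

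On the ensemble side, I would expand using (\ref{2dTQFT-partition}) and the multiplicativity of 2d TQFT partition functions across disjoint unions:
\[
\avg*{\mathrm{T}_\xi(\Sigma)}_\xi \;=\; \mathbb{E}\!\left[\,\prod_{j=1}^{k}\sum_{i_j=1}^{N}\theta_{i_j}^{1-g_j}\right] \;=\; \mathbb{E}\!\left[\,\sum_{i_1,\dots,i_k=1}^{N}\;\prod_{j=1}^{k}\theta_{i_j}^{1-g_j}\right].
\]
I would then organize the multi-index sum by the coincidence pattern of $(i_1,\dots,i_k)$: each pattern defines a partition $p$ of $\{1,\dots,k\}$, the number of compatible tuples with pairwise distinct indices between blocks is $N(N-1)\cdots(N-|p|+1)=N!/(N-|p|)!$, and for any such tuple $\prod_{j=1}^{k}\theta_{i_j}^{1-g_j}=\prod_{q\in p}\theta_{i_q}^{\sum_{g\in q}(1-g)}$, where $i_q$ is the common value of $i_j$ for $j\in q$. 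By the exchangeability of the $\theta_i$ (inherited from the symmetry of (\ref{2dTQFT-partition}) under permuting eigenvalues, which we assumed for the joint distribution), this expectation depends on $p$ only through one representative $\theta_q$ per block, which yields the RHS of (\ref{eqn:fullAbelianCondition}).

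The main obstacle is the gravity-side bookkeeping: one has to check carefully that the $e^{\zeta}$ factor absorbs all configurations of closed components with the correct $1/n!$ automorphism weights, while no analogous factor appears for boundary-carrying components because the labels $1,\dots,k$ of the boundary surfaces are fixed and distinguishable. Once this is established, the abelianness of the bulk and the corresponding property of the measure do the essential work: they are precisely what permit the gravity sum to be reindexed by a plain set partition of $\{g_1,\dots,g_k\}$ (rather than by finer topological/algebraic data), matching the combinatorial structure produced on the ensemble side by the coincidence patterns of the random indices $i_j$.
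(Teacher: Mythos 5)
Your proposal is correct and follows essentially the same route as the paper: the left-hand side is obtained from the gravity-side decomposition into connected bulk components (whose contributions, by abelianness, depend only on the total genus of each block, giving $\prod_{q\in p}\mathcal{Z}_{\sum_{g\in q}g}=\prod_{q\in p}\mathbb{E}[N\theta^{1-\sum_{g\in q}g}]$ after stripping the $e^{\zeta}$ from closed components), and the right-hand side from directly expanding $\mathbb{E}\bigl[\prod_{i}\sum_{j_i}\theta_{j_i}^{1-g_i}\bigr]$ by coincidence patterns of the indices with the $N!/(N-|p|)!$ count and exchangeability of the $\theta_i$. The extra care you take with the $e^{\zeta}$ bookkeeping is consistent with the paper's normalization and does not change the argument.
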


Indeed, the left-hand side comes from taking into account all possible ways of joining the disconnected surfaces via connected sum. To be more precise, each contribution corresponds to a partition $p$ of the set $\{ \Sigma_{g_i}\}_i$ of Riemann surfaces, where each part $p_j$ of $p$ corresponds to the connected sum of all the surfaces in that part. Thus:
\begin{equation}
\avg{\mathrm{T}_\xi( \sqcup_{i=1}^k \Sigma_{g_i} )}_\xi = \sum_{p \in P(\{ g_1, \dots, g_k\}) } \prod_{q \in p} \avg{\mathrm{T}_\xi( \Sigma_{\sum_{g \in q} g} )}_\xi =  \sum_{p \in P(\{ g_1, \dots, g_k\}) } \prod_{q \in p} \mathbb{E} \left[ N \theta^{1- \sum_{g \in q} g } \right].
\end{equation}

Similarly, for the right-hand side we first directly compute:
\begin{equation}
\avg{\mathrm{T}_\xi( \sqcup_{i=1}^k \Sigma_{g_i} )}_\xi = \mathbb{E} \left[ \prod_{i=1}^k \sum_{j_i =1}^N \theta_{j_i}^{1-g_i} \right].
\end{equation}
To re-express this term, we notice that we have different contributions depending on whether or not certain $j_i$'s are equal. This is explicitly specified by the partition $p$ of the sequence $\{ j_1, \dots, j_k\}$, where the blocks consist of equal $j_i$'s. In particular, having equal indices corresponds to the contribution of the genera of the corresponding surfaces to add up. Finally, notice that any combination of indices has to be counted $N(N-1) \cdots (N-|p|+1)$ times. It follows that:
\begin{equation}
\mathbb{E} \left[ \prod_{i=1}^k \sum_{j_i =1}^N \theta_{j_i}^{1-g_i} \right] = \sum_{p \in P(\{ g_1, \dots g_k\}) }   \mathbb{E} \left[ \frac{N!}{(N-|p|)!} \prod_{q \in p} \theta_q^{\sum_{g \in q} (1-g)} \right],
\end{equation}
which gives the equality between the right-hand and the left-hand sides of the abelian condition stated in the proposition.

\subsection{Independent variables}

As we anticipated above, we want to rule out the possibility that the random variables $N$ and $\theta$ are independent. Indeed, we will deduce that this assumption and the abelian condition are too constraining together. 

Assume that all random variables $\theta_i$ are independent of each other and also independent of the random variable $N$, i.e.:
\begin{equation}
f_{\theta_1,\ldots,\theta_N|N}(\alpha_1,\ldots,\alpha_N | n) = \prod_{i=1}^N f_{\theta}(\alpha_i).  
\end{equation}
Then, we may rewrite the abelian conditions~\eqref{eqn:AbelianConditionOnTwo} and~\eqref{eqn:AbelianConditionOnTori}:
\begin{align}
&(N_2 - N_1) \,\theta_{(1-g)} \, \theta_{(1-g')} + N_1 \, \theta_{(2-g-g')} = N_1^2 \, \theta_{(1-g)} \,\theta_{(1-g')} + N_1 \, \theta_{(1-g-g')}, \label{eqn:1stAbelianConditionWithIndependence} \\
&N_k = B_k(N_1, N_1 \,\theta_{(-1)}, \dots, N_1 \,\theta_{(1-k)}) ,\label{eqn:2ndAbelianConditionWithIndependence}
\end{align}
where we have introduced the moments of the two distributions, namely $N_k \coloneqq \mathbb{E}[N^k]$ and $\theta_{(k)} \coloneqq \mathbb{E}[\theta^k]$.

Assuming that $N_1 > 0$, these conditions can only be solved by a degenerate distribution supported on the single value $\theta \equiv 1/\theta_{(-1)}$. Moreover, the $N$-distribution has to satisfy:
\begin{equation}
N_k = B_k(N_1, N_1 \lambda, \dots, N_1 \lambda^{k-1}),
\end{equation}
having denoted $\lambda \coloneqq \theta_{(-1)}$. This implies that the probability density function for $N$ is:
\begin{align}
f_{N}(n) = \left\{
    \begin{array}{@{}cl}
         e^{-N_1/\lambda} \frac{1}{k!} \left( \frac{N_1}{\lambda} \right)^k & \text{if } n = k \lambda \  \text{for } k \in \N,  \\
         0 & \text{otherwise},
    \end{array}
    \right.
\end{align}
that is a Poisson-like distribution supported on $\lambda \cdot \N$.  This fixes $\lambda$ to be an integer since we require $N$ to take value in $\N$.

This distribution is consistent with the abelian condition (\ref{eqn:fullAbelianCondition}). To sum up, we have seen that the abelian condition is quite powerful, at least assuming independence among the random variables. It constrains the distribution rules of $N$ and $\theta$, leaving only two parameters, $N_1$ and $\lambda$, that depend on the explicit theory in the bulk. We stress that $\lambda$ has to be a nonzero natural number to make sense of this result. We conclude this section by remarking that this solution is to restrictive for holographic principle to work at least in some generality, ruling out the possibility that $N$ and $\theta$ are independent random variables.

\subsection{Solving the abelian condition}
\label{sec:solving-abelian-condition}

We aim to solve the abelian condition in full generality, with the only assumption being the symmetry of the distribution with respect to the permutation of $\theta_i$.

Looking at the abelian condition, there is a clear hierarchy. Indeed, the equation corresponding to $\Sigma = \sqcup_{i=1}^k \Sigma_{g_i}$ depends on the distributions $f_N, f_{\theta | N}, \dots, f_{\theta_1, \dots, \theta_l | N}$ for $l$ equal to the number of Riemann surfaces of genus different from one in $\Sigma$. Thus, we can opt for an inductive approach.

The first step consists of considering $k$ copies of tori, for which the abelian condition gives the equation~\eqref{eqn:AbelianConditionOnTori}. Hence, suppose we know $x_k \coloneqq \mathbb{E}[N\theta^{1-k}]$ for any $k \ge 1$, then this allows us to compute the moment-generating function for the $N$-distribution:
\begin{equation}
    M_N(t)\coloneqq \mathbb{E}[e^{tN}] = \sum_{k=0}^\infty N_k \frac{t^k}{k!} = \sum_{k=0}^\infty  B_k(x_1, \dots, x_k)  \frac{t^k}{k!} = \exp \left[ \sum_{k=1}^\infty x_k \frac{t^k}{k!} \right],
\end{equation}
where we used some basic facts about Bell polynomials. This assumes that the $N$-distribution admits a moment-generating function, which is not necessarily true. For example, the higher moments might be divergent. This is indeed what can happen in our setting, as we will point out later. Furthermore, we stress that from this expression it is not apparent that $M_N(t)$ corresponds to a discrete distribution, as it will turn out later.

For the time being, we will not worry about these problems and assume that the moment-generating function exists. Hence, we are also supposing that the power series in the argument of the exponential is convergent in a sufficiently small disk centred at $t=0$.

Next, we shall consider surfaces of the form $\Sigma = \Sigma_g \sqcup (T^2)^{\sqcup k}$.
Directly from the definition, we can check that:
\begin{equation}
\avg{\mathrm{T}_\xi( \Sigma_g \sqcup  (T^2)^{\sqcup k} )}_\xi = \mathbb{E}[ N^{k+1} \, \theta^{1-g}].
\end{equation}

On the other hand, performing some combinatorics:
\begin{equation}
\avg{\mathrm{T}_\xi( \Sigma_g \sqcup  (T^2)^{\sqcup k} )}_\xi = \sum_{n=0}^k \binom k n \avg{\mathrm{T}_\xi(   (T^2)^{\sqcup k-n} )}_\xi \,\avg{\mathrm{T}_\xi( \Sigma_{g+n}  )}_\xi,
\end{equation}
so that:
\begin{equation}
 \mathbb{E}[ N^{k+1} \theta^{1-g}] = \sum_{n=0}^k \binom k n  \mathbb{E}[N^{k-n}] \,   \mathbb{E}[ N \theta^{1-g-n}] .
\end{equation}

Now, define $\Phi(t,r) \coloneqq \mathbb{E}[N \theta \exp (t N + r/\theta)]$. 
It is straightforward to obtain the moment-generating function for $N$ and $\theta^{-1}$ in terms of $\Phi$:
\begin{equation}
M_{N,\theta^{-1}}(t,r)\equiv \mathbb{E}[\exp(tN+\theta/r)] = \int_{-\infty}^t dt \, \frac{d}{dr} \Phi(t,r) .
\label{M-Phi-relation}
\end{equation}

Hence, it is enough to compute $\Phi(t,r)$. We start by noticing that after some manipulations, we may write:
\begin{equation}
\Phi(t,r) = M_N(t) \, ( \mathbb{E}[ N \theta] + \log M_N(t+r) ).
\end{equation}
All the information is now encoded in the moment-generating function for $N$. Finally, plugging this into the right-hand side of (\ref{M-Phi-relation}) yields:
\begin{equation}
M_{N,\theta^{-1}}(t,r) = \int^t_{-\infty} dt \,\frac{d}{dr} M_N(t) \, ( \mathbb{E}[ N \theta] + \log M_N(t+r) ) =  \int^t_{-\infty} dt \, \frac{M_N(t)}{M_N(t+r)}  \frac{d}{dr}  M_N(t+r) .
\end{equation}

\subsubsection{Discrete dimensions}

Before continuing, let us deal with the problem of imposing a discrete distribution supported on integers for the dimension. Imposing such a condition at the level of the moment-generating function amounts to requiring it to be $2\pi i$ periodic. However, determining whether a power series is of a periodic function can be quite challenging. We avoid this problem by considering the following sufficient condition, which will be satisfied by a large class of examples. Namely, we require the $x_k$ to be a Dirichlet series:
\begin{equation}
\label{eqn:DirichletSeriesHypothesis}
x_k = \sum_{n=1}^\infty b_n n^k,
\end{equation}
with non-negative coefficients $b_n \ge 0$. We remark that negative coefficients would lead to negative probabilities. 

We also require these series to be convergent for any $k \le k_* \in \N$, possibly with $k_* = + \infty$. We remark that a finite value of $k_*$ means that our distribution only admits a finite number of moments. This is not a problem a priori, and indeed happens if the bound (\ref{measure-sufficient-bound}) is not satisfied for arbitrary $s$ (as in the simple example of the measure considered in Section \ref{sec:measure-example}). However, some of the infinite series below should be understood formally, unless $k_* = + \infty$. Nevertheless, even if $k_*<\infty$, the resulting distribution itself is well-defined, and reproduces the correct $x_k$ for $k\leq k_*$.

We now check that this condition implies that $N$ is supported on the integers. Indeed, we can plug this expression into the moment-generating function and, assuming that the series is absolutely convergent in an open neighborhood of $t=0$, we obtain:
\begin{equation}
    M_N(t) = e^{-x_0} \exp \left[ \sum_{k=0}^\infty \sum_{n=1}^\infty b_n \frac{(tn)^k}{k!} \right] = e^{-x_0}  \exp \left[ \sum_{n=1}^\infty b_n e^{tn} \right].
\end{equation}
Thus, the moment-generating function is manifestly $2\pi i$ periodic as we want.

Under this additional hypothesis, we can explicitly determine the probability density functions:
\begin{align}
f_N(n) &= e^{-x_0} c_n, \\
f_{N,\theta^{-1}}(n,m) &= \frac{e^{-x_0}}{n} m \,b_m c_{n-m},
\label{first-two-fs}
\end{align}
where we have defined the $c_k$'s to be the coefficients in the series expansion:
\begin{equation}
    \exp \left[ \sum_{k=1}^\infty b_k x^k \right] = \sum_{k=0}^\infty c_k x^k.
\end{equation}
Notice that it is intended $c_k = b_k = 0$ for $k < 0$. Moreover, for $k\geq 0$, $c_k$ can be expressed as a polynomial in $b_{i},\; i=1,\ldots,k$. The obtained distribution for $\theta^{-1}$ turns out to be discrete and supported on positive integers, so $f_{N,\theta^{-1}}$ in (\ref{first-two-fs}) means the distribution function where $\theta^{-1}$ is considered as a \textit{discrete} random variable, although a priori it was a continuous one\footnote{Explicitly, in terms of the previously considered continuous distributions, we have:
\begin{equation}
    f_{N,\theta^{-1}}(n,m)\,\delta(\alpha_1-1/m) \equiv f_N(n) \,f_{\theta_1|N}(\alpha_1|n)
    = f_N(n)\int d\alpha_2\ldots d\alpha_N\,f_{\theta_1,\ldots,\theta_N|N}(\alpha_1,\alpha_2,\ldots,\alpha_N|n).
\end{equation}
}. Similarly, the distributions for multiple $\theta_i^{-1}$ will be considered as distributions of discrete variables from now on.

\subsubsection{The full solution}

One could repeat the above procedure to solve the abelian condition with $\Sigma = \Sigma_g \sqcup \Sigma_{g'} \sqcup  (T^2)^{\sqcup k}$ and so on. By doing so, we infer the general solution.  

\begin{proposition}
\label{prop:dist-solution}
Consider the following probability density function:
\begin{equation}
f_{N,\theta^{-1}_1, \dots, \theta^{-1}_k}(n, m_1, \dots, m_k) = e^{-x_0} \frac{(n-k)!}{n!} \sum_{p \in P(\{ 1, \dots k\}) } c_{n-\sum_{i=1}^l m_{p_i}} \prod_{i=1}^l \delta(p_i) \,b_{m_{p_i}} \frac{m_{p_i}!}{(m_{p_i}- |p_i| )!},
\end{equation}
where $p = \{ p_1, \dots, p_l \}$ is a partition of the set $\{ 1, \dots, k\}$ of length $l$ and:
\begin{equation}
\delta(\{\mu_1, \dots, \mu_r\}) \equiv \prod_{\alpha=1}^{r-1} \delta_{m_{\mu_\alpha}, m_{\mu_{\alpha+1}}}, \qquad m_{\{\mu_1, \dots, \mu_r\}} \equiv m_{\mu_1} = \dots = m_{\mu_r}.
\end{equation}

Then, such a function satisfies:
\begin{equation}
\sum_{m_{l+1}} \dots \sum_{m_k} f_{N,\theta^{-1}_1, \dots, \theta^{-1}_k}(n, m_1, \dots, m_k) =  f_{N,\theta^{-1}_1, \dots, \theta^{-1}_l}(n, m_1, \dots, m_l),
\end{equation}
and for $k=0$ and $k=1$ it restricts to $f_N$ and $f_{N,\theta^{-1}}$, as above. 

Moreover, the distribution corresponding to $f_{N,\theta^{-1}_1, \dots, \theta^{-1}_k}$ satisfies the abelian condition for disjoint unions of up to $k$ Riemann surfaces (that are not tori) with an arbitrary amount of tori.
\end{proposition}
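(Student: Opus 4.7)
The proposition contains three assertions: (i) the formula restricts correctly to $f_N$ and $f_{N,\theta^{-1}}$ at $k=0,1$; (ii) the proposed joint densities form a consistent family under marginalization; and (iii) the abelian condition (\ref{eqn:fullAbelianCondition}) holds for disjoint unions containing at most $k$ non-toroidal components. The plan is to handle (i) by direct inspection, (ii) by induction on the number of marginalized variables, and (iii) by induction on the number $k$ of non-toroidal components, reducing at each step to identities already established for tori.

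\textbf{Base cases.} For $k=0$ only the empty partition contributes, the empty product equals $1$, and the formula gives $e^{-x_0}c_n$, matching $f_N$. For $k=1$ the unique partition is $\{\{1\}\}$, the $\delta$-factor is trivial and $|p_1|=1$, yielding $(e^{-x_0}/n)\,m_1 b_{m_1} c_{n-m_1}$, which matches $f_{N,\theta^{-1}}$.

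\textbf{Marginalization (ii).} It suffices to show $\sum_{m_k} f_{N,\theta^{-1}_1,\ldots,\theta^{-1}_k} = f_{N,\theta^{-1}_1,\ldots,\theta^{-1}_{k-1}}$, whereupon the general case follows by iteration. Split the sum over $p\in P(\{1,\ldots,k\})$ according to whether $\{k\}$ is a singleton block or $k$ lies inside a larger block $q$. In the singleton case the sum over $m_k$ is free and uses the generating-function identity $\sum_{m\geq 0} m\, b_m\, c_{n-s-m}= (n-s)\,c_{n-s}$, which follows by differentiating $\sum_k c_k x^k=\exp\sum_k b_k x^k$. In the non-singleton case the $\delta$-factor forces $m_k=m_q$, so the sum collapses to a single term and reduces $|q|\to|q|-1$. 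Combining, the falling factorial $(n-k)!/n!$ is promoted to $(n-k+1)!/n!$ and the remaining partitions are precisely those of $\{1,\ldots,k-1\}$, reproducing the asserted marginal.

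\textbf{Abelian condition (iii).} Proceed by induction on $k$, the number of non-toroidal surfaces in $\Sigma=\Sigma_{g_1}\sqcup\cdots\sqcup\Sigma_{g_k}\sqcup(T^2)^{\sqcup j}$. The base $k=0$ is equation (\ref{eqn:AbelianConditionOnTori}), already built into the choice of $f_N$ via $x_k=\mathbb{E}[N\theta^{1-k}]$ and the Bell-polynomial identity for the MGF. For the inductive step, expand the right-hand side of (\ref{eqn:fullAbelianCondition}) using the explicit joint density; each partition $p$ of $\{g_1,\ldots,g_k\}$ decomposes the expectation as a sum over partitions $p'$ of $\{1,\ldots,k\}$ (the label appearing in $f_{N,\theta_1^{-1},\ldots,\theta_k^{-1}}$), multiplied by terms involving $c$ and powers of $b$. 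Re-summing over refinements of $p$ consolidates $\delta$-identified blocks and converts the double partition sum into a single sum over partitions of $\{g_1,\ldots,g_k\}$. The left-hand side, after applying the inductive hypothesis to each factor $\mathbb{E}[N\theta^{1-\sum_{g\in q}g}]$ in the product, produces exactly the same organization: each block $q\in p$ contributes an expectation that, by the $k=1$ case together with the toroidal identity, equals the corresponding factor on the right-hand side. Matching the two organizations term-by-term closes the induction.

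\textbf{Main obstacle.} The hard part is the combinatorial bookkeeping in step (iii): one must align two different partition sums (the outer partition of genera on the LHS of (\ref{eqn:fullAbelianCondition}) with the internal partition $p$ labeling the density on the RHS) and verify that the falling factorials $N!/(N-|p|)!$, the multinomials $m!/(m-|p_i|)!$, and the $\delta$-identifications combine correctly. A cleaner route, if the distribution admits a moment-generating function (i.e.\ $k_*=\infty$), is to encode everything in a multivariate MGF $M_{N,\theta^{-1}_1,\ldots,\theta^{-1}_k}(t,r_1,\ldots,r_k)$, extending the derivation of $M_{N,\theta^{-1}}$ in Section~\ref{sec:solving-abelian-condition}, and verify the abelian condition as an identity of formal series; when $k_*<\infty$ the same identities hold coefficient-wise up to the relevant order, which is all that is needed.
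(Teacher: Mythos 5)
Your base cases and your marginalization argument are correct and essentially coincide with the paper's proof of the first statement: the paper also splits the partitions of $\{1,\dots,k+1\}$ according to whether $k+1$ forms a singleton block or is absorbed into an existing block, and your explicit use of the identity $\sum_{m} m\, b_m\, c_{n-s-m}=(n-s)\,c_{n-s}$ (from differentiating $\sum_k c_k x^k=\exp\sum_k b_k x^k$) is exactly what the paper uses implicitly to evaluate the singleton contribution.

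The gap is in part (iii), which is where the real content of the proposition lies. Your ``induction on $k$'' is not actually set up: the left-hand side of the abelian condition is a product of single-variable moments $\mathbb{E}[N\theta^{1-\sum_{g\in q}g}]$, while the right-hand side involves joint moments $\mathbb{E}\bigl[\tfrac{N!}{(N-|p|)!}\prod_q\theta_q^{\sum(1-g)}\bigr]$ of several distinct $\theta_i$ weighted by falling factorials of $N$; these do not factor into single-$\theta$ expectations, so the claim that ``each block $q\in p$ contributes an expectation that \dots equals the corresponding factor on the right-hand side'' is precisely the assertion to be proved, not a reduction of it. The paper does not induct on $k$ at all: it computes both sides directly as explicit series and matches coefficients. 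The two ingredients that make this work, and which are absent from your proposal, are (a) the \emph{fusion} bijection between pairs $(p,\rho)$ — a partition $p$ of the genera together with a partition $\rho$ of its blocks — and pairs $(p',q)$ with $q$ a refinement of $p'$, which is what legitimately reorganizes the double partition sum you describe only in words; and (b) the Stirling-number identity $\sum_{q\in P(p)} \tfrac{m!}{(m-|q|)!}=\sum_k S(|p|,k)\tfrac{m!}{(m-k)!}=m^{|p|}$, which is the step that converts the falling factorials $m_\alpha!/(m_\alpha-|q_\alpha|)!$ appearing on the right into the powers $m_\alpha^{|p_\alpha|}$ needed to match the left-hand side coefficients $a(n,p)$. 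You correctly flag this bookkeeping as ``the hard part,'' but flagging it is not resolving it; and the alternative multivariate moment-generating-function route you mention is likewise only named, not carried out. As written, part (iii) is a plan rather than a proof.
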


We postpone the proof of this proposition to the appendix~\ref{sec:proofProp2}, as it is more technical than conceptual.

\subsection{Some remarks on the distribution}

At this point, we have a solution to the abelian condition that satisfies certain well-motivated conditions, in particular, the dimension being a natural number. In the remainder of the section, we will point out some properties of the ensemble of 2d TQFTs that corresponds to such distributions. First of all, the random variables specifying the ensemble are not independent in general, as can be seen from the explicit formulas for the distribution.

As was already pointed out above, the distributions for $\theta$ are also discrete. Their values are supported on $1/k$ for $k=1,\dots, n$, where $n$ is the value assumed by the random variable $N$. More conditions on the allowed values for $\theta$ can be deduced from the expression for the conditional probability.

\subsubsection{Expression for the conditional probability}

Consider the conditional probability, given $N = n$ for some $n \in \mathbb{N}$:
\begin{equation}
 f_{\theta_1^{-1}, \dots, \theta_n^{-1} | N}(m_1,\dots, m_n | n) = \frac{1}{n!\, c_n} \sum_{p \in P(\{ 1, \dots n\}) } c_{n-\sum_{q \in p}^l m_{q}} \prod_{q \in p}^l \delta(q) \, b_{m_{q}} \frac{m_{q}!}{(m_{q}- |q| )!}.
\end{equation}

We will now argue that this expression simplifies significantly. First of all, we show that the only partitions with a non-vanishing contribution have to satisfy:
\begin{equation}
    n = \sum_{q \in p} m_q.
\end{equation}
In particular, we show that if  $\sum_{q \in p} m_q \ge n + 1$ or $\sum_{q \in p} m_q \le n - 1$, then the contribution is zero. The first case is straightforward since the argument of $c_n$ has to be positive. In the second case, notice that $n = \sum_{q \in p} |q| $, hence $\sum_{q \in p} (m_q - |q|)$ is negative. This implies that there exists a $q \in p$ such that $m_q - |q|$ is negative and therefore:
\begin{equation}
\frac{m_q!}{(m_q - |q|)!} = 0.
\end{equation}

Now, the presence of the factor:
\begin{equation}
\delta(q) \,\frac{m_q!}{(m_q - |q|)!}
\end{equation}
implies that a specific partition contributes zero unless it selects $|q|$ equal entries $m_q \equiv m_{q_1} = \dots = m_{q_{|q|}}$. However, if $|q| \ge m_q + 1$, then the contribution is again zero due to the factorial in the denominator. 

Moreover, we now show that $|q|$ needs to be exactly equal to $m_q$. Indeed, if this was not the case, then there would exist a $\bar{q}\in p$ such that $|\bar{q}| \le m_{\bar{q}} - 1$. On the other hand, the others should satisfy $|q| \le m_{q}$. Hence:
\begin{equation}
n = \sum_{q \in p} |q| \le \left(\sum_{q \in p} m_q \right) - 1 = n - 1,
\end{equation}
yielding a contradiction. 

Thus, the only non-vanishing combinations of $\vec{m} =( m_1, \dots, m_n)$ are of the following form:
\begin{equation}
\vec{m} = (\,\overbrace{\underbrace{m_1,\dots, m_1}_{a_1 \cdot m_1 \ \text{times}} , \underbrace{m_2,\dots, m_2}_{a_2 \cdot m_2 \ \text{times}}, \dots, \underbrace{m_r,\dots, m_r}_{a_r \cdot m_r \ \text{times}}}^{n \ \text{times}}\,),
\end{equation}
and permutations thereof, with probability:
\begin{align}
\notag
 f_{\theta_1^{-1}, \dots, \theta_n^{-1} | N}(\vec{m}| n) &= \frac{1}{n! \,c_n} \sum_{p} \prod_{i=1}^r (m_i!  \,b_{m_i})^{a_i} \\
 \notag
 & = \frac{1}{n! \, c_n} \prod_{i=1}^r (m_i! \, b_{m_i})^{a_i} \frac{(a_i \,m_i)!}{(m_i!)^{a_i} \,a_i!} \\
 & =  \frac{1}{n! \,c_n} \prod_{i=1}^r  b_{m_i}^{a_i} \frac{(a_i\, m_i)!}{a_i!} .
\end{align}

Notice that, up to permutations, the possible choices of $\vec{m}$ --- i.e. those with non-zero probability --- are in a one-to-one correspondence with possible choices of positive integers $m_i$ and $a_i$ such that $n = \sum_{i=1}^r a_i m_i$ for some $r$.

\subsubsection{Explicit formulae for low dimensions}

We derive explicit expressions for the conditional probabilities, given $ n=1, 2, 3$. When $n=1$, the only non-zero probability is for $m=1$ and it is consistent with the normalization condition $f_{\theta^{-1}|N}(1|1) = 1$.

For $n=2$, only two configurations are non-vanishing:
\begin{equation}
    f_{\theta_1^{-1},\theta_2^{-1}|N}(1,1|2) =\frac{b_1^2}{b_1^2+2b_2},  \qquad
    f_{\theta_1^{-1},\theta_2^{-1}|N}(2,2|2) =\frac{2b_2}{b_1^2+2b_2}.
\end{equation}
While, in the case $n=3$, we have:
\begin{align}
    \notag
    &f_{\theta_1^{-1},\theta_2^{-1},\theta_3^{-1}|N}(1,1,1|3) =\frac{b_1^3}{b_1^3+6b_1b_2+6b_3} , \qquad
    f_{\theta_1^{-1},\theta_2^{-1},\theta_3^{-1}|N}(1,2,2|3) =\frac{2b_1b_2}{b_1^3+6b_1b_2+6b_3}, \\
    &f_{\theta_1^{-1},\theta_2^{-1},\theta_3^{-1}|N}(3,3,3|3) =\frac{6b_3}{b_1^3+6b_1b_2+6b_3} ,
\end{align}
and permutations thereof.

\subsection{Recovering the distribution for a given 3d TQFT}
\label{sec:holography}

Now, we will combine the results of this section with the analysis performed in Section \ref{sec:sum-over-topologies}. We have shown that under some mild hypotheses that we are going to verify, it is enough to compute $\mathbb{E}[N\theta^{1-k}]$. These fix uniquely the distributions for $N$ and $\theta$.

As already argued at the beginning of this section, one can express the sum over topologies via terms that only include connected manifolds. The case of $\Sigma = (T^2)^{\sqcup k}$ is simple enough to obtain a closed relation:
\begin{equation}
    \mathcal{ZG}((T^2)^{\sqcup k};\mathcal{T}) = e^\zeta B_k ( \mathcal{Z}_1, \dots, \mathcal{Z}_k ) .
\end{equation}

On the other hand, by the abelian condition~\eqref{eqn:AbelianConditionOnTori}, we have:
\begin{equation}
    \avg{\mathrm{T}_\xi((T^2)^{\sqcup k})}_\xi = B_k\left( \mathbb{E}[N],\mathbb{E}[N\theta^{-1}], \dots, \mathbb{E}[N\theta^{1-k}] \right).
\end{equation}
Thus, by the holographic principle, we obtain:
\begin{equation}
    x_g=\mathbb{E}[N\theta^{1-g}] = \mathcal{Z}_g\equiv \mathcal{Z}_g(D,q;\mathfrak{L};\mu).
\end{equation}
That is, the moments of the distribution are closely related to the connected gravitational partition function we have analyzed in Section \ref{sec:sum-over-topologies}. In particular, it was shown there that the hypothesis~\eqref{eqn:DirichletSeriesHypothesis} is satisfied under certain assumptions on the measure. The coefficients $b_n$ are explicitly given by the formula (\ref{Dirichlet-coefficients-general-explicit}). The reality of the coefficients is automatic, assuming that the measure is real $\mu$ and respects the orientation reversal of the manifolds,  that is $\mu(A,\bar{\ell})=\mu(A,{\ell})$ and $\nu(A,{\ell},S)=\nu(A,\bar{\ell},S)$, where $\bar{\ell}$ is the representative linking pairing equivalent to $-\ell$. The reality then follows from the unitarity of the partition function of the 3d TQFT (cf. \cite{deloup2001abelian}):
\begin{equation}
    Z(A,\bar{\ell};D,q;-\theta)=\overline{Z(A,{\ell};D,q;\theta)}.
\end{equation}

In the case of untwisted DW theory with Neumann boundary conditions, it is also immediate to see that $b_n>0$, assuming the measure is positive. From the expression (\ref{untwisted-DW-partition-function}) for the partition function we have:
\begin{equation}
    b_n=\sum_{(A,\ell)} \frac{\mu(A,\ell)}{|\mathfrak{L}\otimes A|} \,\left|\left\{\alpha\in \hom(A,\mathfrak{L}^*):\;\nu(A,\ell,\ker\alpha)|\mathfrak{L}|=n\right\}\right|
    >0.
\end{equation}

\section{Generalizations}
\label{sec:extension}
In this section, we discuss the dependence of the sum over 3-manifolds on the choice of the topological boundary conditions. We also evaluate the sum with the special measure considered in Section \ref{sec:measure-example} for a case of untwisted DW with a boundary condition which is neither Neumann nor Dirichlet, and also for a case of twisted DW theory.

\subsection{Dependence on the boundary condition}
\label{sec:bcDependence}

In the previous sections, we have considered the  sum over topologies of the partition function of an abelian TQFT $\mathcal{T}$ with a fixed topological boundary condition $\mathfrak{L}$, that is, the sum:
\begin{equation}
    \langle \mathfrak{L}|\Sigma;\mu\rangle_\mathcal{T}\coloneqq\sum_{[M,\phi]:\,\Sigma\xrightarrow[\phi]{\sim} \partial M} \mu([M,\phi])\,\langle \mathfrak{L}|M,\phi\rangle_\mathcal{T}\qquad \in \mathcal{H}_\mathcal{T}(\Sigma),
    \label{state-sum-over-topologies}
\end{equation}
which was given precise meaning by rewriting it as a sum over certain algebraic data. 

We will now discuss the dependence on the choice of the topological boundary condition. We will argue that, under some additional assumptions, one can recover the actual state given by the sum of the values of the TQFT on the bordisms:
\begin{equation}
|\Sigma;\mu\rangle_\mathcal{T}\coloneqq\sum_{[M,\phi]:\,\Sigma\xrightarrow[\phi]{\sim} \partial M} \mu([M,\phi])\,|M,\phi\rangle_\mathcal{T}\qquad \in \mathcal{H}_\mathcal{T}(\Sigma),
\end{equation}
from the knowledge of $\langle \mathfrak{L}|\Sigma,\mu\rangle_\mathcal{T}$ for all possible topological boundary conditions $\mathfrak{L}$. The knowledge of such a state in particular provides us the value of the sum over topologies $\langle \mathfrak{B}|\Sigma;\mu\rangle_\mathcal{T}$ with an arbitrary, possibly non-topological, boundary condition $\mathfrak{B}$.  

The additional assumption that we impose for this to be achievable can be formulated as follows. First, as it was already done in the algebraic setting in Section \ref{sec:sum-over-topologies}, we assume that the measure $\mu([M,\phi])$ in (\ref{state-sum-over-topologies}) factorizes with respect to disjoint union of 3-manifolds (up to the combinatorial factor -- one over the number of permutations of homeomorphic closed components). The total sum (\ref{state-sum-over-topologies}) then can be expressed through the sums over connected 3-manifolds:
\begin{equation}
|\Sigma;\mu\rangle_\mathcal{T}^\text{conn}\coloneqq\sum_{\substack{[M,\phi]:\,\Sigma\xrightarrow[\phi]{\sim} \partial M\\ \text{connected }M}} \mu([M,\phi])\,|M,\phi\rangle_\mathcal{T}\qquad \in \mathcal{H}_\mathcal{T}(\Sigma),
\label{state-sum-over-connected-topologies}
\end{equation}
so that we can focus on such sums only.
Let $\Sigma$, the boundary surface, decompose into its connected component as $\Sigma=\sqcup_i \Sigma_i$, and $\phi_i\coloneqq\phi|_{\Sigma_i}$ be the inclusion of the $i$-th component into the boundary of the connected 3-manifold $M$. The sum over the bordisms equivalence classes $[M,\phi]:\emptyset \rightarrow \Sigma$ in particular includes sums over $\phi_i$'s for a fixed diffeomorphism class $M$. Let $\psi_i$ be some self-diffeomorphisms $\psi_i:\Sigma_i\rightarrow\Sigma_i$ of the connected components of $\Sigma$. One can then consider a change of the summation of variables $\phi_i=\phi_i'\circ\psi_i$ in the sum above:
\begin{multline}
        |\Sigma;\mu\rangle_\mathcal{T}^\text{conn}=\sum_{\substack{[M,\phi']:\,\Sigma\xrightarrow[\phi']{\sim} \partial M\\ \text{connected }M}} \mu([M,\{\phi_i'\circ \psi_i\}_i])\,|M,\{\phi_i'\circ \psi_i\}_i\rangle_\mathcal{T}
        =
\\
=
\bigotimes_i\mathcal{T}(\psi_i)
\sum_{\substack{[M,\phi']:\,\Sigma\xrightarrow[\phi']{\cong} \partial M\\ \text{connected }M}} \mu([M,\{\phi_i'\circ \psi_i\}_i])\,|M,\{\phi_i'\}_i\rangle_\mathcal{T}
        \qquad \in \bigotimes_i\mathcal{H}_\mathcal{T}(\Sigma_i),
\end{multline}
where $\mathcal{T}(\psi_i):\mathcal{H}_\mathcal{T}(\Sigma_i)\rightarrow \mathcal{H}_\mathcal{T}(\Sigma_i)$ is the representation of the action of the mapping class group of $\Sigma_i$ on the corresponding Hilbert space. If we further assume that the measure is invariant under the action of the mapping group on the boundary, that is:
\begin{equation}
    \mu([M,\{\phi_i'\circ \psi_i\}_i])=
    \mu([M,\{\phi_i'\}_i]),
\end{equation}
we obtain that the total state must be invariant under the action of the mapping class group on the Hilbert space:
\begin{equation}
    |\Sigma;\mu\rangle_\mathcal{T}^\text{conn}=\left(\bigotimes_i\mathcal{T}(\psi_i)\right) |\Sigma;\mu\rangle_\mathcal{T}^\text{conn}  \qquad \in \bigotimes_i\mathcal{H}_\mathcal{T}(\Sigma_i).
\end{equation}
It implies that the state must be of the following form:
\begin{equation}
   |\Sigma;\mu\rangle_\mathcal{T}^\text{conn}=\sum_{\{\text{Lagrangian }\mathfrak{L}_i\}_i} \left(\bigotimes_i|\mathfrak{L_i}\rangle\right) \,\mathcal{P}_{\{\mathfrak{L}_i\}_i}^\text{conn}(\Sigma)\qquad \in \bigotimes_i\mathcal{H}_\mathcal{T}(\Sigma_i).
\end{equation}
where the sum happens over all Lagrangian subgroups $\mathfrak{L}_i$ of the 3d TQFT discriminant group for each connected component $\Sigma_i$. Similarly, the total sum, which includes possible disconnected 3-manifolds, can be written as:
\begin{equation}
    |\Sigma;\mu\rangle_\mathcal{T}=\sum_{\{\text{Lagrangian }\mathfrak{L}_i\}_i} \left(\bigotimes_i|\mathfrak{L_i}\rangle\right) \,\mathcal{P}_{\{\mathfrak{L}_i\}_i}(\Sigma)\,e^{\zeta}\qquad \in \bigotimes_i\mathcal{H}_\mathcal{T}(\Sigma_i).
    \label{cosmological-state-lagrangian-decomposition}
\end{equation}
where, as before, $\zeta\equiv \mathcal{Z}_0$ is the sum over connected closed 3-manifolds. The coefficients $\mathcal{P}$ can be expressed as a polynomial in $\mathcal{P}^\text{conn}$. For example, when $\Sigma=\Sigma_1\sqcup \Sigma_2$ has two connected components, we have:
\begin{equation}
\mathcal{P}_{\mathfrak{L}_1,\mathfrak{L}_2}(\Sigma_1\sqcup\Sigma_2)=\mathcal{P}_{\mathfrak{L}_1,\mathfrak{L}_2}^\text{conn}(\Sigma_1\sqcup\Sigma_2)+\mathcal{P}_{\mathfrak{L}_1}^\text{conn}(\Sigma_1) \,\mathcal{P}_{\mathfrak{L}_2}^\text{conn}(\Sigma_2).   
\end{equation}

The next assumption that we make is less trivial, since we require that the final state in the sum over connected topologies only receives a nonzero contribution if all connected components share the same Lagrangian subgroup, that is:
\begin{equation}
    \mathcal{P}_{\{\mathfrak{L}_i\}_i}^\text{conn}(\Sigma)=\mathcal{P}^\text{conn}_{\mathfrak{L}_1}(\Sigma)\,\delta({\mathfrak{L}_1=\mathfrak{L}_2=\ldots)},
    \label{connected-delta-lagrangian}
\end{equation}
from which we deduce:
\begin{equation}
   |\Sigma;\mu\rangle_\mathcal{T}^\text{conn}=\sum_{\text{Lagrangian }\mathfrak{L}} \left(\bigotimes_i|\mathfrak{L}\rangle^{(i)}\right) \,\mathcal{P}_{\mathfrak{L}}^\text{conn}(\Sigma)\qquad \in \bigotimes_i\mathcal{H}_\mathcal{T}(\Sigma_i),
   \label{same-lagrangian-condition}
\end{equation}
where the sum is now over a single Lagrangian subgroup $\mathfrak{L}$, with $|\mathfrak{L}\rangle^{(i)}$ being the corresponding state in $\mathcal{H}_\mathcal{T}(\Sigma_i)$.

This assumption can be motivated as follows. As was discussed in Section \ref{sec:abelian-tqfts}, an abelian TQFT gives the same values on 3-manifolds that are related by drilling tubes between the boundary components. In particular, we have isomorphisms between the Hilbert spaces associated to the disjoint union of surfaces $\Sigma_i$ and their connected sum:
\begin{equation}
   \mathcal{H}_\mathcal{T}(\Sigma)\equiv  \bigotimes_i \mathcal{H}_\mathcal{T}(\Sigma_i)\cong\mathcal{H}_\mathcal{T}(\Sigma_1\#\Sigma_2\#\ldots).
   \label{iso-connected-sum}
\end{equation}
Under such isomorphism, the state $|\mathfrak{L}\rangle\in \mathcal{H}_\mathcal{T}(\Sigma_1\#\Sigma_2\#\ldots)$ is identified with $\bigotimes_i|\mathfrak{L}\rangle^{(i)} \in  \mathcal{H}_\mathcal{T}(\Sigma)$.

Moreover, we have identifications of the TQFT values $|\tilde{M};\tilde{\phi}\rangle\in \mathcal{H}_\mathcal{T}(\Sigma_1\#\Sigma_2\#\ldots)$ and $|M;\phi\rangle\in \mathcal{H}_\mathcal{T}(\Sigma)$, where $(\tilde{M},\tilde{\phi})$ is obtained from $(M,\phi)$ by drilling a thin tube between $\Sigma_i$ and $\Sigma_{i+1}$ for each $i$ -- so that $\tilde{\phi}:\Sigma_1\#\Sigma_2\#\ldots \xrightarrow{\cong}\partial M'$ and $\tilde{\phi}|_{\Sigma_{i}}=\phi|_{\Sigma_i}$. The assumption (\ref{same-lagrangian-condition}) can then be interpreted as a consequence of the condition that the summation measure also respects the operations of filling and drilling tubes between the boundary components inside a connected 3-manifold. Since we have not rigorously defined the sum over connected topologies (\ref{state-sum-over-connected-topologies}), but only its algebraic counterpart (which assumes the same topological boundary condition on all components), we will omit the explicit formulation of this condition on the measure $\mu([M,\phi])$. 

With this assumption the full state can be recovered from the sums $\mathcal{Z}_{g;\mathfrak{L}}$ that were considered in Section \ref{sec:sum-over-topologies} (in the subscript we now include explicit dependence on the choice of the topological boundary condition $\mathfrak{L}$). Namely, we have:
\begin{multline}
    \mathcal{Z}_{g;\mathfrak{L}'}\equiv 
\langle \mathfrak{L}'|\Sigma;\mu\rangle_\mathcal{T}^\text{conn}=\sum_{\text{Lagrangian }\mathfrak{L}'} \langle \mathfrak{L}'|\left(\bigotimes_i|\mathfrak{L}\rangle^{(i)}\right)\,\mathcal{P}_{\mathfrak{L}}^\text{conn}(\Sigma)=
\\
=
\sum_{\text{Lagrangian }\mathfrak{L}'}|\mathfrak{L}'\cap \mathfrak{L}|^g \,\mathcal{P}_{\mathfrak{L}}^\text{conn}(\Sigma),
\end{multline}
where $g=g_1+g_2+\ldots$ is the total genus of $\Sigma=\Sigma_1\sqcup \Sigma_2\sqcup\ldots$ We then have:
\begin{equation}
   \mathcal{P}^\text{conn}_{\mathfrak{L}}(\Sigma)=\sum_{\text{Lagrangian }\mathfrak{L}'}
   \Pi_{g;\mathfrak{L},\mathfrak{L}'}\,\mathcal{Z}_{g;\mathfrak{L}'},
   \label{cosmological-state-sum-over-lagrangians}
\end{equation}
where $\Pi_{g;\mathfrak{L},\mathfrak{L}'}$ are elements of the matrix $\Pi_{g}$ which is inverse to the matrix with components $|\mathfrak{L}'\cap \mathfrak{L}|^g$. Note that when $\Sigma$ itself is connected, the operator:
\begin{equation}
    \sum_{\mathfrak{L},\mathfrak{L}'}|\mathfrak{L}\rangle\Pi_{g;\mathfrak{L},\mathfrak{L}'}\langle \mathfrak{L}'|
\end{equation}
plays the role of the projector on the subspace of states invariant under the mapping class group action. In the case of a connected $\Sigma$, the formula (\ref{cosmological-state-sum-over-lagrangians}) then follows just from invariance of the measure under self-diffeomorphisms of $\Sigma$. 

We will now consider the implications of the above assumptions to the holographic correspondence. 
In Sections \ref{sec:general-distribution} and \ref{sec:holography}, we considered a holographically dual distribution of boundary 2d topological theories $\mathrm{T}_{\xi;\mathfrak{L}}$, where $\xi$ denotes the distribution variables and $\mathfrak{L}$ now explicitly indicates the choice of the topological boundary conditions. The holographic correspondence worked in the following way:
\begin{equation}
    e^{-\zeta}\,\langle \mathfrak{L}|\Sigma;\mu\rangle_\mathcal{T}=\langle \mathrm{T}_{\xi;\mathfrak{L}}(\Sigma)\rangle_\xi,
    \label{holographic-correspondence-bc-dependence}
\end{equation}
where $\mathrm{T}_{\xi;\mathfrak{L}}(\Sigma)$ is the partition function of the 2d TQFT on $\Sigma$, $\langle \ldots\rangle_\xi$ denotes the ensemble average, and $\zeta\equiv \mathcal{Z}_0$ is the sum over connected closed 3-manifolds.

Suppose the 2d TQFTs in the ensemble are of the following form\footnote{Recall the definition of the direct sum of $d$-dimensional TQFTs: it is the TQFT for which the Hilbert space associated to a \textit{connected} $(d-1)$-dimensional manifold is the direct sum of the Hilbert spaces of the summands, and the value on a \textit{connected} $d$-manifold is the sum of the values of the summands.}: 
\begin{equation}
    \mathrm{T}_{\xi;\mathfrak{L}}=\bigoplus_{\mathfrak{L}'}\mathrm{T}_{\mathfrak{L},\mathfrak{L}'}\otimes \hat{\mathrm{T}}_{\xi;\mathfrak{L}'},
    \label{2d-TQFT-factorization}
\end{equation}
where $\mathrm{T}_{\mathfrak{L},\mathfrak{L}'}$ is the TQFT obtained by the reduction of the 3d TQFT $\mathcal{T}$ on the interval with topological boundary conditions $\mathfrak{L}$ and $\mathfrak{L}'$ at the ends. More explicitly, its partition function on a surface $\Sigma$ of total genus $g$ is given by $\mathrm{T}_{\mathfrak{L},\mathfrak{L}'}(\Sigma)=\langle \mathfrak{L}|\mathfrak{L}'\rangle=|\mathfrak{L}\cap\mathfrak{L}'|^g$.

Plugging (\ref{2d-TQFT-factorization}) and (\ref{cosmological-state-lagrangian-decomposition}) into the left and right-hand side of (\ref{holographic-correspondence-bc-dependence}) respectively, we obtain:
\begin{equation}
    \mathcal{P}_{\{\mathfrak{L}_i\}_i}\left(\sqcup_i\Sigma_i\right)=\left\langle\textstyle\prod_i \hat{\mathrm{T}}_{\mathfrak{L}_i}(\Sigma_i)\right\rangle_\xi .
\end{equation}
The condition (\ref{connected-delta-lagrangian}) that the contribution to $\mathcal{P}_{\{\mathfrak{L}_i\}}(\Sigma)$ from connected topologies vanishes unless all $\mathfrak{L}_i$ are the same then can be interpreted as the condition that the distributions of the 2d TQFTs $\hat{\mathrm{T}}_{\xi;\mathfrak{L}}$ are independent for different $\mathfrak{L}$.

Moreover, the correlators of the partition functions of $\hat{\mathrm{T}}_{\xi;\mathfrak{L}}$ must satisfy the same abelian condition property as the one considered in Section \ref{sec:abelian-condition} for the correlators of the partition functions of  ${\mathrm{T}}_{\xi;\mathfrak{L}}$. Thus, the distribution of $\hat{\mathrm{T}}_{\xi;\mathfrak{L}}$ can be recovered from the knowledge of the expectation values $\langle \hat{\mathrm{T}}_{\xi;\mathfrak{L}}(\Sigma_g) \rangle_\xi$ for connected $\Sigma_g$ in the same way the distribution of ${\mathrm{T}}_{\xi;\mathfrak{L}}$ was recovered from $\langle \mathrm{T}_{\xi;\mathfrak{L}}(\Sigma_g) \rangle_\xi= \mathcal{Z}_g=x_g$ in Section \ref{sec:solving-abelian-condition}. However, unlike the previously obtained distribution of $\mathrm{T}_{\xi;\mathfrak{L}}$, the distribution of $\hat{\mathrm{T}}_{\xi;\mathfrak{L}}$ will be generically only formal: it will be non-normalizable and will lack positivity. This is because:
\begin{equation}
  \langle \hat{\mathrm{T}}_{\xi;\mathfrak{L}}(\Sigma_g) \rangle_\xi=\sum_{\text{Lagrangian }\mathfrak{L}'}
   \Pi_{g;\mathfrak{L},\mathfrak{L}'}\,\mathcal{Z}_{g;\mathfrak{L}'},
\end{equation}
and the matrix $\Pi_{g;\mathfrak{L},\mathfrak{L}'}$, which is the inverse to the matrix with the components $|\mathfrak{L}\cap\mathfrak{L}'|^g$, has negative components and is singular when $g=0$.

\subsection{An explicit example of general topological boundary conditions}

In the previous sections, we have focused on the most natural choice for the topological boundary conditions in untwisted Dijkgraaf-Witten theories with gauge group $G$. Namely, we have adopted the Neumann boundary condition, which is equivalent to $\mathfrak{L}_{\text{Neu}} = G \subset G \oplus G^*$. Another standard choice is the Dirichlet boundary condition, corresponding to $\mathfrak{L}_{\text{Dir}} = G^*$. Notice that untwisted DW theories always admit both boundary conditions, though these do not in general exhaust all possible choices. Indeed, depending on $G$, there may be more than just two Lagrangian subgroups. In particular, in the case of $G = \Z / {p^r}\Z$, the 3d theory admits a family of Lagrangian subgroups:
\begin{equation}
    \mathfrak{{L}}_k = \{(cp^{r-k},dp^k) \mid c \in \Z /{p^k}\Z, d\in \Z/{p^{r-k}} \Z\} \simeq \Z/{p^k}\Z \oplus (\Z/{p^{r-k}}\Z)^* \subset G \oplus G^*.
\end{equation}
Notice that the Neumann and Dirichlet boundary conditions are included in this family as extremal cases: $\mathfrak{L}_{\text{Neu}} \equiv \mathfrak{L}_r$ and $\mathfrak{L}_{\text{Dir}} \equiv \mathfrak{L}_0$. 

We plan to demonstrate how the procedure described in Section \ref{sec:bcDependence} can be carried over in a simple example. In particular, we compute the connected gravitational partition function, given any topological boundary condition, for a specific choice of the gauge group $G$.

Let $p$ be an odd prime and consider an untwisted Dijkgraaf-Witten theory with gauge group $G = \Z/{p^2}\Z$ and topological boundary condition given by $\mathfrak{L} \simeq \Z/p\Z \oplus \Z /p\Z \subset G \oplus G^*$. It is straightforward to repeat the computations of Section~\ref{sec:cyclicDW} with boundary state $c=(c_1p,c_2p) \in \mathfrak{L}^g$, for any $c_1,c_2 \in (\Z/p\Z)^g$.

As discussed in Section~\ref{sec:finiteForms}, it is enough to compute the partition function for the bilinear forms $X_{q^a}$, $Y_{q^a}$, $A_{2^a}$, $B_{2^a}$, $C_{2^a}$, $D_{2^a}$, $E_{2^a}$ and $F_{2^a}$.

Since $p$ is an odd prime, the partition function on a 3-manifold with the first homology group being 2-torsion is trivial:
\begin{equation}
    {Z}(\Z / 2^a \Z,A_{2^a}, f; D, q; c) = 1 / |G|,
\end{equation}
and the same holds by replacing $A_{2^a}$ with any other bilinear form on $\Z/{2^a}\Z$ or $(\Z/{2^a}\Z)^2$. A similar reasoning shows:
\begin{equation}
    Z(\Z/{q^a}\Z,X_{q^a}, f; D, q; c) = {Z}(\Z/{q^a}\Z,Y_{q^a}, f; D, q; c) = 1 /|G|,
\end{equation}
whenever $q \ne p$. 

We are now left with the case $q = p$. For the quadratic forms on $\Z/{p^a}\Z$ with $a \ge 2$, we get:
\begin{equation}
    Z(\Z/{p^a}\Z,X_{p^a}, f; D, q; c) = {Z}(\Z/{p^a}\Z, Y_{p^a}, f; D, q; c) = \frac{|G \otimes A_p|}{|G|} \, \delta( \theta_1) \delta( \theta_2 ),
\end{equation}
where $\theta = \sum_{i=1}^g c_i \otimes f_i \in \mathfrak{L}\otimes \Z/{p^a}\Z$ splits as $\theta = \theta_1 \oplus\theta_2$. On the other hand, when $a=1$ we obtain:
\begin{align}
    &{Z}(\Z/{p}\Z, X_{p}, f; D, q; c) = \frac{|G \otimes \Z/{p}\Z|}{|G|}  \exp\left[ 2 \pi i  X_p(\theta_1,\theta_2) \right],
     \\
    &{Z}(\Z/{p}\Z, Y_{p}, f; D, q; c) = \frac{|G \otimes \Z/{p}\Z|}{|G|} \exp\left[ 2 \pi i  Y_p(\theta_1,\theta_2) \right],
\end{align}
where it is understood that $\theta_i \in \Z/p\Z \cong \Z /p \Z \otimes \Z /p\Z$. 

For later convenience, we recombine these results in the partition function of the generic pair $(A_p,\ell_p) = ((\Z/p\Z)^l\oplus \tilde{A}_p, xX_p\oplus yY_p \oplus \tilde{\ell}_p)$, where $\tilde{A}_p$ consists of copies of $\Z/p^a\Z$ for $a>1$. Accordingly decomposing $\theta \oplus \tilde{\theta} \in \mathfrak{L}\otimes ((\Z/p\Z)^l\oplus \tilde{A}_p) $, we have:
\begin{equation}
    \label{eqn:weird-boundary-dw-partition-function}
    Z(A_p,\ell_p;D,q;\theta \oplus \tilde{\theta}) = \frac{|G \otimes A_p |}{|G|}  \,\delta(\tilde{\theta}_1) \delta(\tilde{\theta}_2) \exp\left[2\pi i (xX_p\oplus yY_p)(\theta_1,\theta_2) \right].
\end{equation}

Obtaining the gravitational partition function for this choice of topological boundary conditions is now straightforward, assuming the measure~\eqref{eqn:explicitMeasure}. Let us consider the factor corresponding to the particular prime $p$ in the expression~\eqref{sum-over-topologies-specific-measure} and plug into the TQFT partition function~\eqref{eqn:weird-boundary-dw-partition-function}, which results in:
\begin{multline}
    \sum_{\substack{(A_p,\ell_p):\\\text{finite abelian}\\p\text{-groups}\\\text{with pairings}}}|A_p|^{-s}  \frac{|\mathfrak{L}|^{g-1}\,}{|\mathfrak{L}\otimes {A_p}|}\, |G \otimes A_p | \times \\
     \sum_{\theta\in \mathfrak{L}\otimes (\Z/p\Z)^l}\sum_{\substack{\alpha\in (\mathfrak{L}\otimes {A}_p)^*\\\cong \hom({A}_p,\mathfrak{L}^*)}}|\ker{\alpha}|^{g} \exp \left[ 2\pi i\alpha\left(\theta \oplus 0 \right)+ 2\pi i (xX_p\oplus yY_p)(\theta_1,\theta_2) \right].
\end{multline}

Note that we can understand $\alpha$ as a $2\times b$ matrix over the finite field $\Z /p\Z$, where $b \equiv \text{dim}_{\Z/p\Z}(A_p \otimes \Z/p\Z)$ and the rows are given by $\alpha_i \in (\Z/p\Z)^b $ in $\alpha = \alpha_1 \oplus \alpha_2$. Then, $|\ker{\alpha}| = |A_p|, |A_p| p^{-1},$ or $|A_p|p^{-2}$ when $\alpha$ is the zero matrix, the rows of $\alpha$ are proportional one to each other, or the rows are linearly independent over $\Z/{p}\Z$ respectively. Then, we can rewrite the sum as:
\begin{align}
    & \notag
    \begin{aligned}
    (1-p^{-2g})\sum_{(A_p,\ell_p)} &|A_p|^{g-s} 
     \frac{|\mathfrak{L}|^{g-1}\,}{|\mathfrak{L}\otimes {A_p}|}\, |G \otimes A_p| \times \\
     &\sum_{\theta\in \mathfrak{L}\otimes (\Z/p\Z)^l} \exp \left[  2\pi i (xX_p\oplus yY_p)(\theta_1,\theta_2) \right]
     +
     \end{aligned} \\
     & \notag \begin{aligned}
     (p^{-g}-p^{-2g})
     \sum_{(A_p,\ell_p)}&|A_p|^{g-s} 
      \frac{|\mathfrak{L}|^{g-1}\,}{|\mathfrak{L}\otimes {A_p}|}\, |G \otimes A_p | \times \\ & \sum_{\theta\in \mathfrak{L}\otimes (\Z/p\Z)^l} \exp [2\pi i (xX_p\oplus yY_p)(\theta_1,\theta_2)] \sum_{\substack{\alpha\in (\mathfrak{L}\otimes {A}_p)^* \ne0\\\alpha_1 \varpropto \alpha_2}}\exp \left[ 2\pi i\alpha\left(\theta \oplus 0 \right) \right] +
      \end{aligned} \\
      & \begin{aligned}
     p^{-2g}
     \sum_{(A_p,\ell_p)}&|A_p|^{g-s} 
      \frac{|\mathfrak{L}|^{g-1}\,}{|\mathfrak{L}\otimes {A_p}|}\, |G \otimes A_p | \times \\ & \sum_{\theta\in \mathfrak{L}\otimes (\Z/p\Z)^l} \exp [2\pi i (xX_p\oplus yY_p)(\theta_1,\theta_2)] \sum_{\alpha\in (\mathfrak{L}\otimes {A}_p)^* }\exp \left[ 2\pi i\alpha\left(\theta \oplus 0 \right) \right].
      \end{aligned}
\end{align}
Now, we proceed to compute separately each summand. First, we easily evaluate the Gauss sum:
\begin{equation}
    \sum_{\theta\in \mathfrak{L}\otimes (\Z/p\Z)^l} \exp [2\pi i (xX_p\oplus yY_p)(\theta_1,\theta_2)]  = p^l.
\end{equation}
Then, we are left with the term corresponding to the sum over proportional rows in $\alpha$:
\begin{align}
     & \notag \begin{aligned}
     \sum_{\theta\in \mathfrak{L}\otimes (\Z/p\Z)^l} \exp [2\pi i (xX_p\oplus yY_p)(\theta_1,\theta_2)] \sum_{\substack{\alpha\in (\mathfrak{L}\otimes {A}_p)^* \ne0\\\alpha_1 \varpropto \alpha_2}}\exp \left[ 2\pi i\alpha\left(\theta \oplus 0\right) \right] =
      \end{aligned} \\
      & \notag \quad \begin{aligned}
     &2  \sum_{\theta\in \mathfrak{L}\otimes (\Z/p\Z)^l} \exp [2\pi i (xX_p\oplus yY_p)(\theta_1,\theta_2)]\sum_{\alpha \in (\Z/p\Z \otimes A_p)^* \ne 0} \exp \left[ 2\pi i\alpha\left(\theta_1 \oplus 0\right) \right] + \\
    &\sum_{\lambda \in (\Z/p\Z)^{\times}} \sum_{\theta\in \mathfrak{L}\otimes (\Z/p\Z)^l} \exp [2\pi i (xX_p\oplus yY_p)(\theta_1,\theta_2)]\sum_{\alpha \in (\Z/p\Z \otimes A_p)^* \ne 0} \exp \left[ 2\pi i\alpha\left(\theta_1 + \lambda \theta_2 \oplus 0\right) \right] =
    \end{aligned} \\
    & \notag \quad \begin{aligned}
        &2 \sum_{\theta\in \mathfrak{L}\otimes (\Z/p\Z)^l}  |\Z/p\Z\otimes A_p| \, \delta(\theta_1) - (p+1) \sum_{\theta\in \mathfrak{L}\otimes (\Z/p\Z)^l} \exp [2\pi i (xX_p\oplus yY_p)(\theta_1,\theta_2)] + \\ 
        & \sum_{\lambda \in (\Z/p\Z)^{\times}} \sum_{\theta\in \mathfrak{L}\otimes (\Z/p\Z)^l} |\Z/p\Z\otimes A_p| \, \delta(\theta_1 + \lambda \theta_2) \exp [2\pi i (xX_p\oplus yY_p)(\theta_1,\theta_2)] =
    \end{aligned} \\
    & \quad \begin{aligned}
    2p^l \, \sqrt{| \mathfrak{L} \otimes A_p|} - (p+1)p^l+\sqrt{| \mathfrak{L} \otimes A_p|}\sum_{\lambda \in (\Z/p\Z)^{\times}} (-1)^y \,\varepsilon_p^l\, p^{l/2} \left(\frac{(-\lambda)^l}{p} \right) ,
    \end{aligned}
\end{align}
where in the last line we have evaluated the Gauss sum:
\begin{equation}
    \sum_{\theta\in  (\Z/p\Z)^l} \exp [- 2\pi i (xX_p\oplus yY_p)( \lambda \theta,\theta)] =  (-1)^y \,\varepsilon_p^l\, p^{l/2} \left(\frac{(-\lambda)^l}{p} \right) .
\end{equation}
Putting all together, we obtain:
\begin{multline}
    \sum_{(A_p,\ell_p)}|A_p|^{g-s}  \frac{|\mathfrak{L}|^{g-1}\,}{|\mathfrak{L}\otimes {A_p}|}\, |G \otimes A_p | \biggl( (1-p^{-2g}) p^l + \\
     (p^{-g}-p^{-2g})\left(2p^l \sqrt{|\mathfrak{L}\otimes A_p|}- (p+1)p^l+(p-1)\, \delta(A_p=0)\right ) + p^{-2g}|\mathfrak{L} \otimes A_p | \biggr),
\end{multline}
after having noticed that the factor $(-1)^y$ localizes the sum over all pairs $(A_p,\ell_p)$ on the trivial group $A_p = 0$. Indeed, recall that on each $(\Z/{p^m}\Z)^d$ there are only two quadratic forms with $y=0$ and $y=1$, respectively.
Finally, introducing the quantities $b$ and $l$ as in the decomposition:
\begin{equation}
    A_p = \bigoplus_{i=1}^b \Z / p^{m_i} \Z = (\Z/p\Z)^l \oplus \tilde{A}_p,
\end{equation}
we can rewrite the above as:
\begin{align}
    \notag
    &p^{-2} \sum_{(A_p,\ell_p)}|A_p|^{g-s}  \left((p^g-1)(p^g -p) + 2(p^g-1)p^b+(p-1)(p^g-1) \,\delta(A_p = 0) + p^{2b-l}\right) = \\
    & \quad  \begin{aligned}
    p^{-2} \biggl((p^g-1)(p^g -p) & \,\Xi_p(s-g)  + 2(p^g-1)\prod_{m\ge1}\frac{1+p^{1-m(s-g)}}{1-p^{1-m(s-g)}} \\
    &+(p-1)(p^g-1) + \frac{1+p^{1-(s-g)}}{1-p^{1-(s-g)}} \prod_{m\ge 2}\frac{1+p^{2-m(s-g)}}{1-p^{2-m(s-g)}}  \biggr).
    \end{aligned}
\end{align}
In the last equality, we have used a refined version of~\eqref{counting-pnot3-groups}:
\begin{equation}
    \sum_{\substack{(A_p,\ell_p):\\\text{finite abelian}\\p\text{-groups}\\\text{with pairings}}}|A_p|^{-s} \, x^b\, y^l= \frac{1+xy\,p^{-s}}{1-xy\,p^{-s}} \prod_{m\ge 2}\frac{1+xp^{-ms}}{1-xp^{-ms}}.
\end{equation}

Combining the factors for different primes, we arrive at the final result for the sum over connected topologies:
\begin{multline}
     \mathcal{Z}_g(D,q;\mathfrak{L};\mu)= \\
    \frac{\Xi(s-g)}{1-\lambda|D|^{\frac{1}{2}}}\cdot p^{-2}
    \biggl(
  (p^g-1)(p^g-p)+2(p^g-1)\prod_{m\geq 1}\frac{1-p^{-m(s-g)}}{1+p^{-m(s-g)}}\frac{1+p^{1-m(s-g)}}{1-p^{1-m(s-g)}} + \\ 
  (p-1)(p^g-1)\prod_{m\geq 1}\frac{1-p^{-m(s-g)}}{1+p^{-m(s-g)}} + \frac{1+p^{1-(s-g)}}{1-p^{1-(s-g)}} \frac{1-p^{-(s-g)}}{1+p^{-(s-g)}} \prod_{m\geq 2}\frac{1-p^{-m(s-g)}}{1+p^{-m(s-g)}} \frac{1+p^{2-m(s-g)}}{1-p^{2-m(s-g)}}
    \biggr).
\end{multline}

For completeness, we include the results corresponding to Neumann and Dirichlet boundary conditions. We avoid diving into the details of the computation since it is a slight modification of the above derivation. We find:
\begin{multline}
     \mathcal{Z}_g(D,q;\mathfrak{L}_{\text{Neu}};\mu)= \\
    \frac{\Xi(s-g)}{1-\lambda|D|^{\frac{1}{2}}}\cdot p^{-2}
    \biggl(
  (p^{2g}-p^g)+(p^g-1)\prod_{m\geq 1}\frac{1-p^{-m(s-g)}}{1+p^{-m(s-g)}}\frac{1+p^{1-m(s-g)}}{1-p^{1-m(s-g)}} + \\ 
  \frac{1+p^{1-(s-g)}}{1-p^{1-(s-g)}} \frac{1-p^{-(s-g)}}{1+p^{-(s-g)}} \prod_{m\geq 2}\frac{1-p^{-m(s-g)}}{1+p^{-m(s-g)}} \frac{1+p^{2-m(s-g)}}{1-p^{2-m(s-g)}}
    \biggr).
\end{multline}
Notice that imposing Dirichlet boundary conditions would yield the same connected gravitational partition function since $\mathfrak{L}_{\text{Dir}} =(\mathfrak{L}_{\text{Neu}})^*$ are dual to each other.

Thus, we have obtained all the ingredients needed to recover the full state via the expressions (\ref{same-lagrangian-condition}) and (\ref{cosmological-state-sum-over-lagrangians}).

\subsection{A simple twisted Dijkgraaf-Witten theory}

One should be interested also in twisted Dijkgraaf-Witten theories because, together with the untwisted theories, they represent all the possible abelian 3d TQFTs admitting topological boundary conditions. Indeed, it was proven in~\cite{Kaidi:2021gbs} that a theory admitting topological boundary conditions is necessarily a Dijkgraaf-Witten theory.
The problem, however, is that the expression for partition function of such theories is more involved, so that the explicit evaluation of the corresponding gravitational partition function becomes more complicated. Thus we only include a simple example of a twisted DW to show that in principle it is still possible to explicitly compute the gravitational partition function for a simple choice of measure.

Let $p$ be an odd prime. Consider an abelian Chern-Simons theory with level:
\begin{equation}
    K = 
    \begin{pmatrix}
    0 &p \\
    p &2
    \end{pmatrix}.
\end{equation}
This theory realizes a twisted Dijkgraaf-Witten with gauge group $\Z /p\Z$. See e.g.~\cite{Guo:2018vij} for a review of the correspondence between these two description of the 3d theory. For our purposes, we just remark that the discriminant group is given by $D = \Z/{p^2}\Z$, while the quadratic form is:
\begin{equation}
    q(x) = \left\{
    \begin{array}{@{}cl}
         x^2/p^2  & \text{if } p= 1 \mod 4,  \\
         -x^2/p^2  &\text{if } p= 3 \mod 4.
    \end{array}
    \right.
\end{equation}

Being a DW TQFT, this theory admits topological boundary conditions. In particular, we fix the Lagrangian subgroup $\mathfrak{L} = \Z /p \Z \subseteq D$, given by the inclusion $1 \mapsto p$.

As in the previous subsection, it is sufficient to consider the partition function on 3-manifolds with the first homology group being $p$-torision, for the chosen prime $p$. Let $c$ be an element of $\mathfrak{L}^g$. Then, for $ \Z /{p^a}\Z$ with $a \ge 2$, we obtain:
\begin{equation}
    Z(\Z /{p^a}\Z, X_{p^a}, f; D, q; c) = {Z}(\Z /{p^a}\Z, Y_{p^a}, f; D, q; c) = \frac{1}{|\mathfrak{L}|} \sqrt{|\Z /{p^a}\Z \otimes D|} \, \delta( \theta),
\end{equation}
where we recall $\theta = \sum_{i=1}^gc_i \otimes f_i \in \mathfrak{L}\otimes \Z /{p^a}\Z$. Finally, we are left with the case $a=1$:
\begin{align}
    &{Z}(\Z /{p}\Z, X_{p}, f; \Z_{p^2}, q; c) = \frac{1}{|\mathfrak{L}|} \sqrt{|\Z /{p}\Z \otimes D|} \left\{
    \begin{array}{@{}cl}
         \exp \left[ 2\pi i (b\otimes  X_p) (\theta,\theta) \right]  & \text{if } p=1 \mod 4, \\
         i\exp \left[ -2\pi i(b\otimes  X_p)(\theta,\theta) \right]  & \text{if } p=3 \mod 4,
    \end{array}
    \right. \\
    &{Z}(\Z /{p}\Z, Y_{p}, f; \Z_{p^2}, q; c) =-\frac{1}{|\mathfrak{L}|} \sqrt{|\Z /{p}\Z \otimes D|} \left\{
    \begin{array}{@{}cl}
        \exp \left[ 2\pi i (b \otimes  Y_p)(\theta,\theta) \right]  & \text{if } p=1 \mod 4, \\
        i \exp \left[ -2\pi i  (b \otimes  Y_p)(\theta,\theta) \right] & \text{if } p=3 \mod 4,
    \end{array}
    \right.
\end{align}
where we have introduced the function $b: \mathfrak{L}\times\mathfrak{L} \to \Z$, defined by $b(x,y) = \hat{x} \hat{y}$, given a lift $\hat{(\cdot)}$ from $\mathfrak{L}$ to $\Z$.

For later convenience, as in the previous subsection, we recombine these results in the partition function of the generic pair $(A_p,\ell_p) = ((\Z/p\Z)^l\oplus \tilde{A}_p, xX_p\oplus yY_p \oplus \tilde{\ell}_p)$. Accordingly decomposing $\theta \oplus \tilde{\theta} \in \mathfrak{L}\otimes ((\Z/p\Z)^l\oplus \tilde{A}_p) $, we have:
\begin{equation}
    \label{eqn:twisted-dw-partition-function}
    Z(A_p,\ell_p;D,q;\theta \oplus \tilde{\theta}) = \frac{\sqrt{|A_p \otimes D|}}{|\mathfrak{L}|} (-1)^y\varepsilon_p^l  \, \delta(\tilde{\theta}) \exp\left[2\pi i \varepsilon_p^2 (b \otimes(xX_p\oplus yY_p))(\theta,\theta) \right],
\end{equation}
where $\varepsilon_p$ is either $1$ or $i$, depending whether $p =1 \mod{4}$ or $p=3 \mod{4}$.

We now assume the measure~\eqref{eqn:explicitMeasure} and compute the gravitational partition functions for this specific twisted DW TQFT. It is enough to consider the factor corresponding to the particular prime $p$ in the expression~\eqref{sum-over-topologies-specific-measure}. Plugging~\eqref{eqn:twisted-dw-partition-function} into  the partition function, yields:
\begin{multline}
    \sum_{\substack{(A_p,\ell_p):\\\text{finite abelian}\\p\text{-groups}\\\text{with pairings}}}|A_p|^{-s} (-1)^y  \varepsilon_p^l \,\frac{|\mathfrak{L}|^{g-1}\,}{|\mathfrak{L}\otimes {A_p}|}\, \sqrt{|D \otimes A_p |} \times \\
     \sum_{\theta\in \mathfrak{L}\otimes (\Z/p\Z)^l}\sum_{\substack{\alpha\in (\mathfrak{L}\otimes {A}_p)^*\\\cong \hom({A}_p,\mathfrak{L}^*)}}|\ker{\alpha}|^{g} \exp \left[ 2\pi i\alpha\left(\theta \oplus 0\right)+ 2\pi i \varepsilon_p^2(b \otimes(xX_p\oplus yY_p))(\theta,\theta) \right].
\end{multline}

Now, the trick is again to notice that all maps $\alpha_p:A_p\rightarrow \Z/p\Z$ have kernel $|\ker{\alpha}| = |A_p| p^{-1}$, except the zero map. Then, we reorganize the sum as:
\begin{align}
    & \notag
    \begin{aligned}
    (1-p^{-g})
    \sum_{(A_p,\ell_p)} &|A_p|^{g-s} (-1)^y \varepsilon_p^l \,
     \frac{|\mathfrak{L}|^{g-1}\,}{|\mathfrak{L}\otimes {A_p}|}\, \sqrt{|D \otimes A_p|} \times \\
     &\sum_{\theta\in \mathfrak{L}\otimes (\Z/p\Z)^l} \exp \left[  2\pi i \varepsilon_p^2(b \otimes(xX_p\oplus yY_p))(\theta,\theta) \right]
     +
     \end{aligned} \\
     & \begin{aligned}
     p^{-g}
     \sum_{(A_p,\ell_p)}&|A_p|^{g-s} (-1)^y \varepsilon_p^l \,
      \frac{|\mathfrak{L}|^{g-1}\,}{|\mathfrak{L}\otimes {A_p}|}\, \sqrt{|D \otimes A_p |} \times \\ & \sum_{\theta\in \mathfrak{L}\otimes (\Z/p\Z)^l} \exp [2\pi i \varepsilon_p^2(b \otimes(xX_p\oplus yY_p))(\theta,\theta)] \sum_{\substack{\alpha\in (\mathfrak{L}\otimes {A}_p)^*\\\cong \hom({A}_p,\mathfrak{L}^*)}}\exp \left[ 2\pi i\alpha\left(\theta \oplus 0\right) \right].
      \end{aligned}
\end{align}

Finally, we make use of the following equality for quadratic Gauss sums:
\begin{equation}
    (-1)^y\ \varepsilon_p^l\sum_{\theta\in \mathfrak{L}\otimes (\Z/p\Z)^l} \exp \left[  2\pi i \varepsilon_p^2(b \otimes(xX_p\oplus yY_p))(\theta,\theta) \right] = p^{l/2},
\end{equation}
and obtain:
\begin{multline}
    |\mathfrak{L}|^{g-1}
    \sum_{(A_p,\ell_p)}|A_p|^{g-s} \left(  (1-p^{-g}) 
    +p^{-g}
     (-1)^y \varepsilon_p^l
     \, \sqrt{|D \otimes A_p |} \right)
    = \\ |\mathfrak{L}|^{g-1} \left((1-p^{-g})\,\Xi_p(s-g) + p^{-g} \right).
\end{multline}
In the last equality, we have used a refined version of~\eqref{counting-pnot3-groups}:
\begin{equation}
    \sum_{(A_p,\ell_p)}|A_p|^{g-s} (-1)^y \varepsilon_p^l
     \, \sqrt{|D \otimes A_p |} = 1.
\end{equation}
We again notice the factor $(-1)^y$, which kills every contribution but the one corresponding to the trivial group $A_p = 0$.

Combining the factors for different primes, we arrive at the final result for the sum over topologies:
\begin{equation}
     \mathcal{Z}_g(D,q;\mathfrak{L};\mu)=
    \frac{\Xi(s-g)}{1-\lambda|D|^{\frac{1}{2}}}\cdot 
    \left(
  p^{-1}(p^g-1)+ p^{-1}\prod_{m\geq 1}\frac{1-p^{-m(s-g)}}{1+p^{-m(s-g)}}
    \right).
\end{equation}

\section*{Acknowledgments}

P.P. would like to thank Lorenzo Di Pietro, Anton Kapustin, Zohar Komargodski, Kimyeong Lee for discussions on the topic. T.N. is grateful to Mohamed Aliouane and Leonardo Goller for valuable conversations regarding this subject. The research of T.N. is partly supported by the INFN Iniziativa Specifica GAST and Indam GNFM.

\begin{appendix}

\section{Proof of the Proposition~\ref{prop:dist-solution}}
\label{sec:proofProp2}

We give a proof of the Proposition~\ref{prop:dist-solution}, starting with the first statement. Let us introduce:
\begin{equation}
    \Gamma(p) \coloneqq c_{n- \sum_i m_{p_i}} \prod_i \delta(p_i) \, \frac{m_{p_i}!}{(m_{p_i} -|p_i|)!},
\end{equation}
so that:
\begin{equation}
    f_{N,\theta^{-1}_1, \dots, \theta^{-1}_k}(n, m_1, \dots, m_k) = e^{-x_0} \frac{(n-k)!}{n!} \sum_{p \in P(\{ 1, \dots k\}) } \Gamma(p).
\end{equation}

Given a partition $q$ of the set $\{1,\dots,k+1\}$, we can always obtain a partition $p = \{p_1,\dots, p_l\}$ of $\{1,\dots,k\}$ by removing the element $k+1$. In particular, $q$ can only have two different forms:
\begin{align}
\notag
    q^{(j)} &= \{ p_1, \dots, p_j \cup \{k+1\}, \dots, p_l \} ,\qquad j=1,\dots,l,\\
    \bar{q} &= \{ p_1, \dots, p_l, \{ k+1\}\}.
\end{align}
Thus, instead of summing over $q\in P(\{1,\dots,k+1\})$, we may take the sum over $p\in P(\{1,\dots,k\})$ and then, at a fixed $p$, sum over the $l+1$ ways to include $\{k+1\}$ in $p$.

Having to evaluate:
\begin{equation}
    \sum_{m_{k+1}} f_{N,\theta^{-1}_1, \dots, \theta^{-1}_{k+1}}(n, m_1, \dots, m_{k+1}),
\end{equation}
we start with computing:
\begin{align}
    &\sum_{m_{k+1}} \Gamma(q^{(j)}) = (m_{p_j} - |p_j|) \, \Gamma(p), \\
    &\sum_{m_{k+1}} \Gamma(\bar{q}) = (n - \sum_i m_{p_i}) \,\Gamma(p).
\end{align}

It is then easy to deduce:
\begin{align}
\notag
    \sum_{m_{k+1}} f_{N,\theta^{-1}_1, \dots, \theta^{-1}_{k+1}}(n, m_1, \dots, m_{k+1}) 
    &= e^{-x_0} \frac{(n-k-1)!}{n!} \, (n-k ) \sum_{p \in P(\{ 1, \dots k\}) } \Gamma(p)\\ 
    &= f_{N,\theta^{-1}_1, \dots, \theta^{-1}_{k}}(n, m_1, \dots, m_{k}),
\end{align}
as claimed.

Next, we show that the proposed distribution satisfies the abelian condition:
\begin{equation}
\label{eqn:partitionAbelianCondition}
\sum_{p \in P(\{ g_1, \dots g_k\}) } \prod_{q \in p} \mathbb{E} \left[ N \theta^{1- \sum_{g \in q} g } \right]
= \sum_{p \in P(\{ g_1, \dots g_k\}) }   \mathbb{E} \left[ \frac{N!}{(N-|p|)!} \prod_{q \in p} \theta_q^{\sum_{g \in q} (1-g)} \right].
\end{equation}

The first step is to consider the addend on the left-hand side.

\begin{lemma} 

Fix a partition $p = \{p_1,\dots,p_l\} \in P(\{g_1,\dots,g_k\})$, then:
\begin{equation}
\label{eqn:lefthandsideComputed}
\prod_{i=1}^l \mathbb{E} \left[ N \theta^{1- \sum_{g \in p_i} g} \right] = e^{-x_0} \sum_{n\ge l} \sum_{m_1 + \dots + m_l = l}^n c_{n - m_1 - \dots - m_l} \prod_{i=1}^l b_{m_i} \, m_i^{\sum_{g \in p_i} g} .
\end{equation}

Moreover, it is convergent as long as $x_k$ -- defined as in~\eqref{eqn:DirichletSeriesHypothesis} -- is absolutely convergent, where $k = \max_{i}( \sum_{g \in p_i} g )$.
\end{lemma}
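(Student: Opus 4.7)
The strategy is to evaluate each side of \eqref{eqn:lefthandsideComputed} directly and match them, the whole argument being driven by the identity $\sum_{k\geq 0}c_k = e^{x_0}$ (which follows from evaluating $\sum_k c_k x^k = \exp[\sum_k b_k x^k]$ at $x=1$, assuming $x_0$ is convergent).

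First I would evaluate a single factor $\mathbb{E}[N\theta^{1-G_i}]$, where $G_i \coloneqq \sum_{g\in p_i} g$. By the Dirichlet-series hypothesis \eqref{eqn:DirichletSeriesHypothesis}, this is just $x_{G_i} = \sum_{m\geq 1} b_m m^{G_i}$. One can also check this directly using the explicit form $f_{N,\theta^{-1}}(n,m) = \frac{e^{-x_0}}{n} m\, b_m c_{n-m}$:
\begin{equation*}
\mathbb{E}[N\theta^{1-G_i}] = \sum_{n,m} \frac{e^{-x_0}}{n}\,m\,b_m c_{n-m}\cdot n\cdot m^{G_i-1} = e^{-x_0}\sum_{m\geq 1} b_m m^{G_i} \sum_{k\geq 0} c_k = \sum_{m\geq 1} b_m m^{G_i},
\end{equation*}
after replacing $n = m+k$ and using $\sum_{k\geq 0}c_k = e^{x_0}$. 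Taking the product over $i$ gives $\prod_{i=1}^l \mathbb{E}[N\theta^{1-G_i}] = \prod_{i=1}^l \sum_{m_i\geq 1} b_{m_i} m_i^{G_i}$.

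Next I would transform the target expression on the right of \eqref{eqn:lefthandsideComputed} in the opposite direction. Introducing the auxiliary variable $s = m_1+\dots+m_l$, which satisfies $s\geq l$ since each $m_i\geq 1$, I exchange the order of summation, summing first over the tuples $(m_1,\dots,m_l)$ with fixed $s$ and then over $n\geq s$. The inner sum over $n$ yields $\sum_{n\geq s} c_{n-s} = \sum_{k\geq 0}c_k = e^{x_0}$, which cancels the overall prefactor $e^{-x_0}$. Dropping the now-irrelevant constraint on $s$, one obtains $\sum_{m_1,\dots,m_l\geq 1}\prod_i b_{m_i}m_i^{G_i} = \prod_i \sum_{m_i\geq 1}b_{m_i}m_i^{G_i}$, which matches the first paragraph and establishes the identity.

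For the convergence statement, I would note that all $b_m$ are non-negative, so absolute and ordinary convergence coincide and Tonelli's theorem justifies the interchanges above. If $x_k = \sum_m b_m m^k$ converges with $k = \max_i G_i$, then $\sum_m b_m m^{G_i} \leq \sum_m b_m m^k < \infty$ for each $i$, so the product converges. I do not anticipate any serious obstacle here: the whole lemma is essentially a direct rearrangement, the only delicate point being the convergence of $\sum_k c_k = e^{x_0}$, which is guaranteed as soon as $x_0$ is finite.
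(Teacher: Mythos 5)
Your proof is correct, and it reaches the identity by a genuinely different (and somewhat more economical) route than the paper. The paper first establishes the $l=1$ case in the two equivalent forms $\mathbb{E}[N\theta^{1-G}]=e^{-x_0}\sum_{n}\sum_{m\le n}m^{G}b_m c_{n-m}=x_G$, and then proceeds by induction on $l$, using the Cauchy product of series to merge the $(l{+}1)$-st factor into the $l$-fold sum; in other words, it \emph{builds up} the right-hand side of \eqref{eqn:lefthandsideComputed} from the product of single-factor series. You instead \emph{collapse} the right-hand side: you exchange the order of summation so that the sum over $n$ at fixed $(m_1,\dots,m_l)$ produces $\sum_{k\ge 0}c_k=e^{x_0}$, cancelling the prefactor and leaving the factorized product $\prod_i x_{G_i}$, which you match against the direct evaluation of each expectation. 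Both arguments hinge on exactly the same normalization identity $e^{-x_0}\sum_{k}c_k=1$ (which the paper obtains from $\mathbb{E}[1]=1$ rather than from evaluating the generating function at $x=1$, but these are the same statement), and both rearrangements are legitimate because all terms are non-negative; your explicit appeal to Tonelli makes this cleaner than the paper's implicit use of it in the Cauchy product step. Your treatment of the convergence claim --- bounding $x_{G_i}\le x_k$ for $k=\max_i G_i$ using $m\ge 1$ --- is also essentially what the paper intends, stated slightly more explicitly. No gaps.
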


\begin{proof}
We start by looking at the case $l=1$:
\begin{multline}
\mathbb{E} \left[ N \theta^{1- \sum_{i =1}^r g_i} \right] = 
\\
=
\sum_{n\ge1} \sum_{m=1}^n n \, m^{\sum_{i=1}^r g_i - 1} f_{N,\theta^{-1}}(n,m) 
= e^{-x_0} \sum_{n\ge1} \sum_{m=1}^n m^{\sum_{i=1}^r g_i}  \, b_m c_{n-m}.
\end{multline}
We want to prove the statement by induction on $l$. However, first notice that we can also write:
\begin{equation}
\mathbb{E} \left[ N \theta^{1-G} \right] 
= e^{-x_0} \sum_{n\ge1} \sum_{m=1}^n m^G \,b_m c_{n-m} = e^{-x_0} \sum_{m\ge1} m^G \, b_m \sum_{n \ge m} c_{n-m} = \sum_{m\ge1} m^G \, b_m = x_G,
\end{equation}
since, by normalization, we have:
\begin{equation}
e^{-x_0} \sum_{n \ge 0} c_n = \mathbb{E}[1] = 1.
\end{equation}

Suppose that the claim holds for $l$, we show that it also holds for $l+1$:
\begin{multline}
\prod_{i=1}^{l+1} \mathbb{E} \left[ N \theta^{1- \sum_{g \in p_i} g} \right]  =\mathbb{E} \left[ N \theta^{1-\sum_{g \in p_{l+1}} g} \right]   \prod_{i=1}^l \mathbb{E} \left[ N \theta^{1- \sum_{g \in p_i} g} \right] =
\\
=  e^{-x_0} \sum_{n\ge l} \sum_{m_1 + \dots + m_l = l}^n c_{n - m_1 - \dots - m_l} \prod_{i=1}^l b_{m_i} \,m_i^{\sum_{g \in p_i} g}  \sum_{m_{l+1}\ge1} m_{l+1}^{\sum_{g \in p_{l+1}} g} \, b_{m_{l+1}},
\end{multline}
by having used the inductive hypothesis and the remark about the case $l=1$. Finally, we use the Cauchy product of series to  get:
\begin{equation}
\prod_{i=1}^{l+1} \mathbb{E} \left[ N \theta^{1- \sum_{g \in p_i} g} \right] = e^{-x_0} \sum_{n \ge l+1} \sum_{m_{l+1} = 1}^{n-l} \sum_{m_1 + \dots + m_l =l}^{n-m_{l+1}} c_{n - m_1 - \dots - m_l - m_{l+1}} \prod_{i=1}^{l+1} b_{m_i} \, m_i^{\sum_{g \in p_i} g} .
\end{equation}
\end{proof}

Let us introduce the following notation:
\begin{equation}
    a(n,\{p_1, \dots, p_l\}) = \sum_{m_1 + \dots + m_l = l}^n c_{n - m_1 - \dots - m_l} \prod_{i=1}^l b_{m_i} \, m_i^{\sum_{g \in p_i} g} ,
\end{equation}
for each addend in the series in equation~\eqref{eqn:lefthandsideComputed}.

Now, we focus on the study of the addends of the right-hand side of~\eqref{eqn:partitionAbelianCondition}. We simply substitute for the expression of the probability density function. Let us do it for the following addend:
\begin{align}
b(n, \{p_1, \dots, p_l\}) &= e^{x_0} \sum_{m_1, \dots, m_l} f_{N,\theta_1^{-1}, \dots, \theta_l^{-1}}(n,m_1, \dots, m_l) \, \frac{n!}{(n-l)!} \prod_{i=1}^l m_i^{\sum_{g \in p_i} (g-1)} \\
& =  \sum_{m_1, \dots, m_l} \sum_{\rho \in P(\{1, \dots, l\})} c_{n - \sum_{q \in \rho} m_{q} } \prod_{q \in \rho} \delta(q) \, b_{m_q} \frac{m_q!}{(m_q - |q|)!} \prod_{i=1}^l m_i^{\sum_{g \in p_i} (g-1)}. \notag
\end{align}

Notice that, taking the sum over all the $m_i$'s, the effect of $\delta(\rho_\alpha)$ is to identify some of them. In particular, $\delta( \rho_\alpha = \{ \mu_1, \dots, \mu_r \})$ identifies $m_{\mu_1} = \dots = m_{\mu_r} \eqqcolon  m_\alpha$, so that the sum over the $m_i$'s is restricted to a sum over the $m_\alpha$'s, $\alpha = 1, \dots, \lambda$. This is equivalent to \textit{fusing} the partitions $p = \{p_1, \dots, p_l\}$ and $\rho$ into a new partition $F_p(\rho) \in P(\{ g_1, \dots, g_k\})$, which we call the fusion of $p$ along $\rho$.

\begin{definition}
Define the fusion function as follows:
\begin{align}
\notag
F \colon P(\{g_1, \dots, g_k\}) \times P(\{1, \dots, k\}) &\to P(\{g_1, \dots, g_k\}), \\
 (\{p_1, \dots, p_l\},\{\rho_1, \dots, \rho_\lambda\}) &\mapsto \{ \tilde{p}_1, \dots, \tilde{p}_\lambda \}
\end{align}
where $\tilde{p}_\alpha = \bigcup_{i \in \rho_\alpha} p_i$, with $\tilde{p}_\alpha$ possibly empty if $i \ge |p| = l$ for all $i \in \rho_\alpha$.
\end{definition}
In our case $\rho$ will always be an  element of $P(\{1, \dots, l\})$, which can be seen as a subset of $P(\{1, \dots, k\})$ by appending $\{ l+1, \dots, k \}$ to $\rho$.

Then, noticing that:
\begin{equation}
m_\alpha^{\sum_{g \in \tilde{p}_\alpha}(g-1)} = m_\alpha^{\sum_{i = 1}^{|\rho_\alpha|} \sum_{g \in p_i}(g-1)} = \prod_{i=1}^{|\rho_\alpha|} m_\alpha^{\sum_{g \in p_i}(g-1)},
\end{equation}
we can use the fusion to rewrite:
\begin{align}
\notag
b(n, \{p_1, \dots, p_l\}) &= \sum_{\rho \in P(\{1,\dots,l\})} \sum_{m_1, \dots, m_\lambda} c_{n - \sum_{\alpha = 1}^\lambda m_\alpha} \prod_{\alpha = 1}^\lambda b_{m_\alpha} \frac{m_\alpha!}{(m_\alpha- |\rho_\alpha|)!} \, m_\alpha^{\sum_{g \in \tilde{p}_\alpha}(g-1)} \\
& = \sum_\rho \sum_{m_1, \dots, m_{|F(p,\rho)|} } c_{n - \sum_{q \in F(p,\rho)} m_q} \prod_{q \in F(p,\rho)} b_{m_q} \frac{m_\alpha!}{(m_\alpha - |\rho_\alpha|)!} \,  m_q^{\sum_{g \in q} (g-1) }.
\end{align}



In other words, $b$ does not depend directly on $p$ but on $F(p,\rho)$. Hence, we would like to perform a smart change of summation indices. We have the following:

\begin{lemma}
The function
\begin{align}
\notag
P(\{g_1,\dots,g_k\}) \times P(\{1, \dots, k\}) &\to P(\{g_1,\dots,g_k\}) \times P(\{g_1,\dots,g_k\}),\\
(p,\rho) & \mapsto (F(p,\rho), p)
\end{align}
establishes a one-to-one correspondence between pairs $(p,\rho)$ with $\rho \in P(\{1,\dots,|p|\})\subseteq P(\{1, \dots, k\})$ and pairs $(p',q)$ with $p' \in P(\{g_1,\dots,g_k\})$ and $q \in \mathcal{F}(p') \subseteq P(\{g_1,\dots,g_k\})$. We have introduced $\mathcal{F}(p')$ as the set of possible partitions that can yield $p$ via fusion, i.e.:
\begin{equation}
    \mathcal{F}(p') = \left\{ q \in P(\{g_1,\dots,g_k\}) \ | \ \exists \rho \in P(\{1, \dots, |p'|\}) \colon F(q,\rho) =p' \right\}.
\end{equation}
\end{lemma}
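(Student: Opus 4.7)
The plan is to write down an explicit inverse to the map $\Phi\colon(p,\rho)\mapsto(F(p,\rho),p)$ and then check that both compositions give the identity; the content of the lemma is really that coarsenings are in bijection with (refinement, coarsening-recipe) pairs. Concretely, given $(p',q)$ with $q\in\mathcal F(p')$, I would define $\rho_\alpha\coloneqq\{\,i\in\{1,\dots,|q|\}:q_i\subseteq p'_\alpha\,\}$ for each block $p'_\alpha$ of $p'$, and set $\Psi(p',q)\coloneqq(q,\{\rho_\alpha\}_\alpha)$.

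The first step is to verify that $\Phi$ lands in the claimed target, namely that $q=p$ belongs to $\mathcal F(F(p,\rho))$. This is immediate from the very definition of $\mathcal F$: the partition $F(p,\rho)$ is a coarsening of $p$, so the choice $\rho$ itself witnesses $p\in\mathcal F(F(p,\rho))$.

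The second step is to show $\Psi$ is well-defined. By the hypothesis $q\in\mathcal F(p')$ there exists some $\tilde\rho\in P(\{1,\dots,|q|\})$ with $F(q,\tilde\rho)=p'$; in particular every block $q_i$ of $q$ is contained in exactly one block $p'_\alpha$ of $p'$ (since the $q_i$ are nonempty and pairwise disjoint and their union in each $\tilde\rho_\alpha$ equals $p'_\alpha$). Therefore the sets $\rho_\alpha$ defined above genuinely partition $\{1,\dots,|q|\}$, so $\{\rho_\alpha\}\in P(\{1,\dots,|q|\})$. The same reasoning shows $\rho_\alpha=\tilde\rho_\alpha$, so the recipe $\rho$ used to fuse $q$ into $p'$ is unique --- this uniqueness is the main (and really only) point that has to be argued carefully, and it rests on the disjointness of the blocks of $q$.

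Finally, I would check the two compositions. For $\Psi\circ\Phi$: starting from $(p,\rho)$ we get $p'=F(p,\rho)$ and then recover $\rho'_\alpha=\{i:p_i\subseteq\bigcup_{j\in\rho_\alpha}p_j\}=\rho_\alpha$ by disjointness of the $p_i$'s, so $\Psi(\Phi(p,\rho))=(p,\rho)$. For $\Phi\circ\Psi$: starting from $(p',q)$ we obtain $\Psi(p',q)=(q,\{\rho_\alpha\})$ with $\bigcup_{i\in\rho_\alpha}q_i=p'_\alpha$ by construction, hence $F(q,\{\rho_\alpha\})=p'$ and $\Phi(\Psi(p',q))=(p',q)$. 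This closes the argument; no combinatorial estimate is needed, only the observation that refining/coarsening a set partition is an invertible operation once both the fine and the coarse partition are recorded.
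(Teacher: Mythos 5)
Your proof is correct and takes essentially the same route as the paper: both establish the bijection by noting that $p\in\mathcal F(F(p,\rho))$, that the fusing recipe $\rho$ is uniquely recoverable from the pair (refinement, coarsening) once $\rho$ is restricted to $P(\{1,\dots,|p|\})$, and that surjectivity is immediate from the definition of $\mathcal F$. Your version is merely more explicit, writing out the inverse $\Psi$ and checking both compositions, where the paper asserts the uniqueness and invertibility more tersely.
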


\begin{proof}
    Given the pair $(p,\rho)$, we can construct the pair $(p',q)$ as $(F(p,\rho),p)$. Clearly, $p \in \mathcal{F}(F(p,\rho))$. Moreover, this pair is unique since $F(p,\rho) = F(p, \rho')$ implies $\rho = \rho' \in P(\{1,\dots,|p|\})$. Notice that this last statement would not hold in $P(\{1,\dots,k\})$.

    We easily invert this construction. Fix a pair $(p', q)$ with $q \in \mathcal{F}(p')$. Then, there exists some $\rho$ such that the pair $(q, \rho)$ yields $(p' = F(q,\rho), q)$.
\end{proof}

Therefore, instead of summing over $(p,\rho)$, we can sum over pairs $(p',q)$ with $p' \in P(\{g_1,\dots,g_k\})$ and $\rho \in \mathcal{F}(p')$. In particular, notice that $q \in \mathcal{F}(p')$ means that we can write $q = \{ \bar{q}_1, \dots, \bar{q}_{|p'|} \}$, where each $\bar{q}_\alpha$ is itself a partition of the set $p'_\alpha$, with $p' = \{ p'_1, \dots, p'_{|p'|} \}$. Performing this change of variables yields:
\begin{equation}
\sum_{p \in P(\{g\})} b(n,p) =\sum_{p \in P(\{g\})} \sum_{m_1, \dots, m_{|p|} } c_{n-\sum_{\alpha \in p} m_\alpha} \prod_{\alpha \in p} b_{m_\alpha} \,m_\alpha^{\sum_{g \in \alpha} (g-1)}  \sum_{q \in \mathcal{F}(p)}  \prod_{\alpha =1}^{|p|} \frac{m_\alpha!}{(m_\alpha - |q_\alpha|)!} .
\end{equation}

To prove proposition~\ref{prop:dist-solution}, we will show the equality between the coefficients of the two series:
\begin{equation}
    \sum_{p \in P(\{g\})} a(n,p) = \sum_{p \in P(\{g\})} b(n,p). 
\end{equation}
It is enough to manipulate the expression for $b$. Indeed, notice that for a fixed partition $p$ we have:
\begin{equation}
 \sum_{q \in \mathcal{F}(p)}  \prod_{\alpha =1}^{|p|} \frac{m_\alpha!}{(m_\alpha - |q_\alpha|)!} = \prod_{\alpha = 1}^{|p|} \sum_{q \in P(p_\alpha)} \frac{m_\alpha!}{(m_\alpha - |q|)!}.
\end{equation}
Another useful result is the following:
\begin{lemma}
The following identity holds true:
\begin{equation}
\sum_{q \in P(p)} \frac{m!}{(m-|q|)!} = m^{|p|}.
\end{equation}
\end{lemma}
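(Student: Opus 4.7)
The plan is to give a combinatorial bijection showing that both sides count the same set, namely the set of functions $f \colon p \to \{1, \dots, m\}$. The right-hand side $m^{|p|}$ manifestly counts such functions. For the left-hand side, I would associate to each function $f$ the partition $q(f) \in P(p)$ whose blocks are the non-empty fibers $f^{-1}(j)$, together with the injection $\phi_f$ from the blocks of $q(f)$ to $\{1,\dots,m\}$ sending each block to its common image. This correspondence is clearly bijective: given any $q \in P(p)$ and any injection $\phi\colon q \hookrightarrow \{1,\dots,m\}$, the function $f$ that sends every element of a block $b \in q$ to $\phi(b)$ recovers the pair $(q(f),\phi_f) = (q,\phi)$.

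The number of injections from a fixed $|q|$-element set into $\{1,\dots,m\}$ is the falling factorial $m(m-1)\cdots(m-|q|+1) = \frac{m!}{(m-|q|)!}$, so grouping the count by the underlying partition gives
\begin{equation}
m^{|p|} = \sum_{q \in P(p)} \#\{\text{injections } q \hookrightarrow \{1,\dots,m\}\} = \sum_{q \in P(p)} \frac{m!}{(m-|q|)!},
\end{equation}
which is the desired identity. Note that this is exactly the classical expansion $m^n = \sum_{k} S(n,k)\,(m)_k$ of an ordinary power in terms of falling factorials and Stirling numbers of the second kind, with $n=|p|$ and $S(n,k) = \#\{q \in P(p) : |q|=k\}$.

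No step here looks like a genuine obstacle; the only thing to be a little careful about is the empty-block convention. In the definition of a partition of $p$, every block is non-empty, so the falling factorial (and not $m^{|q|}$) is the right counting factor since distinct blocks must map to distinct elements of $\{1,\dots,m\}$. One could alternatively prove the identity algebraically by induction on $|p|$, but the bijective proof above makes both the identity and the appearance of the falling factorial transparent, and it is this counting interpretation that justifies its use back in the main combinatorial rearrangement of $b(n,p)$.
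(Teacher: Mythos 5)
Your proof is correct. The paper's own argument is shorter but relies on an external fact: it groups the partitions $q\in P(p)$ by their length $k$, observes that the number of length-$k$ partitions of a set of size $|p|$ is the Stirling number of the second kind $S(|p|,k)$, and then simply invokes the classical identity $\sum_{k} S(|p|,k)\,\frac{m!}{(m-k)!}=m^{|p|}$ as a ``fundamental property of the Stirling numbers.'' You instead prove that same identity from scratch by the standard double count: a function $f\colon p\to\{1,\dots,m\}$ is equivalent to the partition of $p$ into its nonempty fibers together with an injection of the blocks into $\{1,\dots,m\}$, and the injections contribute the falling factorial $\frac{m!}{(m-|q|)!}$. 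So the two proofs rest on the same underlying combinatorial fact; yours is self-contained and makes the appearance of the falling factorial (as opposed to $m^{|q|}$) transparent, while the paper's is a one-line reduction to a cited identity. Your remark about the non-emptiness of blocks being what forces the falling factorial is exactly the right point of care, and the bijection you describe is indeed a bijection, so there is no gap.
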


\begin{proof}
First of all, decompose $P(p)$ into sets of partitions of fixed length $k$, denoted by $P_k(p)$. Then:
\begin{equation}
\sum_{q \in P(p)} \frac{m!}{(m-|q|)!} = \sum_{k=0}^{|p|} \frac{m!}{(m-k)!}  \sum_{q \in P_k(p)}  1.
\end{equation}
Notice that the number of partitions of length $k$ of a set of order $|p|$ is equal to the Stirling numbers of second kind $S(|p|,k)$. Thus, the result follows by a fundamental property of the Stirling numbers:
\begin{equation}
\sum_{q \in P(p)} \frac{m!}{(m-|q|)!} = \sum_{k=0}^{|p|} \frac{m!}{(m-k)!} \, S(|p|,k) = m^{|p|}.
\end{equation}

\end{proof}

Using the above remarks, we get:
\begin{align}
\notag
\sum_{p \in P(\{g\})} b(n,p) &= \sum_{p \in P(\{g\})} \sum_{m_1, \dots, m_{|p|} } c_{n-\sum_{\alpha \in p} m_\alpha} \prod_{\alpha \in p} b_{m_\alpha} \,m_\alpha^{\sum_{g \in \alpha} g} \prod_{\alpha =1}^{|p|} m_\alpha^{-|p_\alpha|} \, m_\alpha^{|p_\alpha|} \\
&=  \sum_{p \in P(\{g\})} \sum_{m_1, \dots, m_{|p|} } c_{n-\sum_{\alpha \in p} m_\alpha} \prod_{\alpha \in p} b_{m_\alpha} \, m_\alpha^{\sum_{g \in \alpha} g}.
\end{align}
This concludes the proof of the Proposition \ref{prop:dist-solution}. Notice that this guarantees in particular the convergence of the right-hand side of~\eqref{eqn:partitionAbelianCondition} as long as the left-hand side converges.

\end{appendix}

\bibliography{top}

\providecommand{\href}[2]{#2}\begingroup\raggedright\begin{thebibliography}{10}

\bibitem{Coleman:1988cy}
S.R.~Coleman, \emph{{Black holes as red herrings: Topological fluctuations and the loss of quantum coherence}}, \href{https://doi.org/10.1016/0550-3213(88)90110-1}{\emph{Nucl. Phys. B} {\bfseries 307} (1988) 867}.

\bibitem{Banks:1988je}
T.~Banks, \emph{{Prolegomena to a Theory of Bifurcating Universes: A Nonlocal Solution to the Cosmological Constant Problem Or Little Lambda Goes Back to the Future}}, \href{https://doi.org/10.1016/0550-3213(88)90455-5}{\emph{Nucl. Phys. B} {\bfseries 309} (1988) 493}.

\bibitem{Giddings:1988wv}
S.B.~Giddings and A.~Strominger, \emph{{Baby Universes, Third Quantization and the Cosmological Constant}}, \href{https://doi.org/10.1016/0550-3213(89)90353-2}{\emph{Nucl. Phys. B} {\bfseries 321} (1989) 481}.

\bibitem{Fischler:1989ka}
W.~Fischler, I.R.~Klebanov, J.~Polchinski and L.~Susskind, \emph{{Quantum Mechanics of the Googolplexus}}, \href{https://doi.org/10.1016/0550-3213(89)90290-3}{\emph{Nucl. Phys. B} {\bfseries 327} (1989) 157}.

\bibitem{Marolf:2020xie}
D.~Marolf and H.~Maxfield, \emph{{Transcending the ensemble: baby universes, spacetime wormholes, and the order and disorder of black hole information}}, \href{https://doi.org/10.1007/JHEP08(2020)044}{\emph{JHEP} {\bfseries 08} (2020) 044} [\href{https://arxiv.org/abs/2002.08950}{{\ttfamily 2002.08950}}].

\bibitem{Balasubramanian:2020jhl}
V.~Balasubramanian, A.~Kar, S.F.~Ross and T.~Ugajin, \emph{{Spin structures and baby universes}}, \href{https://doi.org/10.1007/JHEP09(2020)192}{\emph{JHEP} {\bfseries 09} (2020) 192} [\href{https://arxiv.org/abs/2007.04333}{{\ttfamily 2007.04333}}].

\bibitem{Gardiner:2020vjp}
J.G.~Gardiner and S.~Megas, \emph{{2d TQFTs and baby universes}}, \href{https://doi.org/10.1007/JHEP10(2021)052}{\emph{JHEP} {\bfseries 10} (2021) 052} [\href{https://arxiv.org/abs/2011.06137}{{\ttfamily 2011.06137}}].

\bibitem{deMelloKoch:2021lqp}
R.~de~Mello~Koch, Y.-H.~He, G.~Kemp and S.~Ramgoolam, \emph{{Integrality, duality and finiteness in combinatoric topological strings}}, \href{https://doi.org/10.1007/JHEP01(2022)071}{\emph{JHEP} {\bfseries 01} (2022) 071} [\href{https://arxiv.org/abs/2106.05598}{{\ttfamily 2106.05598}}].

\bibitem{Banerjee:2022pmw}
A.~Banerjee and G.W.~Moore, \emph{{Comments on summing over bordisms in TQFT}}, \href{https://doi.org/10.1007/JHEP09(2022)171}{\emph{JHEP} {\bfseries 09} (2022) 171} [\href{https://arxiv.org/abs/2201.00903}{{\ttfamily 2201.00903}}].

\bibitem{Maloney:2020nni}
A.~Maloney and E.~Witten, \emph{{Averaging over Narain moduli space}}, \href{https://doi.org/10.1007/JHEP10(2020)187}{\emph{JHEP} {\bfseries 10} (2020) 187} [\href{https://arxiv.org/abs/2006.04855}{{\ttfamily 2006.04855}}].

\bibitem{Cotler:2020ugk}
J.~Cotler and K.~Jensen, \emph{{AdS$_{3}$ gravity and random CFT}}, \href{https://doi.org/10.1007/JHEP04(2021)033}{\emph{JHEP} {\bfseries 04} (2021) 033} [\href{https://arxiv.org/abs/2006.08648}{{\ttfamily 2006.08648}}].

\bibitem{Maxfield:2020ale}
H.~Maxfield and G.J.~Turiaci, \emph{{The path integral of 3D gravity near extremality; or, JT gravity with defects as a matrix integral}}, \href{https://doi.org/10.1007/JHEP01(2021)118}{\emph{JHEP} {\bfseries 01} (2021) 118} [\href{https://arxiv.org/abs/2006.11317}{{\ttfamily 2006.11317}}].

\bibitem{Barbar:2023ncl}
A.~Barbar, A.~Dymarsky and A.D.~Shapere, \emph{{Global Symmetries, Code Ensembles, and Sums over Geometries}}, \href{https://doi.org/10.1103/PhysRevLett.134.151603}{\emph{Phys. Rev. Lett.} {\bfseries 134} (2025) 151603} [\href{https://arxiv.org/abs/2310.13044}{{\ttfamily 2310.13044}}].

\bibitem{Aharony:2023zit}
O.~Aharony, A.~Dymarsky and A.D.~Shapere, \emph{{Holographic description of Narain CFTs and their code-based ensembles}}, \href{https://doi.org/10.1007/JHEP05(2024)343}{\emph{JHEP} {\bfseries 05} (2024) 343} [\href{https://arxiv.org/abs/2310.06012}{{\ttfamily 2310.06012}}].

\bibitem{Dymarsky:2024frx}
A.~Dymarsky and A.~Shapere, \emph{{TQFT gravity and ensemble holography}}, \href{https://doi.org/10.1007/JHEP02(2025)091}{\emph{JHEP} {\bfseries 02} (2025) 091} [\href{https://arxiv.org/abs/2405.20366}{{\ttfamily 2405.20366}}].

\bibitem{Jafferis:2024jkb}
D.L.~Jafferis, L.~Rozenberg and G.~Wong, \emph{{3d Gravity as a random ensemble}},  \href{https://arxiv.org/abs/2407.02649}{{\ttfamily 2407.02649}}.

\bibitem{Dymarsky:2025agh}
A.~Dymarsky, J.~Henriksson and B.~McPeak, \emph{{Holographic duality from Howe duality: Chern-Simons gravity as an ensemble of code CFTs}},  \href{https://arxiv.org/abs/2504.08724}{{\ttfamily 2504.08724}}.

\bibitem{contreras2015borromean}
E.~Contreras and K.~Habiro, \emph{Borromean surgery equivalence of spin 3-manifolds with boundary}, {\emph{ANNALI SCUOLA NORMALE SUPERIORE-CLASSE DI SCIENZE} (2015) 1271}.

\bibitem{Belov:2005ze}
D.~Belov and G.W.~Moore, \emph{{Classification of Abelian spin Chern-Simons theories}},  \href{https://arxiv.org/abs/hep-th/0505235}{{\ttfamily hep-th/0505235}}.

\bibitem{Kapustin:2010hk}
A.~Kapustin and N.~Saulina, \emph{{Topological boundary conditions in abelian Chern-Simons theory}}, \href{https://doi.org/10.1016/j.nuclphysb.2010.12.017}{\emph{Nucl. Phys. B} {\bfseries 845} (2011) 393} [\href{https://arxiv.org/abs/1008.0654}{{\ttfamily 1008.0654}}].

\bibitem{Kaidi:2021gbs}
J.~Kaidi, Z.~Komargodski, K.~Ohmori, S.~Seifnashri and S.-H.~Shao, \emph{{Higher central charges and topological boundaries in 2+1-dimensional TQFTs}}, \href{https://doi.org/10.21468/SciPostPhys.13.3.067}{\emph{SciPost Phys.} {\bfseries 13} (2022) 067} [\href{https://arxiv.org/abs/2107.13091}{{\ttfamily 2107.13091}}].

\bibitem{deloup1999linking}
F.~Deloup, \emph{Linking forms, reciprocity for gauss sums and invariants of 3-manifolds}, {\emph{Transactions of the American Mathematical Society} {\bfseries 351} (1999) 1895}.

\bibitem{deloup2001abelian}
F.~Deloup, \emph{On abelian quantum invariants of links in 3-manifolds}, {\emph{Mathematische Annalen} {\bfseries 319} (2001) 759}.

\bibitem{Witten:1988hf}
E.~Witten, \emph{{Quantum Field Theory and the Jones Polynomial}}, \href{https://doi.org/10.1007/BF01217730}{\emph{Commun. Math. Phys.} {\bfseries 121} (1989) 351}.

\bibitem{atiyah1990framings}
M.~Atiyah, \emph{On framings of 3-manifolds}, {\emph{Topology} {\bfseries 29} (1990) 1}.

\bibitem{Geiko:2022qjy}
R.~Geiko and G.W.~Moore, \emph{{When Does a Three-Dimensional Chern\textendash{}Simons\textendash{}Witten Theory Have a Time Reversal Symmetry?}}, \href{https://doi.org/10.1007/s00023-023-01303-3}{\emph{Annales Henri Poincare} {\bfseries 25} (2024) 673} [\href{https://arxiv.org/abs/2209.04519}{{\ttfamily 2209.04519}}].

\bibitem{wall1963quadratic}
C.T.C.~Wall, \emph{Quadratic forms on finite groups, and related topics}, {\emph{Topology} {\bfseries 2} (1963) 281}.

\bibitem{kawauchi1980algebraic}
A.~Kawauchi and S.~Kojima, \emph{Algebraic classification of linking pairings on 3-manifolds}, {\emph{Mathematische Annalen} {\bfseries 253} (1980) 29}.

\bibitem{ivic2012riemann}
A.~Ivic, \emph{The Riemann zeta-function: theory and applications}, Courier Corporation (2012).

\bibitem{miranda1984nondegenerate}
R.~Miranda, \emph{Nondegenerate symmetric bilinear forms on finite abelian 2-groups}, {\emph{Transactions of the American Mathematical Society} {\bfseries 284} (1984) 535}.

\bibitem{deloup2007reciprocity}
F.~Deloup and V.~Turaev, \emph{On reciprocity}, {\emph{Journal of Pure and Applied Algebra} {\bfseries 208} (2007) 153}.

\bibitem{Guo:2018vij}
M.~Guo, K.~Ohmori, P.~Putrov, Z.~Wan and J.~Wang, \emph{{Fermionic Finite-Group Gauge Theories and Interacting Symmetric/Crystalline Orders via Cobordisms}}, \href{https://doi.org/10.1007/s00220-019-03671-6}{\emph{Commun. Math. Phys.} {\bfseries 376} (2020) 1073} [\href{https://arxiv.org/abs/1812.11959}{{\ttfamily 1812.11959}}].

\bibitem{Dijkgraaf:1989pz}
R.~Dijkgraaf and E.~Witten, \emph{{Topological Gauge Theories and Group Cohomology}}, \href{https://doi.org/10.1007/BF02096988}{\emph{Commun. Math. Phys.} {\bfseries 129} (1990) 393}.

\bibitem{Freed:1991bn}
D.S.~Freed and F.~Quinn, \emph{{Chern-Simons theory with finite gauge group}}, \href{https://doi.org/10.1007/BF02096860}{\emph{Commun. Math. Phys.} {\bfseries 156} (1993) 435} [\href{https://arxiv.org/abs/hep-th/9111004}{{\ttfamily hep-th/9111004}}].

\bibitem{wakui1992dijkgraaf}
M.~Wakui, \emph{On dijkgraaf-witten invariant for 3-manifolds}, {\emph{Osaka Journal of Mathematics} {\bfseries 29} (1992) 675}.

\bibitem{Furlan:2024pvy}
M.~Furlan and P.~Putrov, \emph{{On finite group global and gauged $q$-form symmetries in TQFT}},  \href{https://arxiv.org/abs/2403.04677}{{\ttfamily 2403.04677}}.

\bibitem{Benini:2022hzx}
F.~Benini, C.~Copetti and L.~Di~Pietro, \emph{{Factorization and global symmetries in holography}}, \href{https://doi.org/10.21468/SciPostPhys.14.2.019}{\emph{SciPost Phys.} {\bfseries 14} (2023) 019} [\href{https://arxiv.org/abs/2203.09537}{{\ttfamily 2203.09537}}].

\end{thebibliography}\endgroup
\bibliographystyle{JHEP}

\end{document}